\documentclass[12pt]{article}

\usepackage{amsmath,amssymb,amsfonts,amsthm}
\usepackage{algorithm}
\usepackage{algorithmic}
\usepackage{booktabs}
\usepackage{array}
\usepackage{tabularx}
\usepackage{multirow}
\usepackage{graphicx}
\usepackage{url}
\usepackage{xcolor}
\usepackage{enumitem}
\usepackage{float}
\usepackage{placeins}
\usepackage{geometry}
\usepackage{hyperref}
\usepackage{adjustbox}

\newtheorem{Theorem}{Theorem}
\newtheorem{Lemma}{Lemma}
\newtheorem{Corollary}{Corollary}
\newtheorem{Proposition}{Proposition}
\newtheorem{Definition}{Definition}
\newtheorem{Remark}{Remark}
\newtheorem{Example}{Example}

\allowdisplaybreaks[2]

\newcommand{\EJ}{\mathrm{EJ}}
\newcommand{\ZZ}{\mathbb{Z}}
\newcommand{\Norm}{\mathrm{N}}
\newcommand{\calS}{\mathcal{S}}
\newcommand{\calD}{\mathcal{D}}
\newcommand{\calP}{\mathcal{P}}

\newcommand{\red}{\mathrm{red}}

\newcommand{\vtx}[2]{\langle #1,#2\rangle}
\newcommand{\Arc}[3]{P_{#1}[#2>#3]}
\newcommand{\StepPair}[2]{\langle #1\,;\,#2\rangle}

\title{Ratio-Independent Three-Cycle Decomposition with Optimal Ordered Local-Switch Cost in Six-Regular Non-Axis Eisenstein--Jacobi Networks}

\author{Bader Albader\\
\small Department of Computer Science, College of Science, Kuwait University, P.O. Box 5969, Safat 13060, Kuwait\\
\small \texttt{albader@cs.ku.edu.kw}}

\date{}

\begin{document}
\maketitle

\begin{abstract}
Six-regular simple Eisenstein--Jacobi (EJ) networks are degree-six quotient-lattice interconnection networks. This paper gives a ratio-independent decomposition of every six-regular simple non-axis EJ network into three edge-disjoint Hamiltonian cycles using a canonical ordered local-switch model based on unit-parallelogram exchanges. The admitted $d=1$ branch needs no switches; $d=2$ has optimal total cost four; and for $d=3$ and $d\ge4$ both modified factors attain the component-counting lower bound $d-1$. Factor-local switches commute, so chronological interleaving does not alter the final factors or cost within the model. Orbit normalization identifies the exact domain and excludes the unique normalized non-axis norm-three degeneration. For $d\ge4$, an equal-coordinate alternating lift removes reduced-ratio dependence from the fine diagonal coordinate. A block-chain invariant, exhaustive interior-template lemma, and parity-specific successor permutations certify the unused complement: rank advances by one modulo $4d-6$, and arc and connector bijections prove complete coverage. The certificate uses $O(d)$ seed records and expands to the full edge lists in $O(N)$ time. Deterministic symbolic and full-quotient audits, including a dictionary-free fine-incidence check for every $4\le d\le201$, are provided in the accompanying reproducibility package and are not proof premises.
\end{abstract}

\noindent\textbf{Keywords:} Eisenstein--Jacobi networks; edge-disjoint Hamiltonian cycles; Hamiltonian decomposition; Cayley graphs; local switches; quotient lattices; interconnection networks

\section{Introduction}

Hamiltonian decompositions are useful in interconnection networks because a spanning cycle supplies a simple ring for broadcasting, all-to-all communication, token circulation, diagnostics, and fault-tolerant routing. A degree-six network can contain at most three edge-disjoint Hamiltonian cycles (EDHCs), so a decomposition into three EDHCs is optimal: it covers every edge exactly once. Six-regular EJ networks are especially relevant because their hexagonal quotient geometry provides three symmetric routing directions in a compact algebraic model, making them a natural alternative to two-axis toroidal topologies for communication structures that exploit all available direction classes \cite{flahive2010,martinez2008ej}. The contribution below is graph-theoretic, but its compact certificate has a concrete systems interpretation: a controller can store or generate $O(d)$ switch-seed records, verify them locally, and expand them into the full $O(N)$ edge lists only when required. No runtime speedup over an existing implementation is claimed.

Let $\alpha=a+b\rho$, where $\rho=(1+i\sqrt3)/2$, and let $\EJ_\alpha$ be the quotient Cayley graph over $\ZZ[\rho]/(\alpha)$ with connection set $\{\pm1,\pm\rho,\pm\rho^2\}$. Every quotient has $N=\Norm(\alpha)=a^2+ab+b^2$ vertices. On the six-regular simple domain treated in this paper, the six signed unit residues are distinct, so the graph has $3N$ undirected edges. Section~\ref{sec:normalization} identifies the exact boundary: besides the separately excluded axis-associated family, the unique normalized non-axis degeneration is $\alpha=1+\rho$, whose underlying simple graph is $C_3$. Write $d=\gcd(a,b)$. On the admitted $d=1$ branch the three unit directions are Hamiltonian cycles; if $d>1$, the natural direction factors must be spliced.

The general non-coprime existence problem was solved previously through a rectangular representation with $d$ rows and $r=N/d$ columns \cite{hussain2015}. The present paper does not re-claim that result. Instead, it asks how a fixed $O(d)$ local switch skeleton can replace a coordinate-indexed rectangular schedule, attain the minimum total local-switch count, and certify the Hamiltonicity of the unused complement. The phrase \emph{reduced-ratio case analysis} below means dividing the normalized coprime pairs $(u,v)$ into separate arithmetic families before choosing the lift. Throughout this paper, \emph{ratio-independent} means that no such family classification or seed search is required: the exceptional $d=2$ seeds are evaluated directly from the input parameters, while the universal $d\ge4$ lift phases themselves are independent of $(u,v)$.

The main contributions are as follows.
\begin{itemize}[leftmargin=*]
 \item The ordered cost model is tied to the canonical direction factorization and the three unit-parallelogram exchanges; up to direction symmetry these are all elementary two-edge exchanges of this type. Switches assigned to different factor copies commute, so chronological interleaving loses no generality inside the model.
 \item The three natural direction factors are shown to have exactly $d$ components, and a component argument gives the lower bound $d-1$ intercomponent switches for a Hamiltonian factor.
 \item Explicit formulas cover every coprime reduced ratio for $d=2$ and $d=3$. Both branches have proved minimum total cost four; the improved $d=3$ construction uses two switches in each of the first two factors.
 \item For $d\ge4$, horizontal and vertical switch skeletons attain the lower bound, while anchor lock removes their only forced non-diagonal overlap.
 \item An equal-coordinate alternating lift makes the fine phase sequence $0,1,0,1,\ldots$ independently of $(u,v)$; separate even and odd seam formulas then certify every reduced ratio.
 \item A successor-permutation rank certificate proves that the complement is one Hamiltonian cycle: every diagonal arc and every available connector is used exactly once, and rank advances by one modulo $4d-6$.
 \item The resulting certificate has $O(d)$ seed-record size and expands to all three cycles in output-optimal $O(N)$ time. Deterministic audits are supplied as verification, not proof.
\end{itemize}

\section{Relation to the Rectangular Construction}

The quotient-ring topology of Gaussian and EJ networks was developed by Flahive and Bose \cite{flahive2010}, with related EJ constellation models given in \cite{martinez2008ej}. The closest prior EDHC result is due to Hussain, Bose, and Al-Dhelaan \cite{hussain2015}. They arrange vertices as $i\rho^2+k$, with $0\le i<d$ and $0\le k<r=N/d$, and exchange selected horizontal, vertical, and diagonal edges. Their construction is a valid general existence proof.

Recent EJ-specific work has developed independent spanning trees, panconnectivity algorithms, and completely independent spanning trees for routing, broadcasting, and fault tolerance \cite{hussain2022ist,awadh2023pan,hussain2024cist}. Edge-disjoint Hamiltonian structures also remain active in general Cayley graphs, balanced hypercubes, BCube networks, and Gaussian quotient networks \cite{yang2023cayley,cheng2024balanced,pai2026bcube,albader2026gaussian}. These studies confirm continuing interest in resilient EJ and algebraic-network structures. Within EJ networks, however, they do not provide the ratio-independent, minimum-cost local certificate studied here.

Among these works, the closest methodological analogue is the author's recent degree-four Gaussian-network construction \cite{albader2026gaussian}, which also uses local switches and a compact certificate. The present EJ result should therefore be viewed as an extension of that local-switch program, rather than as a new existence theorem. It is nevertheless not obtained by merely replacing Gaussian integers with Eisenstein--Jacobi integers: a degree-six decomposition must coordinate two modified direction factors before certifying a third unused complement. This creates overlap constraints between $H_1$ and $H_2$, multiple fine arcs above one diagonal residue, parity-dependent terminal seams, and a $4d-6$-state successor certificate. The anchor-lock mechanism, optimal exceptional $d=2,3$ branches, equal-coordinate ratio-cancelling lift, and parity-complete rank maps are the EJ-specific ingredients developed here.

Related decomposition and Gray-code constructions have been studied for Gaussian networks and Gaussian-integer tori \cite{albader2016,martinez2008gaussian}, $k$-ary cubes, hypercubes, and toroidal networks \cite{bae2003,bae2000,bae2004,latifi1993,jha2012}, De Bruijn networks \cite{rowley1991,rowley1993}, and mixed-radix Lee-metric codes \cite{anantha2007}. General context on Hamiltonian Cayley graphs and decompositions appears in \cite{chen1981,witte1984,bermond1989,alspach2008}; standard background on interconnection networks, number theory, and algebraic graph theory is provided by \cite{duato2003,grama2003,hardy1980,biggs1993,godsil2001,bondy2008}.

The present framework works directly in the Cayley lattice and targets $2(d-1)$ intercomponent switches for the first two factors. Its compact certificate contains $O(d)$ seed records, whereas the rectangular schedule is indexed over a coordinate of length $\Theta(r)=\Theta(N/d)$. These are different descriptive currencies: \cite{hussain2015} did not optimize or report the present local-switch count, and no claim is made here that one implementation executes faster or is simpler in every engineering setting. The precise comparison is that \cite{hussain2015} proves general existence through a coordinate-indexed rectangular schedule, while the present paper gives, on the admitted six-regular non-axis domain, an $O(d)$ local-seed certificate, proves the minimum value $d-1$ for each of the first two factors, and retains the unavoidable $O(N)$ cost when all cycle edges must be listed explicitly.

\begin{table}[H]
\centering
\caption{Certificate-description comparison with the rectangular construction. The example uses $\alpha=12+30\rho$, for which $d=6$, $r=N/d=234$, and $N=1404$. The methods use different primitives; this table does not compare measured running times or claim that \cite{hussain2015} optimized the present switch metric.}
\label{tab:method-comparison}
\small
\begin{adjustbox}{max width=\textwidth}
\begin{tabularx}{1.3\textwidth}{@{}p{0.18\textwidth}XXXX@{}}
\toprule
Method & Compact control description & Local-switch certificate & Example control scale & Full edge listing\\
\midrule
Rectangular construction \cite{hussain2015} & schedule indexed over $r=\Theta(N/d)$ columns & not formulated or minimized in the present switch metric & coordinate span $r=234$ & $\Theta(N)$ output\\
Present construction & $O(d)$ explicit local seeds & $2(d-1)$ total for $H_1,H_2$, with $d-1$ minimum for each & $2(d-1)=10$ switches & $O(N)$ time\\
\bottomrule
\end{tabularx}
\end{adjustbox}
\end{table}

The geometric comparison and the full positioning diagram are reproduced as Supplementary Figure~S1.

\begin{table}[H]
\centering
\caption{Core notation used in the universal construction. Exceptional $d=2,3$ notation is introduced locally in Section~\ref{sec:exceptional}.}
\label{tab:notation}
\footnotesize
\begin{tabularx}{0.8\textwidth}{@{}lX@{}}
\toprule
Symbol & Meaning\\
\midrule
$\alpha=a+b\rho$ & EJ generator\\
$N=a^2+ab+b^2$ & number of vertices\\
$d=\gcd(a,b)$ & number of cycles per unit direction\\
$r=N/d$ & order of each unit direction\\
$(u,v)=(a/d,b/d)$ & reduced coprime ratio\\
$s_i^{\red}=(\xi_i,\eta_i)$ & reduced switch seed\\
$s_i=s_i^{\red}+d(\pi_i,\kappa_i)$ & lifted seed\\
$z=x-y\pmod d$ & coarse diagonal-cycle label\\
$P_{z,\lambda}$ & actual diagonal arc segment\\
$\vtx{z}{t}$ & unique quotient vertex with fine coordinates $(z,t)$\\
$\Gamma(\Phi)$ & fine complement-incidence graph\\
\bottomrule
\end{tabularx}
\end{table}

\section{Orbit Normalization and Admitted Domain}\label{sec:normalization}

Let $\mathcal U=\{\pm1,\pm\rho,\pm\rho^2\}$ be the unit group of $\ZZ[\rho]$.

\begin{Definition}[Axis-associated generator]\label{def:axis}
A nonzero generator $\alpha\in\ZZ[\rho]$ is \emph{axis-associated} if $\alpha=n\varepsilon$ for some $n\in\ZZ\setminus\{0\}$ and $\varepsilon\in\mathcal U$. A generator that is not axis-associated is called \emph{non-axis}. A coordinate representative $a+b\rho$ is \emph{normalized} if $0\le a\le b$.
\end{Definition}

\begin{Lemma}[Dihedral orbit normalization]\label{lem:orbit-normalization}
Let $\alpha=a+b\rho\ne0$. Among the twelve elements
\begin{equation}
 \mathcal O(\alpha)=\{\varepsilon\alpha,\varepsilon\overline\alpha:\varepsilon\in\mathcal U\}
\label{eq:dihedral-orbit}
\end{equation}
there is a representative $A+B\rho$ with $0\le A\le B$. Moreover,
\begin{equation}
 A=0\quad\Longleftrightarrow\quad \alpha\text{ is axis-associated}
 \quad\Longleftrightarrow\quad ab(a+b)=0.
\label{eq:axis-characterization}
\end{equation}
Every map used in \eqref{eq:dihedral-orbit} induces a graph isomorphism from $\EJ_\alpha$ to the EJ graph generated by the transformed element. Consequently every non-axis generator has an isomorphic normalized representative with $0<A\le B$.
\end{Lemma}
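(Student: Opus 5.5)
The plan is to compute the action of the generators of the dihedral group explicitly in coordinates, then run a case analysis on the signs of $a$, $b$, $a+b$, and finally check that the graph maps are isomorphisms.

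\textbf{Coordinate formulas.} Using $\rho^2=\rho-1$, multiplication by $\rho$ sends
\[
a+b\rho \mapsto a\rho+b(\rho-1)=-b+(a+b)\rho,
\]
so in coordinates $(a,b)\mapsto(-b,\,a+b)$. This map has order six and $\rho^3=-1$. Conjugation uses $\overline\rho=1-\rho$, giving
\[
\overline{a+b\rho}=(a+b)-b\rho,
\]
so $(a,b)\mapsto(a+b,\,-b)$. Together these generate the twelve-element orbit \eqref{eq:dihedral-orbit}. The three quantities $a$, $b$, $c=-(a+b)$ sum to zero, and under these maps they are permuted up to a global sign. Multiplication by $\rho$ sends $(a,b,c)$ to $(-b,-c,-a)$, a cyclic shift with a sign flip. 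Conjugation sends $(a,b)$ to $(a+b,-b)=(-c,-b)$, a transposition-type action with sign. Hence the symmetric group on the triple, combined with negation, is realised, and this is the twelve-element dihedral action.

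\textbf{Existence of a normalized representative.} Since $a+b+c=0$, the signs of $a,b,c$ are not all equal unless $a=b=c=0$. So some two of them share a sign, possibly counting zeros, and the third has the opposite sign. Negating if necessary, we may assume two are nonnegative. Permuting, we may put these two in the first two coordinates in increasing order. I will verify this by a short case analysis on sign patterns, checking in each case that an explicit word in $\rho$ and conjugation gives $0\le A\le B$.

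\textbf{Axis characterization.} Note that $ab(a+b)=abc$ up to sign, so it is invariant under the action.
\begin{itemize}
\item If $A=0$, then $\varepsilon\alpha$ or $\varepsilon\overline\alpha$ equals $B\rho$. Since $\overline\rho=-\rho^2$ is a unit, $\alpha$ is a rational integer times a unit, i.e.\ axis-associated.
\item If $\alpha=n\varepsilon$, the unit multiples of an integer have coordinates $(n,0)$, $(0,n)$, $(-n,n)$ and their negatives, so $ab(a+b)=0$.
\item If $ab(a+b)=0$, then one of $a,b,c$ vanishes, and the normalized representative puts the zero in position $A$. The two nonnegative entries include the zero, which is the smallest.
\end{itemize}
This closes the cycle of equivalences in \eqref{eq:axis-characterization}.

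\textbf{Graph isomorphisms.} Unit multiplication $x\mapsto\varepsilon x$ maps the ideal $(\alpha)$ onto $(\varepsilon\alpha)=(\alpha)$, so it is an automorphism of $\EJ_\alpha$. It permutes the connection set $\mathcal U$, since that set is the unit group. Conjugation is a ring automorphism sending $(\alpha)$ to $(\overline\alpha)$, so it induces $\ZZ[\rho]/(\alpha)\cong\ZZ[\rho]/(\overline\alpha)$. It also preserves $\mathcal U$, since $\overline\rho=-\rho^2$. Consequently adjacency $x-y\in\mathcal U$ is preserved. The final sentence of the lemma then follows by combining these isomorphisms with normalization and the characterization $A\neq0$.

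\textbf{Main obstacle.} I expect the main obstacle to be the bookkeeping in the existence step: confirming that each sign pattern is reached by an element of the orbit. The triple viewpoint is meant to make this uniform rather than a list of six or more cases.
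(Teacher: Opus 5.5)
Your proof is correct, but it settles the two substantive claims by a different mechanism than the paper. Both arguments use the same coordinate maps $(a,b)\mapsto(-b,a+b)$ and $(a,b)\mapsto(a+b,-b)$, and both treat the graph isomorphisms the same way.

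For existence and the axis characterization, the paper argues geometrically. The dihedral group of order twelve cuts the plane into twelve closed $30^\circ$ chambers, and $\{A+B\rho:0\le A\le B\}$ is one of them. The three unit-axis lines are carried onto its edge $A=0$, while the other edge $A=B$ is a reflection bisector.

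You instead encode the action on the zero-sum triple $(a,b,-(a+b))$, where it becomes the full signed permutation group $S_3\times\{\pm1\}$. Existence is then a pigeonhole \emph{negate and sort} step. The characterization follows because the vanishing of $ab(a+b)=-abc$ is orbit-invariant, while a normalized triple $(A,B,-A-B)$ with $A>0$ has no zero entry.

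Your route turns the chamber facts that the paper reads off from Euclidean geometry into one-line algebraic checks. It also gives uniqueness of the normalized pair for free, since $A$ and $B$ are the smallest and middle values of $|a|,|b|,|a+b|$; the paper does not record this. The paper's route is shorter and makes the $30^\circ$ chamber picture explicit.

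One simplification: the case analysis over sign patterns that you promise is unnecessary. Multiplication by $\rho$ acts as a signed $3$-cycle, conjugation as a signed transposition, and $\rho^3=-1$ as pure negation. The generated group is therefore all of $S_3\times\{\pm1\}$, and negate-and-sort already produces $0\le A\le B$ with nothing further to verify.
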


\begin{proof}
In coefficient coordinates, multiplication by $\rho$ and conjugation are
\begin{equation}
 R(x,y)=(-y,x+y),\qquad C(x,y)=(x+y,-y).
\label{eq:orbit-matrices}
\end{equation}
Thus \eqref{eq:dihedral-orbit} is the finite set $\{R^j(a,b),R^jC(a,b):0\le j<6\}$. Both matrices are unimodular, so $\gcd(a,b)$ is invariant throughout the orbit. Geometrically, $R$ is rotation by $60^\circ$ and $C$ is reflection. Their dihedral action partitions the plane into twelve closed $30^\circ$ chambers, and $\mathcal K=\{A+B\rho:0\le A\le B\}$ is one such chamber. Hence every nonzero orbit meets $\mathcal K$.

The six unit-direction axes are exactly the three coefficient lines $a=0$, $b=0$, and $a+b=0$ with both orientations. Therefore $\alpha$ is a unit multiple of a nonzero integer exactly when $ab(a+b)=0$. These axis lines are mapped by the dihedral action to the boundary $A=0$ of $\mathcal K$; the other chamber boundary $A=B$ is a reflection bisector and is not a unit-direction axis. This proves \eqref{eq:axis-characterization} and shows that a non-axis orbit representative in $\mathcal K$ has $A>0$.

Finally, multiplication by a unit sends the ideal $(\alpha)$ to $(\varepsilon\alpha)$ and permutes $\{\pm1,\allowbreak\pm\rho,\allowbreak\pm\rho^2\}$; conjugation sends $(\alpha)$ to $(\overline\alpha)$ and also permutes this connection set. Both operations therefore induce quotient Cayley-graph isomorphisms and preserve Hamiltonian decompositions.
\end{proof}

\begin{Remark}[Deterministic normalization]\label{rem:det-normalization}
Algorithm~\ref{alg:certified} evaluates the twelve explicit pairs in \eqref{eq:orbit-matrices}, selects any pair with $0\le A\le B$ (lexicographic tie-breaking may be used), and first tests whether $A=0$. Lemma~\ref{lem:orbit-normalization} proves that this constant-size search always terminates and that $A=0$ is an exact, checkable test for the excluded axis-associated class. Proposition~\ref{prop:nonaxis-degenerate} supplies the second exact test, $(A,B)=(1,1)$, for the unique non-axis quotient that is not six-regular simple.
\end{Remark}

\begin{Proposition}[Exact axis boundary]\label{prop:axis-boundary}
If $\alpha$ is axis-associated, its normalized representative is $n\rho$ or, after a unit rotation, the integer generator $n$ with $n\ge1$. The additive quotient is $\ZZ_n^2$. For $n=1$ the quotient has one vertex, and for $n=2$ the six signed unit steps collapse to three distinct neighbors in the underlying simple graph. For $n\ge3$ the quotient is the six-regular triangular torus
\begin{equation}
 \operatorname{Cay}\bigl(\ZZ_n^2,\{\pm H,\pm V,\pm D\}\bigr).
\label{eq:axis-torus}
\end{equation}
Its reduced pair is $(u,v)=(0,1)$ and hence $Q=1$, so the distinct phase representatives $1$ and $2$ required by the alternating lift coincide modulo $Q$. Thus the non-axis construction cannot be specialized to the axis family by a limiting substitution.
\end{Proposition}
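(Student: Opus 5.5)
I will handle the five assertions of Proposition~\ref{prop:axis-boundary} one at a time, working in the coefficient coordinates of Lemma~\ref{lem:orbit-normalization}.

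\emph{Normal form.} By Definition~\ref{def:axis}, $\alpha=n\varepsilon$. By Lemma~\ref{lem:orbit-normalization} its normalized representative $A+B\rho$ has $A=0$, so it equals $B\rho$ with $B\ge1$. Since $\rho$ is a unit, one further unit rotation $\rho^{-1}$ gives the integer generator $n=B$. Both rotations are graph isomorphisms, again by Lemma~\ref{lem:orbit-normalization}.

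\emph{Quotient group and graph.} For the integer generator, $(n)=n\ZZ[\rho]=n\ZZ\oplus n\ZZ\rho$. Hence the additive quotient is $\ZZ[\rho]/(n)\cong\ZZ_n^2$ via $x+y\rho\mapsto(x\bmod n,\,y\bmod n)$. Under this map the connection set becomes $\pm H=\pm(1,0)$, $\pm V=\pm(0,1)$ and, using $\rho^2=\rho-1$, $\pm D=\pm(-1,1)$. For the small cases I check when these six residues are distinct modulo $n$:
\begin{itemize}
\item For $n=1$ the group is trivial, so there is one vertex.
\item For $n=2$ we have $-1\equiv1$, so each step equals its negative. The six signed steps therefore give exactly the three distinct nonzero elements $H$, $V$, $D$ of $\ZZ_2^2$, and the underlying simple graph is $K_4$, which is $3$-regular.
\item For $n\ge3$ the six vectors are pairwise distinct and nonzero. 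A step is never its own negative, since each has a coordinate $\pm1\not\equiv0$ and $2\not\equiv0$. Two different directions cannot coincide, because their coordinate patterns differ by a $1$ or a $2$ in some entry. So the graph is the six-regular Cayley graph \eqref{eq:axis-torus}.
\end{itemize}

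\emph{Reduced pair and phase collision.} For $n\rho$ we have $(a,b)=(0,n)$, so $d=\gcd(0,n)=n$ and $(u,v)=(0,1)$. I need the definition of $Q$ from the later lift section. I expect $Q$ is the modulus attached to the reduced ratio, something like $u^2+uv+v^2$ or $u+v$. Either candidate evaluates to $1$ at $(0,1)$, and I will state whichever formula the paper fixes. Then $1\equiv2\pmod 1$, so the two distinct phase representatives the alternating lift requires collapse to one class. The conclusion that the non-axis construction cannot be specialized to the axis family is then a remark: the lift's defining distinctness hypothesis fails, not merely its proof.

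\emph{Main obstacle.} The distinctness check for $n\ge3$ is routine. The only real risk is matching the forward-referenced definition of $Q$ and the precise form of the alternating-lift requirement, and I would cite that definition verbatim.
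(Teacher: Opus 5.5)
Your proposal is correct and follows essentially the same route as the paper's proof: reduce to the integer generator via Lemma~\ref{lem:orbit-normalization}, read off $\ZZ[\rho]/(n)\cong\ZZ_n^2$, reduce the six signed steps for $n=1$, $n=2$ and $n\ge3$, and evaluate $Q$ at $(u,v)=(0,1)$; your $n\ge3$ distinctness check is more explicit than the paper's one-line assertion. Two cosmetic points: the paper already fixes $Q=u^2+uv+v^2$ in Lemma~\ref{lem:direction-orders}, so no hedge is needed, and your $\{1,\rho\}$-coordinates give the generator $(-1,1)$ where \eqref{eq:axis-torus} uses the $\{1,\tau\}$-basis $D=(1,1)$, which is harmless because the unimodular change of basis $(x,y)_\rho=(x+y,y)_\tau$ carries one generating set onto the other.
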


\begin{proof}
The orbit characterization in Lemma~\ref{lem:orbit-normalization} reduces an axis-associated generator to a unit multiple of $n$. Reduction modulo $(n)$ gives two coefficient coordinates modulo $n$, hence $\ZZ_n^2$. The claims for $n=1,2$ follow by reducing the six signed steps; for $n\ge3$ they remain six distinct nonzero group elements and generate the triangular torus. Normalization of the integer generator has one zero reduced coordinate, which gives $Q=0^2+0\cdot1+1^2=1$ and collapses all phase residues.
\end{proof}

\begin{Proposition}[Exact non-axis simple-degree boundary]\label{prop:nonaxis-degenerate}
Let $A+B\rho$ be a normalized non-axis generator, so $0<A\le B$. The six signed unit residues $\{\pm1,\pm\rho,\pm\rho^2\}$ are pairwise distinct in $\ZZ[\rho]/(A+B\rho)$ if and only if $(A,B)\ne(1,1)$, equivalently $\Norm(A+B\rho)\ge7$. For $(A,B)=(1,1)$ the quotient has order three and its underlying simple graph is $C_3$.
\end{Proposition}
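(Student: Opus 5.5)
The plan is to reduce pairwise distinctness of the six signed units to a divisibility question in $\ZZ[\rho]$. Two residues $\varepsilon_1,\varepsilon_2\in\mathcal U$ coincide modulo $(\alpha)$, with $\alpha=A+B\rho$, exactly when $\alpha\mid\varepsilon_1-\varepsilon_2$. Since $\varepsilon_1-\varepsilon_2=\varepsilon_1(1-\varepsilon_1^{-1}\varepsilon_2)$, it suffices to examine the five nonzero differences $1-\varepsilon$ with $\varepsilon\in\mathcal U\setminus\{1\}$, up to units. Their norms are the squared distances from $1$ to the other sixth roots of unity:
\begin{itemize}
 \item $1-\rho$ and $1-\overline\rho$ have norm $1$;
 \item $1-\rho^2$ and $1-\rho^{-2}$ have norm $3$ (they are associates of $1+\rho$);
 \item $1-(-1)=2$ has norm $4$.
\end{itemize}
Hence a collision forces $\Norm(\alpha)\mid\Norm(\varepsilon_1-\varepsilon_2)$, so $\Norm(\alpha)\in\{1,3,4\}$, or more precisely $\Norm(\alpha)\in\{1,2,3,4\}$ before using that $2$ is not a norm.

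Next I would enumerate normalized non-axis generators of small norm. With $0<A\le B$ we have $\Norm=A^2+AB+B^2$. The value is $3$ for $(1,1)$, $7$ for $(1,2)$, and it is at least $7$ whenever $B\ge2$. So the only normalized non-axis generator with $\Norm\le4$ is $(1,1)$, and every other one has $\Norm\ge7$. This gives the equivalence $(A,B)\ne(1,1)\Leftrightarrow\Norm\ge7$ on this domain, and also distinctness for all $(A,B)\ne(1,1)$.

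Conversely, for $(A,B)=(1,1)$ I would note that $1+\rho$ is an associate of $1-\rho^2$, since $1-\rho^2=1-(\rho-1)=2-\rho$ and $2-\rho$ has norm $3$. So $1\equiv\rho^2$ and the residues do collide. Concretely, $\ZZ[\rho]/(1+\rho)$ has $\Norm=3$ elements. Since $\rho\equiv-1$, we get $\rho^2=\rho-1\equiv-2\equiv1$ (because $3\equiv0$). So the signed units reduce to $\pm1$ in $\ZZ_3$. The Cayley graph on $\ZZ_3$ with connection set $\{\pm1\}$ is $C_3$, with each edge repeated as a multi-edge.

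The main point requiring care is justifying that $\alpha\mid\beta$ with $\beta\neq0$ implies $\Norm(\alpha)\le\Norm(\beta)$, together with the norm-$4$ case $\beta=2$. Here $\Norm(\alpha)=4$ would need $A^2+AB+B^2=4$ with $0<A\le B$, which has no solution. This follows from multiplicativity of $\Norm$ and the fact that $\Norm$ takes positive integer values on nonzero elements, so the whole argument is a finite check.
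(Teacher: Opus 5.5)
Your proposal is correct and follows essentially the same route as the paper: a collision forces $A+B\rho$ to divide a difference of units of norm $1$, $3$, or $4$, and the normalized chamber $0<A\le B$ contains only $(1,1)$ below norm $7$. For $(1,1)$ you use the same explicit reduction $\rho\equiv-1$, $\rho^2\equiv1$ to obtain $C_3$; the one cosmetic point is that ``$2-\rho$ has norm $3$'' does not by itself prove associateness, but the factorization $1-\rho^2=(1-\rho)(1+\rho)$, with $1-\rho=-\rho^2$ a unit, gives it immediately, and your concrete computation already settles the converse anyway.
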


\begin{proof}
If two distinct signed units $\varepsilon,\delta\in\mathcal U$ have the same residue, then $A+B\rho$ divides $\varepsilon-\delta$. A direct check of the six units gives
\[
 \Norm(\varepsilon-\delta)\in\{1,3,4\}.
\]
Hence $\Norm(A+B\rho)$ divides one of $1,3,4$. Since $0<A\le B$, one has $\Norm(A+B\rho)=A^2+AB+B^2\ge3$. The equation $A^2+AB+B^2=3$ has the unique positive normalized solution $(A,B)=(1,1)$, while the value $4$ has no positive normalized solution. Thus every normalized non-axis pair other than $(1,1)$ has six distinct signed unit residues; none is zero because a nonunit generator cannot divide a unit.

For $A=B=1$, the relation $1+\rho\equiv0$ gives $\rho\equiv-1$ and $\rho^2=\rho-1\equiv1$ in the quotient. The three positive directions therefore collapse to the single undirected direction $\{\pm1\}$ on a group of order three, so the underlying simple graph is $C_3$. Finally, if $(A,B)\ne(1,1)$ then $A^2+AB+B^2>3$; the next possible value in the positive normalized chamber is at least $7$, proving the equivalent norm test.
\end{proof}

\begin{Remark}[Why the theorem is stated for six-regular non-axis networks]\label{rem:axis-scope}
Propositions~\ref{prop:axis-boundary} and \ref{prop:nonaxis-degenerate} give the complete excluded boundary. In the axis-associated family, $n=1,2$ are not six-regular simple EJ networks and $n\ge3$ is a separate triangular-torus family whose phase space has only one residue. Within the non-axis chamber, the sole nonsix-regular simple quotient is the norm-three case $1+\rho$, whose underlying simple graph is $C_3$. A local minimum-switch treatment of the axis family is not claimed, and the collapsed norm-three graph cannot contain three edge-disjoint Hamiltonian cycles. Accordingly, the title, abstract, theorem, and algorithm state the exact six-regular non-axis domain rather than hiding either boundary as an exception.
\end{Remark}

\section{Coordinates and Direction Factors}

Use the basis $\{1,\tau\}$, where $\tau=\rho-1=\rho^2$ and $\rho=1+\tau$. A vertex is written as $(x,y)=x+y\tau$, and the three positive direction steps are $H=(1,0)$, $V=(0,1)$, and $D=(1,1)$.

\begin{figure}[H]
 \centering
 \includegraphics[width=0.6\linewidth]{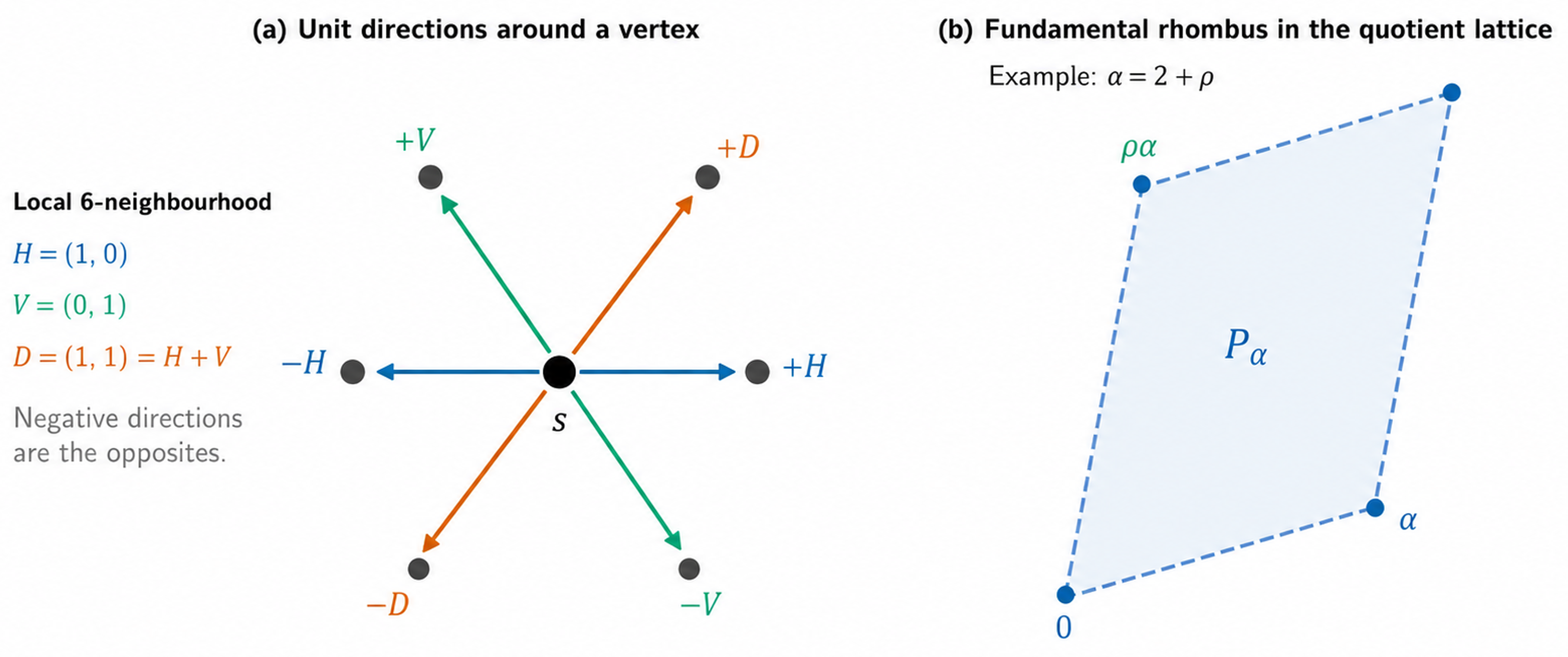}
 \caption{The three EJ directions in the $\{1,\tau\}$ coordinate system and the fundamental quotient rhombus.}
 \label{fig:directions}
\end{figure}

\begin{Lemma}[Orders and component labels]\label{lem:direction-orders}
Let $a=du$, $b=dv$, and $\gcd(u,v)=1$. In the $\{1,\tau\}$ coordinates, each of $H,V,D$ has additive order
\begin{equation}
 r=d(u^2+uv+v^2)=N/d
\end{equation}
in $\ZZ[\rho]/(\alpha)$. Consequently, every natural direction factor consists of exactly $d$ cycles of length $r$. Their component labels can be taken as $H:y\pmod d$, $V:x\pmod d$, and $D:x-y\pmod d$.
\end{Lemma}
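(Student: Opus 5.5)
The plan is to reduce everything to one computation, the additive order of $1$ in $\ZZ[\rho]/(\alpha)$, and then to identify the cosets of each cyclic direction subgroup using a well-defined homomorphism onto $\ZZ_d$.

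\emph{Step 1 (order of $H$).} Since $H=(1,0)$ is the residue of $1$, its order is the least $n\ge1$ with $\alpha\mid n$ in $\ZZ[\rho]$. Use $n/\alpha=n\overline\alpha/\Norm(\alpha)$ together with $\overline\alpha=a+b\overline\rho=(a+b)-b\rho$. Then $\alpha\mid n$ holds if and only if $N$ divides both $n(a+b)$ and $nb$. That is equivalent to $N\mid n\gcd(a+b,b)=nd$. Because $N=d^2(u^2+uv+v^2)$, this is the condition $r\mid n$ with $r=N/d=d(u^2+uv+v^2)$, so the order of $H$ is exactly $r$.

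\emph{Step 2 (orders of $V$ and $D$).} We have $V=(0,1)=\tau=\rho^2$ and $D=(1,1)=1+\tau=\rho$, so both are unit multiples of $1$. Multiplication by a unit is an additive automorphism of the quotient, so it preserves additive orders, and $V$ and $D$ also have order $r$. Each direction therefore generates a cyclic subgroup of order $r$ and index $N/r=d$. Its natural factor, the Cayley graph on that single generator, is the disjoint union of the $d$ coset cycles, each of length $r$. Here $r\ge3$ on the admitted domain, since $r\ge\Norm(\alpha)/d$ and the six residues are distinct by Proposition~\ref{prop:nonaxis-degenerate}, so these cycles are genuine.

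\emph{Step 3 (component labels).} I need to show that $y\bmod d$, $x\bmod d$ and $x-y\bmod d$ are well defined on the quotient. As a $\ZZ$-module, the ideal $(\alpha)$ is spanned by $\alpha$ and $\alpha\rho$. In $\{1,\tau\}$ coordinates these are $\alpha=(a+b)+b\tau$ and $\alpha\rho=(a+b\rho)(1+\tau)$. Expanding with $\tau^2=-1-\tau$ gives $\alpha\rho=-b+(a+b)\tau$. All coordinates of both spanning vectors are divisible by $d$. Hence each of the three linear forms induces a surjective homomorphism $\ZZ[\rho]/(\alpha)\to\ZZ_d$. Each form vanishes on its own direction: $y$ on $H$, $x$ on $V$, and $x-y$ on $D$. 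So the kernel contains the corresponding order-$r$ subgroup. The kernel has index $d$, and so does that subgroup, so they are equal. The label therefore distinguishes exactly the $d$ cycles.

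The only delicate point is the divisibility bookkeeping in Step 1, namely getting $\gcd(a+b,b)=d$ and the norm factorization right. The rest is routine coset counting.
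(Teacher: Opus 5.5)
Your proof is correct, but it reaches the orders by a different route from the paper. The paper writes the ideal lattice in $\{1,\tau\}$ coordinates with basis $\alpha=d(u+v,v)$ and $\tau\alpha=d(-v,u)$. It then solves the integer system $nX=m\alpha+k\tau\alpha$ separately for each of $X=H,V,D$, using $\gcd(u,v)=1$ to parametrize $(m,k)$ and read off $n=d(u^2+uv+v^2)t$. You instead compute only the order of $1$, via the conjugate criterion ($\alpha\mid n$ iff $N\mid n(a+b)$ and $N\mid nb$, iff $N\mid nd$). You then transfer that order to $V=\rho^2$ and $D=\rho$, because multiplication by a unit is an additive automorphism of $\mathbb{Z}[\rho]/(\alpha)$.

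Your version explains why the three orders coincide: the directions are associates of $1$. It also needs one gcd computation instead of three Diophantine systems. The paper's version is more hands-on, and it produces the explicit lattice basis that it reuses for the well-definedness of the labels and later for the fine coordinates.

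For the labels, the two arguments are essentially the same: lattice generators whose coordinates are all divisible by $d$, plus a counting step. Your kernel-and-index formulation states the surjectivity onto $\mathbb{Z}_d$ and the identification of cosets with label values more explicitly than the paper's one-line pigeonhole remark.

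Two small slips, neither affecting validity. First, $\alpha\rho=a+(a+b)\tau$, not $-b+(a+b)\tau$; the paper works with $\tau\alpha=-b+a\tau$ instead. Since you only use divisibility of the coordinates by $d$, nothing changes. Second, your justification of $r\ge3$ is muddled: $r=N/d$ is an equality and plays no role there. What you need is $H\neq0$ and $2H\neq0$, which follows from the six signed residues being distinct and nonzero. Alternatively it follows directly from $u^2+uv+v^2\ge3$ when $0<u\le v$.
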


\begin{proof}
In the $\{1,\tau\}$ basis, $\alpha=d(u+v,v)$ and $\tau\alpha=d(-v,u)$, so these two vectors generate the quotient lattice. Let $Q=u^2+uv+v^2$.

For $H=(1,0)$, an equality $nH=m\alpha+k\tau\alpha$ gives $mv+ku=0$. Hence $m=ut$, $k=-vt$, and $n=dQt$. Thus the order of $H$ is $dQ$. For $V=(0,1)$, the first coordinate gives $m(u+v)=kv$, so $m=vt$, $k=(u+v)t$, again yielding $n=dQt$. For $D=(1,1)$, equality of the two coordinates gives $mu=k(u+v)$, so $m=(u+v)t$, $k=ut$, and again $n=dQt$. Therefore all three directions have order $r=dQ=N/d$, and each 2-factor has $N/r=d$ components.

The displayed residues are preserved by the corresponding step and are identified modulo $d$ by the quotient lattice. Since there are exactly $d$ components, these residues label them completely.
\end{proof}

\begin{Corollary}[Admitted coprime branch]\label{cor:d1}
Assume $d=1$ and $\EJ_\alpha$ is six-regular simple. Then each natural direction factor is a single Hamiltonian cycle of length $N$; hence the three direction factors already form an edge decomposition into three EDHCs.
\end{Corollary}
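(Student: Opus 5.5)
The corollary is a direct specialization of Lemma~\ref{lem:direction-orders} to $d=1$, combined with the six-regular simplicity hypothesis to guarantee that the three factors are genuinely edge-disjoint 2-factors.

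\textbf{Step 1: each factor is a single $N$-cycle.} With $d=1$, Lemma~\ref{lem:direction-orders} gives each of $H,V,D$ additive order $r=N/d=N$. Hence the orbit $x,x+H,x+2H,\dots$ visits all $N$ vertices before returning. The same holds for $V$ and $D$. So each natural direction factor has exactly $d=1$ component of length $N$.

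\textbf{Step 2: each factor is a genuine Hamiltonian cycle.} We need the closed walk to be a simple cycle in the underlying simple graph. By Proposition~\ref{prop:nonaxis-degenerate}, and by the admitted-domain hypothesis, the six signed residues $\pm H,\pm V,\pm D$ are pairwise distinct and nonzero. Since $+H\neq -H$, the edges $\{x,x+H\}$ and $\{x+H,x+2H\}$ are distinct. Because $N\ge 7>2$, the closed orbit of length $N$ is a simple cycle through all vertices, so each factor is Hamiltonian.

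\textbf{Step 3: the three cycles partition the edge set.} Every edge of $\EJ_\alpha$ has the form $\{x,x+s\}$ with $s\in\{\pm H,\pm V,\pm D\}$, so it lies in at least one factor. Suppose an edge lay in two factors. Then $\{x,x+s\}=\{y,y+t\}$ with $s$ and $t$ from different direction classes. This forces either $s=t$, or $x=y+t$ and $y=x+s$, which gives $s=-t$. Both are impossible by pairwise distinctness of the six signed residues. So the factors are edge-disjoint.

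\textbf{Conclusion and count check.} Each cycle has $N$ edges, for $3N$ in total, matching the $3N$ edges of the six-regular simple graph. The three direction factors therefore form an edge decomposition into three EDHCs.

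The only point needing care is Step 2: ruling out a degenerate 2-cycle or a repeated edge. This is exactly where six-regular simplicity enters, since it excludes $1+\rho$ and the axis cases with $n\le 2$.
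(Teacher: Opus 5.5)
Your proof is correct and follows the paper's route. Lemma~\ref{lem:direction-orders} with $d=1$ gives a single length-$N$ component in each direction, and the pairwise distinctness of the six signed unit residues (from Proposition~\ref{prop:nonaxis-degenerate}, or directly from the simplicity hypothesis) makes the three direction edge sets pairwise disjoint, so together they cover all $3N$ edges. Your explicit check that a shared edge would force $s=\pm t$ spells out the step the paper compresses into ``distinct, therefore pairwise disjoint.''
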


\begin{proof}
Lemma~\ref{lem:direction-orders} gives direction order $r=N/d=N$ and exactly $N/r=1$ component in each direction. Proposition~\ref{prop:nonaxis-degenerate} ensures that the three undirected direction edge sets are distinct on the admitted non-axis domain. They are therefore pairwise disjoint and together form the complete degree-six edge set.
\end{proof}

\begin{Remark}[Basis conversion]
The coordinate $D=(1,1)$ is a $\{1,\tau\}$-basis statement. Since $(x,y)_\tau=(x-y,y)_\rho$, one has $D=(1,1)_\tau=(0,1)_\rho=\rho$. Using $(1,1)$ in the $\rho$ basis would instead mean $1+\rho$, a different group element.
\end{Remark}

\section{Local Switch Calculus}

Each switch is supported on the unit parallelogram spanned by the direction being removed and the direction being inserted. The three switch types therefore use different parallelograms.

\begin{Definition}[Valid local switches]\label{def:switches}
At a seed vertex $s$:
\begin{itemize}[leftmargin=*]
 \item $H\leftarrow V$ uses $\{s,s+H,s+V,s+H+V\}$, removes
 $[s,s+H]$ and $[s+V,s+H+V]$, and inserts
 $[s,s+V]$ and $[s+H,s+H+V]$.
 \item $H\leftarrow D$ uses $\{s,s+H,s+D,s+H+D\}$, removes
 $[s,s+H]$ and $[s+D,s+H+D]$, and inserts
 $[s,s+D]$ and $[s+H,s+H+D]$.
 \item $V\leftarrow D$ uses $\{s,s+V,s+D,s+V+D\}$, removes
 $[s,s+V]$ and $[s+D,s+V+D]$, and inserts
 $[s,s+D]$ and $[s+V,s+V+D]$.
\end{itemize}
\end{Definition}

\begin{figure}[H]
 \centering
 \includegraphics[width=0.65\linewidth]{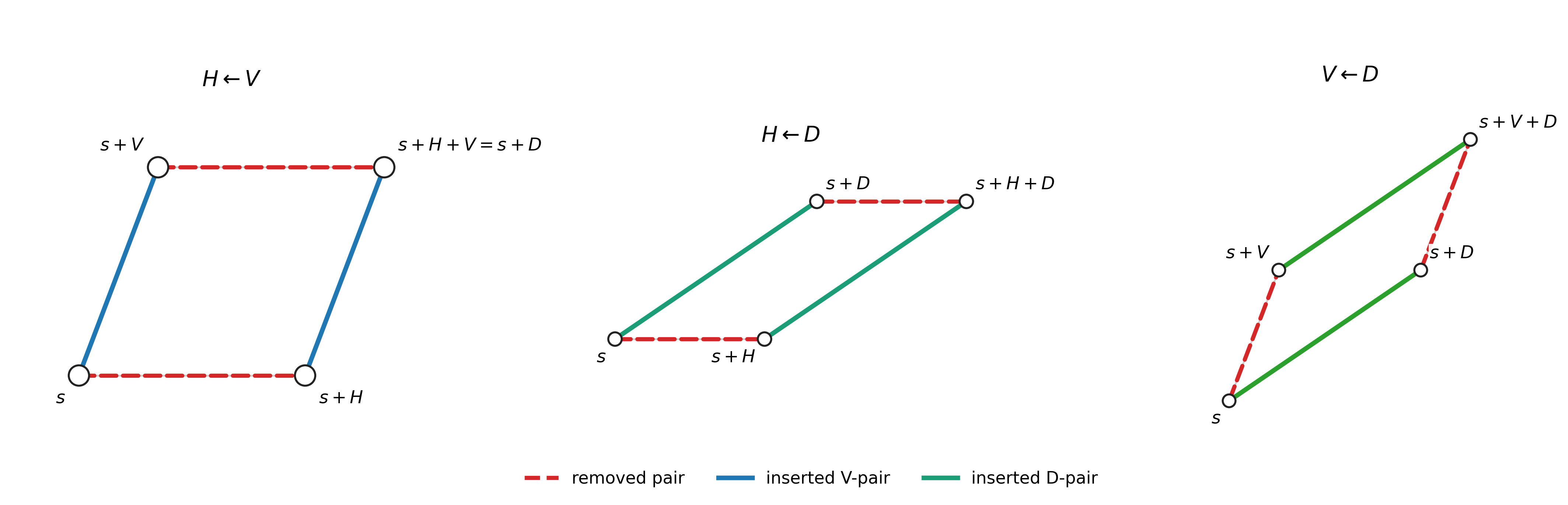}
 \caption{The three geometrically valid switch parallelograms. In particular, the $H\leftarrow D$ and $V\leftarrow D$ switches are not drawn as crossing diagonals of the $H$--$V$ rhombus.}
 \label{fig:switches}
\end{figure}

\begin{Lemma}[Component action]\label{lem:component-action}
An $H\leftarrow V$ or $H\leftarrow D$ switch at $(x,y)$ joins horizontal labels $y$ and $y+1$. A $V\leftarrow D$ switch at $(x,y)$ joins vertical labels $x$ and $x+1$. Every switch preserves degree two.
\end{Lemma}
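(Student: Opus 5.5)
The plan is to verify the lemma switch type by switch type. I will use the component labels from Lemma~\ref{lem:direction-orders}: $H$-cycles are labelled by $y\pmod d$ and $V$-cycles by $x\pmod d$. The argument has two parts. First, a degree count showing that each switch is a valid $2$-factor modification. Second, a check of which labelled components the two removed edges lie on.

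\emph{Degree preservation.} Consider an $H\leftarrow V$ switch at $s$. It removes $[s,s+H]$ and $[s+V,s+H+V]$ and inserts $[s,s+V]$ and $[s+H,s+H+V]$. Each of the four vertices $s,\,s+H,\,s+V,\,s+H+V$ loses exactly one incident edge and gains exactly one, so every degree is unchanged. The same count applies verbatim to the $H\leftarrow D$ and $V\leftarrow D$ parallelograms, since Definition~\ref{def:switches} has the same combinatorial shape in each case.

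One caveat must be handled: the four corner vertices must be distinct in the quotient, and no inserted edge may coincide with an edge already in the factor. I will invoke Proposition~\ref{prop:nonaxis-degenerate}. On the admitted domain the six signed units are distinct residues, so $H,V,D$ and their negatives are distinct and nonzero. The remaining possible coincidence among corners is $H+V\equiv 0$, equivalently $H\equiv -V$. This is excluded, and the analogous coincidences for the other two parallelograms are excluded in the same way. The inserted edge $[s,s+V]$ is a $V$-edge, so it does not lie in the current $H$-factor copy as an $H$-edge; this follows from the distinctness of the undirected direction classes. I expect this distinctness check to be the only step needing care, and it is where I would spend the verification effort.

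\emph{Component joining.} In an $H\leftarrow V$ switch at $(x,y)$, the removed edge $[s,s+H]$ lies on the horizontal cycle with label $y$. The other removed edge $[s+V,s+H+V]$ starts at $(x,y+1)$, so it lies on the cycle with label $y+1$. In an $H\leftarrow D$ switch, the second removed edge starts at $s+D=(x+1,y+1)$, again label $y+1$.

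When $d>1$, labels $y$ and $y+1$ differ modulo $d$, so the two removed edges lie on different cycles $C_y$ and $C_{y+1}$. Deleting one edge from each leaves two paths:
\begin{itemize}
\item one from $s$ to $s+H$;
\item one from $s+V$ (or $s+D$) to $s+H+V$ (or $s+H+D$).
\end{itemize}
The two inserted edges connect the endpoints crosswise, giving a single cycle through the vertex set of $C_y\cup C_{y+1}$. Hence the two labels are joined.

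For $V\leftarrow D$ at $(x,y)$, the removed edges are $[s,s+V]$, with label $x$, and $[s+D,s+V+D]$, starting at $(x+1,y+1)$, with label $x+1$. The same splice argument shows that vertical labels $x$ and $x+1$ are joined.

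If $d=1$ there is only one label, and the statement reduces to degree preservation, which has already been shown.
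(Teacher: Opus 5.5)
Your proof is correct and follows the same route as the paper. You place the two removed parallel edges on the components labelled $y$ and $y+1$ (respectively $x$ and $x+1$), observe that the inserted opposite pair reconnects the four loose endpoints crosswise into a single cycle, and note that each corner loses one incident edge and gains one; your check that the four corners are distinct via Proposition~\ref{prop:nonaxis-degenerate} makes explicit a point the paper's proof leaves implicit.
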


\begin{proof}
The two removed parallel edges belong to the two stated direction components. The inserted opposite pair reconnects their four loose endpoints crosswise. Each involved vertex loses one incident edge and gains one incident edge, so the modified factor remains 2-regular. If the two labels are distinct, the switch merges the corresponding components.
\end{proof}

\begin{Lemma}[Switch lower bound]\label{lem:lowerbound}
Starting from a direction factor with $d$ components, any sequence of local switches that produces one Hamiltonian cycle must contain at least $d-1$ intercomponent switches.
\end{Lemma}

\begin{proof}
A switch can reduce the number of connected components of a 2-factor by at most one. Reducing $d$ components to one therefore requires at least $d-1$ switches.
\end{proof}

\begin{Proposition}[Canonical unit-parallelogram exchanges]\label{prop:canonical-switches}
Let $F_X$ be one of the three natural direction factors, where $X\in\{H,V,D\}$. A degree-preserving two-edge exchange supported on a unit parallelogram spanned by two distinct EJ directions removes the two opposite $X$-edges and inserts the two opposite edges in the other direction. Up to permutation and reversal of the three directions, every such exchange is one of the three switches in Definition~\ref{def:switches}.
\end{Proposition}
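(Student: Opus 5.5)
The plan is a finite enumeration of the possible exchanges, cut down by the paper's symmetry group.

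\emph{Setup.} A unit parallelogram spanned by two distinct directions $X,Y\in\{H,V,D\}$ has vertices $s,s+X,s+Y,s+X+Y$. Its four sides are two $X$-edges, $[s,s+X]$ and $[s+Y,s+X+Y]$, and two $Y$-edges, $[s,s+Y]$ and $[s+X,s+X+Y]$. Its two diagonals are $[s,s+X+Y]$ and $[s+X,s+Y]$. A diagonal is either a non-unit step or the third direction, and in neither case is it an edge of the parallelogram's own direction classes.

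\emph{Step 1: what the exchange must remove and insert.} We work in the factor $F_X$, where the only edges present are $X$-edges. So an exchange supported on the parallelogram can only remove the two opposite $X$-sides. On the admitted domain these four vertices and four sides are distinct (Proposition~\ref{prop:nonaxis-degenerate}, using $N\ge7$). Removing the two $X$-sides leaves each of the four vertices with degree deficit one. The two inserted edges must lie on the parallelogram and must restore degree two, so they form a perfect matching of the four vertices other than the removed one.

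There are three perfect matchings of four vertices: the two $X$-sides, the two $Y$-sides, and the two diagonals. Re-inserting the $X$-sides is the identity. The diagonal matching is excluded because the exchange is required to stay inside the unit parallelogram's edge set. It is also excluded because $[s+X,s+Y]$ is the step $Y-X$, which is not a positive direction step when paired with $X+Y$: for $X=H$, $Y=V$, the diagonal $H+V$ is not a unit. The only matching left inserts the two opposite $Y$-edges. This proves the first sentence of the proposition.

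\emph{Step 2: classification.} The exchange is now determined by the ordered pair $(X,Y)$ of distinct directions, giving six possibilities:
\[
H\leftarrow V,\; H\leftarrow D,\; V\leftarrow D,\; V\leftarrow H,\; D\leftarrow H,\; D\leftarrow V.
\]
The exchange for $(X,Y)$ with $X$ and $Y$ swapped is the same parallelogram with the roles of removed and inserted edges interchanged. So, as an unordered operation, the pair $\{X,Y\}$ together with the choice of which side is removed matches one of the three listed switches, with the direction labels permuted. The three listed switches cover all three unordered pairs $\{H,V\}$, $\{H,D\}$, $\{V,D\}$.

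To make "up to permutation and reversal" precise, we use the dihedral symmetries of Lemma~\ref{lem:orbit-normalization}. These permute the undirected direction classes $\{\pm1\},\{\pm\rho\},\{\pm\rho^2\}$ arbitrarily, via rotations and the reflection $C$, with sign reversals. Under these symmetries the parallelogram spanned by $X,Y$ at $s$ maps to a parallelogram spanned by the images of $X$ and $Y$, possibly with a reversed generator. Reversing a generator, for example $-X$, re-bases the parallelogram at $s-X$, which gives the same edge set. A relabeling of the removed direction then carries any ordered pair $(X,Y)$ to one of $(H,V)$, $(H,D)$, $(V,D)$.

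\emph{Main obstacle.} The hardest step is the bookkeeping of reversed generators. A parallelogram such as the one spanned by $H$ and $-D$ is also "unit" but is not literally of the forms in Definition~\ref{def:switches}. I would handle this by showing that the parallelogram spanned by $\pm X$ and $\pm Y$ is a translate of the one spanned by $X$ and $Y$, with the same pairs of opposite sides. Concretely, $\{s,s+X,s-Y,s+X-Y\}$ equals the $(X,Y)$-parallelogram based at $s-Y$.

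One genuine subtlety to flag is that $H$, $V$, $D$ satisfy $D=H+V$. So the parallelogram spanned by $H$ and $-V$ has $H-V$ as a diagonal, and that is not a unit. This confirms that only positive-sign spans, up to translation, yield valid unit parallelograms with unit sides, which closes the enumeration.
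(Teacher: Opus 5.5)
Your proposal is correct and follows the paper's own argument: there are only three unordered direction pairs; removing the two $X$-sides and restoring degree two using only the parallelogram's sides forces the opposite $Y$-pair; and sign reversals are absorbed by re-basing, so the choice of removed direction yields the three switches of Definition~\ref{def:switches}. One side remark should be deleted, because $H+V=D$ \emph{is} a unit in the $\{1,\tau\}$ coordinates (each unit parallelogram has one diagonal in the third direction and one of norm $3$, namely $V-H$, $H+D$, or $V+D$), so the diagonal matching is excluded only by the support requirement you state first; likewise your closing remark about the $(H,-V)$ span is a non sequitur, since that parallelogram has unit sides and you had already shown it is a translate of the $(H,V)$ one.
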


\begin{proof}
A unit parallelogram is determined by an unordered pair from $\{H,V,D\}$. There are exactly three such pairs: $\{H,V\}$, $\{H,D\}$, and $\{V,D\}$. On each parallelogram, preserving degree two while replacing the two opposite edges of one direction forces the insertion of the two opposite edges of the other direction. Choosing which member is the current factor gives precisely the three representative exchanges in Definition~\ref{def:switches}; all remaining orientations are obtained by permuting or reversing directions.
\end{proof}

\begin{Definition}[Canonical ordered local-switch model and cost]\label{def:local-cost}
Up to permutation of the three unit directions, an \emph{ordered local-switch decomposition} starts with the natural $H$ and $V$ factors, modifies the first by switches of type $H\leftarrow V$ or $H\leftarrow D$, modifies the second by switches of type $V\leftarrow D$, and takes the third factor to be the unused complement. Its cost is the total number of unit-parallelogram exchanges. Translation, unit rotation, conjugation, reversal of a factor, and interchange of the first two factors do not change this cost.
\end{Definition}

\begin{Lemma}[Commutation of factor-local switches]\label{lem:switch-commutation}
Consider any finite sequence of the switches allowed by Definition~\ref{def:local-cost}, with every switch assigned to either the $H$-derived factor copy or the $V$-derived factor copy. The sequence can be stably reordered so that all switches assigned to the first copy precede all switches assigned to the second copy, without changing either final factor or the total cost.
\end{Lemma}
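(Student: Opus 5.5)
The plan is to model the two factor copies as two independent edge sets evolving in parallel, and to show that each switch acts only on the copy it is assigned to. Let $F_1^{(0)}$ and $F_2^{(0)}$ be the natural $H$ and $V$ factors. A sequence $\sigma_1,\dots,\sigma_m$ of switches, each tagged with copy $c(\sigma_j)\in\{1,2\}$, defines states $(F_1^{(j)},F_2^{(j)})$. The switch $\sigma_j$ replaces $F_{c(\sigma_j)}^{(j-1)}$ by the symmetric difference with its four-edge parallelogram set and leaves the other copy unchanged. By Definition~\ref{def:switches}, a switch is specified entirely by its type and seed: it removes two fixed edges and inserts two fixed edges, and its validity (the removed edges are present in the copy, the inserted ones absent) depends only on the current contents of that copy. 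The first step is to record this explicitly. Under this reading, applying $\sigma_j$ reads and writes only the coordinate $F_{c(\sigma_j)}$ of the state pair.

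The second step is the commutation argument. Suppose two adjacent switches $\sigma_j,\sigma_{j+1}$ are assigned to different copies, say $c(\sigma_j)=2$ and $c(\sigma_{j+1})=1$. They act on disjoint coordinates of the state pair. Hence the preconditions of each are unaffected by the other, and the resulting pair is the same in either order. Swapping them therefore gives a valid sequence with the same final pair and the same length. Repeatedly swapping such adjacent out-of-order pairs is a bubble sort. It moves all copy-1 switches before all copy-2 switches and never swaps two switches of the same copy, so the relative order within each copy is preserved. This is the stable reordering required. Alternatively, I can argue directly: the copy-1 coordinate after the full sequence equals the result of applying the copy-1 subsequence in order to $F_1^{(0)}$, and similarly for copy 2. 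Cost is the count of exchanges, which is invariant under permutation, and the complement is determined by the final pair, so it is unchanged as well.

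The main obstacle is conceptual rather than computational. In the model the two copies live on the same graph, so one might worry that a $V\leftarrow D$ switch on copy 2 inserts a $D$-edge that a copy-1 $H\leftarrow D$ switch also inserts, creating overlap, or that order matters for edge disjointness. I would address this by stressing that validity within the model is checked per copy. Any global disjointness condition between $H_1$ and $H_2$, or the complement property, is a property of the final pair only, so it too is order independent. If the paper intends intermediate global disjointness to be part of validity, I would instead note that it suffices to verify it at the end. This is because each copy's edge set evolves monotonically along its own subsequence, and the final pair is identical under any interleaving.
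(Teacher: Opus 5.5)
Your proposal is correct and matches the paper's proof: each switch reads and changes only its own copy, so adjacent switches on different copies commute, and a stable bubble sort preserves the order within each copy, the final pair, the cost, and hence the complement. One small slip, which the main argument does not need: your closing fallback says each copy's edge set evolves ``monotonically,'' but every switch both deletes and inserts edges, and the model imposes no intermediate cross-copy disjointness condition, so that remark can simply be dropped.
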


\begin{proof}
A switch assigned to the $H$-derived copy reads and changes only that factor's edge set, and a switch assigned to the $V$-derived copy reads and changes only the other edge set. Hence two adjacent switches assigned to different copies commute as operations on the ordered pair of factor edge sets. Repeatedly swapping adjacent switches of different assignments performs a stable sort: the internal order of the switches on each individual factor is preserved, so their validity and their final factor are unchanged. The third factor is formed only after the two final factors are fixed, as their unused complement. Thus chronological interleaving has no effect on the resulting decomposition or on the number of exchanges.
\end{proof}

\begin{Remark}[Meaning and limitation of the cost model]\label{rem:model-scope}
Definition~\ref{def:local-cost} measures local exchange distance from the canonical three-direction factorization. Lemma~\ref{lem:switch-commutation} shows that the word ``ordered'' fixes only a normal form for the chronology of the permitted factor-local operations; arbitrary interleaving of those same operations reaches exactly the same final factor pairs at the same cost. Once two disjoint spanning 2-factors have been selected in a six-regular graph, the third factor is forced as their complement. The resulting cost is therefore a natural invariant of this local construction model, but it is not claimed to be a graph invariant over broader models that allow direct modification of all three factors, cross-factor exchanges, or arbitrary global encodings of Hamiltonian decompositions. All optimality statements below are confined to the explicitly stated model.
\end{Remark}

\section{Minimal-Alignment Skeleton}

The first factor $H_1$ is built from the horizontal factor by
\begin{equation}
\calS_H(d)=\{(H\leftarrow V,(0,0))\}\cup
\{(H\leftarrow D,(d-1,q)):q=d-1,d-2,\ldots,2\}.
\label{eq:H-skeleton}
\end{equation}
It contains $d-1$ switches.

The second factor starts from $V$ and uses $V\leftarrow D$ switches. Its reduced seed set is, for even $d$,
\begin{equation}
\calS_V^{\red}(d)=\{(-1,-1),(1,0),(2,0),\ldots,(d-2,0)\},
\label{eq:V-even}
\end{equation}
and, for odd $d$,
\begin{equation}
\calS_V^{\red}(d)=\{(-1,-1),(1,0),(2,0),\ldots,(d-3,0),(d-2,-1)\}.
\label{eq:V-odd}
\end{equation}

\begin{figure}[H]
 \centering
 \includegraphics[width=0.65\linewidth]{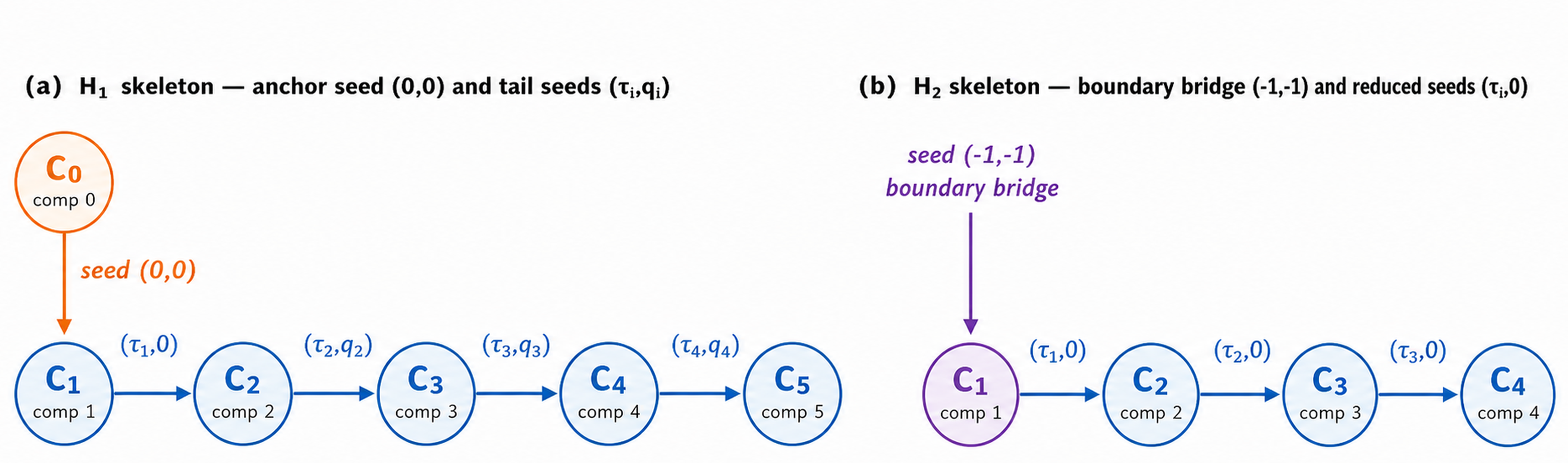}
 \caption{Quotient-level minimal-alignment skeleton. The first two direction factors collapse their $d$ components using the lower-bound number $d-1$ of intercomponent switches.}
 \label{fig:skeleton}
\end{figure}

\begin{Lemma}[Minimal collapse of $H_1$]\label{lem:H1-collapse}
The switch list \eqref{eq:H-skeleton} induces a spanning tree on the $H$-component labels. Every valid realization of these switches therefore produces a Hamiltonian cycle $H_1$.
\end{Lemma}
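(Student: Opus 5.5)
By Lemma~\ref{lem:direction-orders}, the natural horizontal factor has exactly $d$ components, labelled by $y \pmod d$. By Lemma~\ref{lem:component-action}, each switch in \eqref{eq:H-skeleton} joins two consecutive horizontal labels, determined by the $y$-coordinate of its seed. The proof therefore reduces to three steps: compute the label pair of each switch, show these pairs form a spanning tree on $\ZZ_d$, and show that performing the switches one at a time merges components exactly as the tree predicts.

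For the labels, the $H\leftarrow V$ switch at $(0,0)$ joins labels $0$ and $1$. The $H\leftarrow D$ switch at $(d-1,q)$ joins labels $q$ and $q+1$ for $q=d-1,d-2,\ldots,2$. These give the edges $\{d-1,0\}$ (since $d\equiv0$), $\{d-2,d-1\}$, \dots, $\{2,3\}$. Together with $\{0,1\}$ this is the path $1-0-(d-1)-(d-2)-\cdots-3-2$. It has $d-1$ edges on $d$ vertices, covers every label, and its edges are pairwise distinct, so it is a spanning tree. For $d=2$ the $H\leftarrow D$ list is empty and the single edge $\{0,1\}$ suffices. The case $d=1$ is vacuous: there are no switches and Corollary~\ref{cor:d1} applies.

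The main obstacle is to pass from ``label graph is a tree'' to ``the final factor is one cycle''. This requires two things at each step: the two edges a switch removes must still be present, and the switch must actually merge the two current components rather than split one. I would proceed inductively in the order listed. First I would check that the four edges touched by each switch are pairwise disjoint across switches, so every switch finds its two $H$-edges untouched. The seeds are $(0,0)$ and $(d-1,q)$ with distinct $q\ge2$; the removed edges lie in rows $0,1$ and rows $q,q+1$ near $x=d-1$. A collision could occur only through the quotient identification, and this would require $r\le 2$, which is impossible because $r=dQ\ge 2\cdot 3$ in the admitted domain. For the merge, I would induct on the switch index, assuming the current factor's components correspond to the connected classes of the label forest built so far. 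A new switch removes one edge from each of two distinct current components, since its tree edge joins different classes. Removing an edge from each of two disjoint cycles leaves two paths, and the crosswise reconnection described in Lemma~\ref{lem:component-action} necessarily joins them into a single cycle, because an inserted pair cannot close either path on itself when the two paths lie in different cycles. So the component count drops by exactly one, and after $d-1$ switches one 2-regular spanning component remains. That component is a Hamiltonian cycle, and the count matches Lemma~\ref{lem:lowerbound} with equality.
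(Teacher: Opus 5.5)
Your proposal is correct and is essentially the paper's proof: the label pairs $\{0,1\}$ and $\{q,q+1\}$ form the spanning path $1$--$0$--$(d-1)$--$\cdots$--$2$ on $\ZZ_d$, and Lemma~\ref{lem:component-action} merges two components per switch; your induction and edge-distinctness check just make explicit what the paper covers by the phrase ``valid realization''. One small fix: the only possible coincidence among the removed edges is between the anchor's edge at base $(0,0)$ and the edge at base $(d,d)$ removed by the switch with $q=d-1$. These differ by $dD$ and are distinct because $D$ has order $r=dQ>d$ (as $Q\ge3$), so this should replace your ``would require $r\le2$'' justification.
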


\begin{proof}
The $H\leftarrow V$ seed $(0,0)$ joins labels $0$ and $1$. The $H\leftarrow D$ seeds $(d-1,q)$ join $q$ to $q+1$ for $q=d-1,d-2,\ldots,2$ modulo $d$. These $d-1$ edges form a spanning tree on $\ZZ_d$. By Lemma~\ref{lem:component-action}, each switch merges two components while preserving degree two.
\end{proof}

\begin{Lemma}[Minimal collapse of $H_2$]\label{lem:H2-collapse}
The reduced seeds \eqref{eq:V-even} and \eqref{eq:V-odd} induce a spanning tree on the $V$-component labels. Any lift whose switch bases are distinct therefore produces a Hamiltonian cycle $H_2$.
\end{Lemma}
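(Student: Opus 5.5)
We argue exactly as in Lemma~\ref{lem:H1-collapse}, now for the vertical factor. By Lemma~\ref{lem:component-action}, a $V\leftarrow D$ switch at $(x,y)$ joins the vertical labels $x$ and $x+1 \pmod d$. Only the first coordinate of each seed matters for this. It is also unchanged by any lift $s_i=s_i^{\red}+d(\pi_i,\kappa_i)$, because the lift shifts each coordinate by a multiple of $d$ and the vertical label is $x\bmod d$ (Lemma~\ref{lem:direction-orders}). So we only need to list, for each reduced seed, the label edge $\{\xi_i,\xi_i+1\}$ it contributes in $\ZZ_d$.

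\emph{Even $d$.} The seeds \eqref{eq:V-even} have first coordinates $-1,1,2,\ldots,d-2$. The seed $(-1,-1)$ has $\xi=-1\equiv d-1$ and gives the edge $\{d-1,0\}$. The seeds $(q,0)$ for $1\le q\le d-2$ give the edges $\{q,q+1\}$. Together these are
\[
 \{d-1,0\},\{1,2\},\{2,3\},\ldots,\{d-2,d-1\}.
\]
This is the path $0-(d-1)-(d-2)-\cdots-2-1$ through all $d$ labels, which uses $d-1$ edges and so is a spanning tree on $\ZZ_d$.

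\emph{Odd $d$.} The seeds \eqref{eq:V-odd} have first coordinates $-1,1,\ldots,d-3,d-2$, the last coming from $(d-2,-1)$. The second coordinate $-1$ does not affect the vertical label, so the edge multiset is identical to the even case and again forms the same spanning path. For $d=2$ both lists degenerate to the single seed $(-1,-1)$ with edge $\{1,0\}$. For $d=3$ the odd list is $(-1,-1),(1,-1)$, which gives $\{2,0\},\{1,2\}$. Both are spanning trees. In every case there are exactly $d-1$ seeds, matching Lemma~\ref{lem:lowerbound}.

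It remains to pass from the label tree to a Hamiltonian cycle; this is the one step needing care. Apply the switches to the natural $V$ factor in any order. We claim each switch, at the moment it is applied, removes two edges lying in distinct current components, so the component count drops by exactly one each time.

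\begin{itemize}[leftmargin=*]
\item \emph{Removed edges are still present.} A switch removes the $V$-edges $[s,s+V]$ and $[s+D,s+V+D]$. Since the switch bases are distinct, no earlier switch has removed these particular edges; the distinct-bases hypothesis is used exactly here. Two switches with different bases share no removed $V$-edge: each base $s$ determines the pair it removes, since $[s,s+V]$ has lower endpoint $s$ and the partner edge has lower endpoint $s+D$. Inserted $D$-edges are never removed later, because every switch removes only $V$-edges.
\item \emph{Removed edges lie in distinct components.} We use induction, with the current components always being unions of original vertical cycles, each union indexed by the vertex set of a component of the partial label forest. The two removed edges carry labels $x$ and $x+1$, and these lie in different trees of the partial forest, because the full edge set is a tree and so contains no cycle. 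The crosswise reconnection of Lemma~\ref{lem:component-action} then merges two distinct cycles into one cycle.
\end{itemize}

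After all $d-1$ switches a single component remains. It is 2-regular and spans all vertices, so it is a Hamiltonian cycle $H_2$.
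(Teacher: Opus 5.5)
Your proof is correct and follows the paper's route: read off the first-coordinate residues $d-1,1,2,\ldots,d-2$ (identical in both parities), get the path $\{d-1,0\},\{1,2\},\ldots,\{d-2,d-1\}$, and merge components along it. Your inductive merge step spells out what the paper leaves to Lemma~\ref{lem:component-action}. It also shows that no inserted $D$-edge is duplicated, since each inserted edge joins two distinct current components.

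One justification needs fixing. You say distinct seeds share no removed $V$-edge, but that is false: switches at $s$ and $s+D$ both remove $[s+D,s+D+V]$. Read the hypothesis as the paper does in (A2), namely that all removed vertical bases $s_i, s_i+D$ are pairwise distinct. For these reduced seeds that holds for every lift anyway. The seeds have pairwise distinct diagonal labels $x-y \pmod d$, lifts preserve that label, and $s_i$ and $s_i+D$ share it.
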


\begin{proof}
A $V\leftarrow D$ seed with first coordinate $x$ joins labels $x$ and $x+1$. The first-coordinate residues in the even case are $d-1,1,2,\ldots,d-2$, producing the spanning tree $\{d-1,0\},\{1,2\},\{2,3\},\ldots,\{d-2,d-1\}$. The odd list has the same first-coordinate residues; its final $y$ coordinate changes the seam representative but not the merged labels.
\end{proof}

\section{Lift Phases and Anchor Lock}

Each reduced seed has representatives
\begin{equation}
 s_i=(\xi_i,\eta_i)+d(\pi_i,\kappa_i),\qquad \pi_i,\kappa_i\in\ZZ.
\label{eq:lift}
\end{equation}
The pair $(\pi_i,\kappa_i)$ is the lift phase. It changes the location of a switch on the quotient seam without changing its reduced component labels.

\begin{Definition}[Anchor lock]\label{def:anchor}
A lift satisfies anchor lock if the two reduced seeds $(-1,-1)$ and $(1,0)$ are represented so that their $V\leftarrow D$ switches remove the vertical bases $(0,0)$ and $(1,0)$ inserted by the $H\leftarrow V$ switch of $H_1$. The canonical representatives are $(-1,-1)\mapsto(-1,-1)$ and $(1,0)\mapsto(1,0)$.
\end{Definition}

\begin{Lemma}[Anchor disjointness]\label{lem:anchor}
Under anchor lock, the two vertical edges inserted into $H_1$ at the anchor do not belong to $H_2$.
\end{Lemma}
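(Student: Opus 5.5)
The proof is a direct edge-level computation in the $\{1,\tau\}$ coordinates. First I identify the two vertical edges that the $H\leftarrow V$ switch of $H_1$ inserts at seed $(0,0)$. By Definition~\ref{def:switches}, with $s=(0,0)$, $H=(1,0)$ and $V=(0,1)$, they are
\[
[(0,0),(0,1)]\quad\text{and}\quad[(1,0),(1,1)],
\]
the vertical edges based at $(0,0)$ and $(1,0)$. Next I compute which vertical edges the two anchored $V\leftarrow D$ switches remove from the natural $V$ factor. With $D=(1,1)$, a $V\leftarrow D$ switch at $s$ removes $[s,s+V]$ and $[s+D,s+V+D]$, the vertical edges based at $s$ and $s+D$.
\begin{itemize}[leftmargin=*]
 \item The canonical representative $(-1,-1)$ removes the vertical edges based at $(-1,-1)$ and at $(-1,-1)+D=(0,0)$. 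So it removes $[(0,0),(0,1)]$.
 \item The representative $(1,0)$ removes the vertical edge based at $(1,0)$, namely $[(1,0),(1,1)]$, together with the one based at $(2,1)$.
\end{itemize}
Thus each inserted anchor edge is among the edges deleted from the $V$-derived factor. This is exactly the content of Definition~\ref{def:anchor}, now verified from the switch formulas.

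It remains to show that these two edges are never restored to $H_2$. The only switches applied to the $V$-derived copy are of type $V\leftarrow D$, and these insert only $D$-direction edges: $[s,s+D]$ and $[s+V,s+V+D]$. A vertical edge is therefore never inserted into the second factor. Since the switches act on that copy alone (Lemma~\ref{lem:switch-commutation}), each vertical edge of the final $H_2$ is an original $V$-edge that no switch removed.

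Both anchor edges are removed, so neither lies in $H_2$. This holds regardless of the order in which the switches are performed.

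The main obstacle is making sure the removals are well defined in the quotient. Each removed edge must be removed exactly once, meaning no other seed of $\calS_V^{\red}(d)$ removes the same edge a second time, which would make the switch invalid. Each anchor edge must also be a genuine, distinct quotient edge. The second point follows from Proposition~\ref{prop:nonaxis-degenerate}: on the admitted domain the unit residues are distinct, so $V$-edges are distinct from $D$-edges and the four removed bases are honest edges.

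For the first point, the issue is whether the removed bases $(0,0)$ and $(1,0)$ could coincide in the quotient with bases removed by other seeds. The other $V\leftarrow D$ bases lie in $x$-labels $2,\ldots,d-2$, or in $x$-label $d-1$ for the seam seed. Their $D$-shifts lie in labels $3,\ldots,d-1$ or in label $0$. The only possible collision is therefore in vertical-component labels $0$ and $1$ modulo $d$. I would rule it out by comparing residues modulo $d$ via Lemma~\ref{lem:direction-orders}, invoking the distinct-bases hypothesis of Lemma~\ref{lem:H2-collapse} for any residual coincidence.
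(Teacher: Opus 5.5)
Your argument is correct and matches the paper's proof: the anchor $H\leftarrow V$ switch inserts the vertical edges with bases $(0,0)$ and $(1,0)$, and the anchored $V\leftarrow D$ switches at $(-1,-1)$ and $(1,0)$ remove exactly these (together with the bases $(-1,-1)$ and $(2,1)$), while your remark that $V\leftarrow D$ switches never insert a vertical edge (via the distinct unit residues of Proposition~\ref{prop:nonaxis-degenerate}) only makes explicit what the paper leaves implicit. Your final paragraph on repeated removals is not needed here, since it concerns switch validity (condition (A2), treated in Lemma~\ref{lem:first-two-valid}); it also contains a slip, because the odd seam seed $(d-2,-1)$ has $x$-label $d-2$, not $d-1$, so every non-anchor base and its $D$-shift lies in labels $2,\ldots,d-1$ and no collision with labels $0,1$ remains to rule out.
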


\begin{proof}
The $H\leftarrow V$ switch at $(0,0)$ inserts the vertical bases $(0,0)$ and $(1,0)$. The $V\leftarrow D$ switch at $(-1,-1)$ removes vertical bases $(-1,-1)$ and $(0,0)$, while the switch at $(1,0)$ removes $(1,0)$ and $(2,1)$.
\end{proof}

\begin{figure}[H]
 \centering
 \includegraphics[width=0.65\linewidth]{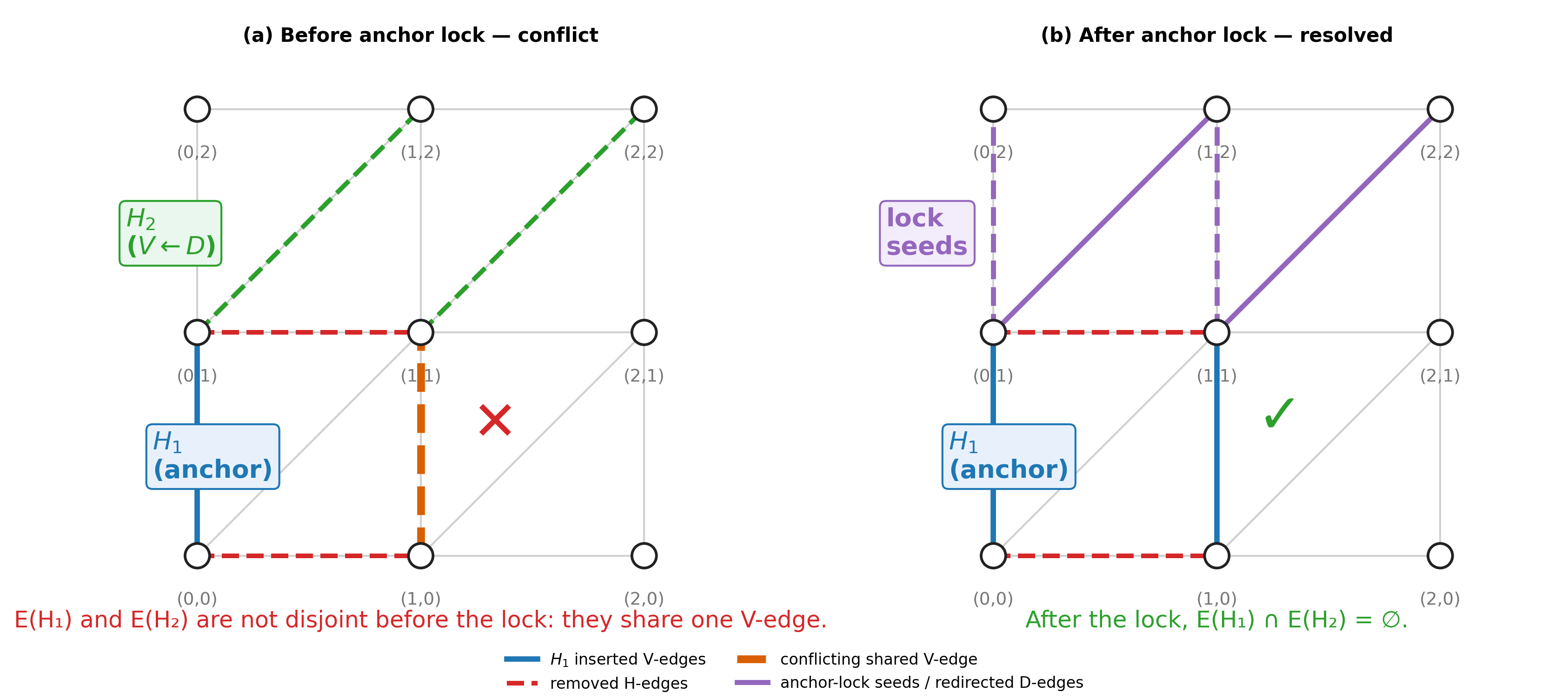}
 \caption{Anchor-lock mechanism. The two designated $V\leftarrow D$ switches delete the vertical bases borrowed by the first $H_1$ switch.}
 \label{fig:anchor}
\end{figure}

\section{Fine-Grained Complement Incidence}

Let $\calD$ be the natural diagonal factor. For a lift $\Phi$, let $\Delta(\Phi)$ be the set of diagonal edges inserted into $H_1$ or $H_2$. These edges are absent from the complement. The connected components of
\begin{equation}
 \calD-\Delta(\Phi)
\end{equation}
are actual diagonal path arcs, together with any uncut diagonal cycles. Denote this component set by $\calP(\Phi)$. Each path arc belongs to one original diagonal cycle and therefore has a coarse label $z\in\ZZ_d$; when a cycle is cut more than once, a second index $\lambda$ is required, and we write the arc as $P_{z,\lambda}$.

The horizontal edges released by the construction of $H_1$ and the vertical edges released by the construction of $H_2$ are precisely the non-diagonal edges of the complement. Contract each component of $\calD-\Delta(\Phi)$ to one vertex and retain these released connector edges.

\begin{Definition}[Fine complement-incidence graph]\label{def:finegraph}
The multigraph $\Gamma(\Phi)$ has vertex set $\calP(\Phi)$. Every released horizontal or vertical edge of the complement gives an edge between the one or two components of $\calD-\Delta(\Phi)$ containing its endpoints.
\end{Definition}

\begin{Definition}[Admissible lift]\label{def:admissible}
A lift $\Phi$ is admissible if:
\begin{enumerate}[label=(A\arabic*),leftmargin=*]
 \item it satisfies anchor lock;
 \item all removed switch bases and inserted diagonal bases are distinct;
 \item $E(H_1)\cap E(H_2)=\emptyset$;
 \item the fine incidence graph $\Gamma(\Phi)$ is connected.
\end{enumerate}
\end{Definition}
Conditions (A1)--(A4) are discharged constructively: the exceptional branches are handled in Section~\ref{sec:exceptional}, while the universal large-$d$ cut orders and explicit incidence words are established in Sections~\ref{sec:universal-lift}--\ref{sec:universal-incidence}.

\begin{Theorem}[Fine-incidence admissible-lift theorem]\label{thm:admissible}
Construct $H_1$ by \eqref{eq:H-skeleton} and $H_2$ from \eqref{eq:V-even} or \eqref{eq:V-odd}. If the chosen lift is admissible, then
\begin{equation}
 H_3=E(\EJ_\alpha)\setminus\bigl(E(H_1)\cup E(H_2)\bigr)
\end{equation}
is a Hamiltonian cycle, and $H_1,H_2,H_3$ are pairwise edge-disjoint.
\end{Theorem}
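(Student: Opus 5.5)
The plan has four steps: show that $H_1$ and $H_2$ are Hamiltonian cycles, check that the three factors are pairwise edge-disjoint, compute the degrees in $H_3$, and then prove that $H_3$ is connected.

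\textbf{Step 1: $H_1$ and $H_2$ are Hamiltonian.} Lemmas~\ref{lem:H1-collapse} and \ref{lem:H2-collapse} do this once every switch is actually valid. Validity requires that each switch removes edges still present in its factor and inserts edges not already there. Condition (A2), distinct removed bases and distinct inserted diagonal bases, guarantees this. Each switch then merges two components while keeping the factor 2-regular (Lemma~\ref{lem:component-action}).

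\textbf{Step 2: the edge sets are disjoint.} Condition (A3) gives $E(H_1)\cap E(H_2)=\emptyset$. Since $H_3$ is defined as a complement, it is disjoint from both by construction.

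\textbf{Step 3: $H_3$ is 2-regular.} The graph is six-regular and simple (Proposition~\ref{prop:nonaxis-degenerate}), so every vertex has degree six. $H_1$ and $H_2$ are disjoint 2-regular spanning subgraphs, so each vertex has degree $6-2-2=2$ in $H_3$. Hence $H_3$ is a spanning 2-factor, and it remains to show it is connected.

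To understand $H_3$, I will identify its edges explicitly by direction.
\begin{itemize}
\item \emph{Horizontal edges.} Only $H_1$ uses them, and it uses all of them except those removed by its switches. So the horizontal edges of $H_3$ are exactly the ones released by $H_1$. Here $H_2$ contains no horizontal edges, because its switches are all of type $V\leftarrow D$.
\item \emph{Vertical edges.} They lie in $H_2$ (all except the released ones) or are inserted into $H_1$ by the single $H\leftarrow V$ switch. Anchor lock (A1) and Lemma~\ref{lem:anchor} ensure that those two inserted verticals are among the ones removed from $H_2$. So the vertical edges of $H_3$ are the released verticals minus the two anchor edges, and those two are used exactly once overall, consistent with (A3).
\item \emph{Diagonal edges.} These are exactly $\calD-\Delta(\Phi)$.
\end{itemize}
Therefore $H_3$ is the union of the components in $\calP(\Phi)$ and the connector edges of Definition~\ref{def:finegraph}.

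\textbf{Step 4: connectivity, the main obstacle.} The subtle point is that $\Gamma(\Phi)$ must be built only from connectors actually present in $H_3$, i.e.\ excluding the anchor verticals; I will read Definition~\ref{def:finegraph}'s ``released edges of the complement'' that way. Contracting each component of $\calD-\Delta(\Phi)$ is a connectivity-preserving quotient, since each such component is itself connected (a path arc or an uncut cycle). Thus $H_3$ is connected exactly when $\Gamma(\Phi)$ is connected, and (A4) supplies this. A connected 2-regular spanning subgraph is a single Hamiltonian cycle, which completes the proof.
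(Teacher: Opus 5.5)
Your proposal is correct and follows essentially the same route as the paper. Condition (A2) and the spanning-tree lemmas make $H_1$ and $H_2$ Hamiltonian, (A3) and six-regularity make $H_3$ a spanning 2-factor, and contracting the connected components of $\mathcal{D}-\Delta(\Phi)$ reduces connectivity of $H_3$ to (A4). Your direction-by-direction bookkeeping of $H_3$, including the exclusion of the two anchor-locked verticals from the connectors of $\Gamma(\Phi)$, just spells out the paper's one-line claim that this contraction yields exactly $\Gamma(\Phi)$.
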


\begin{proof}
Lemmas~\ref{lem:H1-collapse} and \ref{lem:H2-collapse}, together with (A1)--(A2), make $H_1$ and $H_2$ connected spanning 2-factors. Condition (A3) makes them edge-disjoint. Since $\EJ_\alpha$ is 6-regular, their unused complement $H_3$ is a spanning 2-factor.

Contracting every component of $\calD-\Delta(\Phi)$ in $H_3$ produces exactly $\Gamma(\Phi)$. Contraction preserves the connected-component relation between the contracted graph and $H_3$. Thus (A4) implies that $H_3$ is connected. A connected spanning 2-factor is a Hamiltonian cycle. Pairwise edge-disjointness follows from the definition of the complement.
\end{proof}

\begin{Lemma}[Coarse connector labels]\label{lem:coarse}
Let $z(x,y)=x-y\pmod d$. A released horizontal edge from base $(x,y)$ projects to the coarse pair $\{z,z+1\}$, and a released vertical edge projects to $\{z,z-1\}$. Replacing $(x,y)$ by $(x,y)+d(\pi,\kappa)$ does not change $z$.
\end{Lemma}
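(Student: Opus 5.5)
Lemma~\ref{lem:coarse} is a direct coordinate computation, and I will argue it in three short steps.

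\emph{Horizontal edges.} A released horizontal edge with base $(x,y)$ joins $(x,y)$ to $(x,y)+H=(x+1,y)$. Applying $z(x,y)=x-y$ gives $z$ at the base and $(x+1)-y=z+1$ at the other endpoint, so the coarse pair is $\{z,z+1\}$.

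\emph{Vertical edges.} A released vertical edge joins $(x,y)$ to $(x,y+1)$, whose label is $x-(y+1)=z-1$. The coarse pair is therefore $\{z,z-1\}$.

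\emph{Well-definedness.} The quotient identifies vertices differing by the lattice vectors $\alpha=d(u+v,v)$ and $\tau\alpha=d(-v,u)$ (proof of Lemma~\ref{lem:direction-orders}). Both vectors lie in $d\ZZ^2$, so every lattice vector $(p,q)$ has $p-q\equiv0\pmod d$. Hence $x-y\pmod d$ is a function on quotient vertices, which is exactly the $D$-component label of Lemma~\ref{lem:direction-orders}. The same calculation handles the lift: replacing $(x,y)$ by $(x,y)+d(\pi,\kappa)$ changes $x-y$ by $d(\pi-\kappa)\equiv0\pmod d$, so $z$ is unchanged.

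These labels project correctly to $\calD-\Delta(\Phi)$. The arcs are subpaths of diagonal cycles, and $D=(1,1)$ preserves $x-y$. So each arc $P_{z,\lambda}$ has the same coarse label $z$ as its endpoints, and a connector's coarse pair is determined by the labels of the arcs it joins.

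No step poses a real obstacle. The only point needing care is to confirm that the quotient lattice lies in $d\ZZ^2$, so that $z$ is defined modulo $d$ and not merely as an integer on representatives.
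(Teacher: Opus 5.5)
Your proposal is correct and follows the paper's own direct computation: an $H$ step raises $x-y$ by one, a $V$ step lowers it by one, and a lift changes it by $d(\pi-\kappa)\equiv0\pmod d$. Your extra check that the quotient lattice, generated by $d(u+v,v)$ and $d(-v,u)$, lies in $d\ZZ^2$ spells out the well-definedness of $z$ on quotient vertices, which the paper leaves implicit in Lemma~\ref{lem:direction-orders}.
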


\begin{proof}
An $H$ step increases $x-y$ by one, a $V$ step decreases it by one, and a lift changes it by $d(\pi-\kappa)$.
\end{proof}

\begin{Proposition}[Coarse labels are not a connectivity certificate]\label{prop:coarse-insufficient}
If some diagonal cycle is cut into at least two components, connectedness of the graph obtained by projecting every $P_{z,\lambda}$ to $z$ does not imply connectedness of $\Gamma(\Phi)$. Consequently, a word on the $d$ residues alone cannot certify Hamiltonicity of the complement in general.
\end{Proposition}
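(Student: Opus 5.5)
The statement is a non-implication, so I will prove it with an explicit counterexample, or more robustly with a structural argument showing why the projection loses information. The plan is to show that the coarse projection $\calP(\Phi)\to\ZZ_d$, $P_{z,\lambda}\mapsto z$, is a graph homomorphism from $\Gamma(\Phi)$ onto the projected multigraph $\bar\Gamma$. By Lemma~\ref{lem:coarse}, every released connector projects to a coarse pair determined by its base. Homomorphisms preserve connectivity in the forward direction only, so connectedness of $\Gamma(\Phi)$ implies connectedness of $\bar\Gamma$. The converse fails as soon as some fibre of the projection contains at least two vertices: identifying $P_{z,1}$ and $P_{z,2}$ can merge two components of $\Gamma(\Phi)$ into one component of $\bar\Gamma$.

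\emph{Abstract mechanism.} Suppose cycle $z$ is cut into two arcs $P_{z,1},P_{z,2}$. Suppose further that the connectors touching cycle $z$ can be partitioned into two classes: one class meets only $P_{z,1}$ and leads to a set $Z_1$ of residues, and the other meets only $P_{z,2}$ and leads to the complementary set $Z_2$. Then $\bar\Gamma$ is connected through the single vertex $z$, while $\Gamma(\Phi)$ splits into the part containing $P_{z,1}$ and the part containing $P_{z,2}$. The coarse word records only which residues are joined, not which arc above a residue a connector lands on, so it cannot distinguish this lift from one in which the connectors are redistributed to make $\Gamma(\Phi)$ connected.

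\emph{Concrete witness.} To turn the mechanism into a proof I will exhibit a concrete small case. The steps are:
\begin{enumerate}[label=(\roman*),leftmargin=*]
 \item Take a modest $d$, say $d=4$, with a small reduced ratio such as $(u,v)=(1,1)$.
 \item Choose a lift $\Phi$ of the skeletons \eqref{eq:H-skeleton} and \eqref{eq:V-even} satisfying (A1)--(A3) in which two inserted diagonal edges fall on the same diagonal cycle $z$, cutting it into two arcs.
 \item List the released horizontal and vertical connectors explicitly, placing arc endpoints with the fine coordinate along the cycle.
 \item Choose the phase so that the connectors attach as in the mechanism, which makes $\Gamma(\Phi)$ disconnected.
 \item Check directly that the projected graph on $\ZZ_d$ is connected.
\end{enumerate}
By Theorem~\ref{thm:admissible}'s contraction argument, disconnectedness of $\Gamma(\Phi)$ then means $H_3$ is a union of at least two cycles. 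This witness shows that the coarse residue word is identical to that of an admissible configuration while the complement is not Hamiltonian. That gives the ``consequently'' clause: a certificate depending only on residue data cannot separate the two lifts.

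\emph{Main obstacle.} The hard step is actually producing the bad lift. It must satisfy the anchor and distinctness constraints while placing the two cuts of cycle $z$ so that the connectors separate. My first attempt will vary only the lift phase $\kappa$ of one $V\leftarrow D$ seed, sliding its inserted diagonal along cycle $z$. This changes which arc each nearby connector endpoint lies on without changing any coarse label, by Lemma~\ref{lem:coarse}. If a hand computation is messy, I will fall back on the purely structural argument above. That version only needs a fibre of size at least two, with connectors on the two arcs leading to disjoint residue sets, and I will justify that this is compatible with the coarse data by noting that coarse labels are lift-invariant while arc membership is not.
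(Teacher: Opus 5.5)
\paragraph{Comparison with the paper.} Your fallback is essentially the paper's proof. The paper notes only that $P_{z,\lambda}\mapsto z$ is non-injective, and exhibits an abstract fine graph: arcs $P_{0,1},P_{0,2},P_{1,1},P_{1,2}$ with the two disjoint edges $P_{0,1}P_{1,1}$ and $P_{0,2}P_{1,2}$, whose projection is the connected single edge $\{0,1\}$.

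As you phrase it, however, your mechanism is not sufficient. A fibre $\{P_{z,1},P_{z,2}\}$ whose two arcs see disjoint residue sets $Z_1,Z_2$ need not split $\Gamma(\Phi)$. You must also require that no connector joins the $Z_1$ side to the $Z_2$ side away from $z$; otherwise $\Gamma(\Phi)$ can still be connected. Writing down one explicit toy multigraph, as the paper does, removes this issue.

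\paragraph{Your primary route.} This route is genuinely different and proves more than the paper does. The paper's toy graph shows only that the projection forgets information. You instead aim for an actual lift satisfying (A1)--(A3) whose coarse data coincide with those of a good lift. The route works, with exactly the parameters you proposed.

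Take $d=4$ and $u=v=1$, so $c=2$, $r=12$ and $t=2x-y$. Leave $B_2$ unlifted at $(2,0)$, i.e.\ $\lambda_2=0$; one checks that (A1)--(A3) still hold. The cuts are as follows.
\begin{itemize}
\item[] On cycle $1$: $B_1^0,B_2^1,A_2^0,A_1^1$ at $t=2,3,4,5$.
\item[] On cycle $2$: $B_2^0,A_2^1$ at $t=4,6$.
\end{itemize}
Hence both $a_2^0$ and $b_2^1$ join the one-vertex arc $\Arc{1}{B_2^1}{A_2^0}=\{(3,2)\}$ to the arc $\Arc{2}{B_2^0}{A_2^1}=\{(3,1),(4,2)\}$. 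So the complement contains the triangle $(3,1),(4,2),(3,2)$, and $\Gamma(\Phi)$ in fact has three components.

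Connector coarse pairs and the residues of all cuts are lift-invariant (Lemmas~\ref{lem:coarse} and \ref{lem:fine-coordinate}). Therefore this lift has the same projected graph as the universal lift $\lambda_2=1$, for which $\Gamma(\Phi)$ is a single $10$-cycle. This gives the ``consequently'' clause directly: two lifts with identical residue data, one with a Hamiltonian complement and one without. The price is a small explicit computation that the paper's one-line abstract example avoids.
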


\begin{proof}
The projection $P_{z,\lambda}\mapsto z$ is noninjective whenever a diagonal cycle has more than one remaining arc. Distinct fine incidence graphs can then have the same coarse projection. For example, with two arcs $P_{0,1},P_{0,2}$ above residue $0$ and two arcs $P_{1,1},P_{1,2}$ above residue $1$, the two disjoint edges $P_{0,1}P_{1,1}$ and $P_{0,2}P_{1,2}$ project to the connected coarse graph consisting of the single edge $\{0,1\}$, while the fine graph is disconnected. Therefore coarse connectedness is insufficient.
\end{proof}

\begin{figure}[H]
 \centering
 \includegraphics[width=0.65\linewidth]{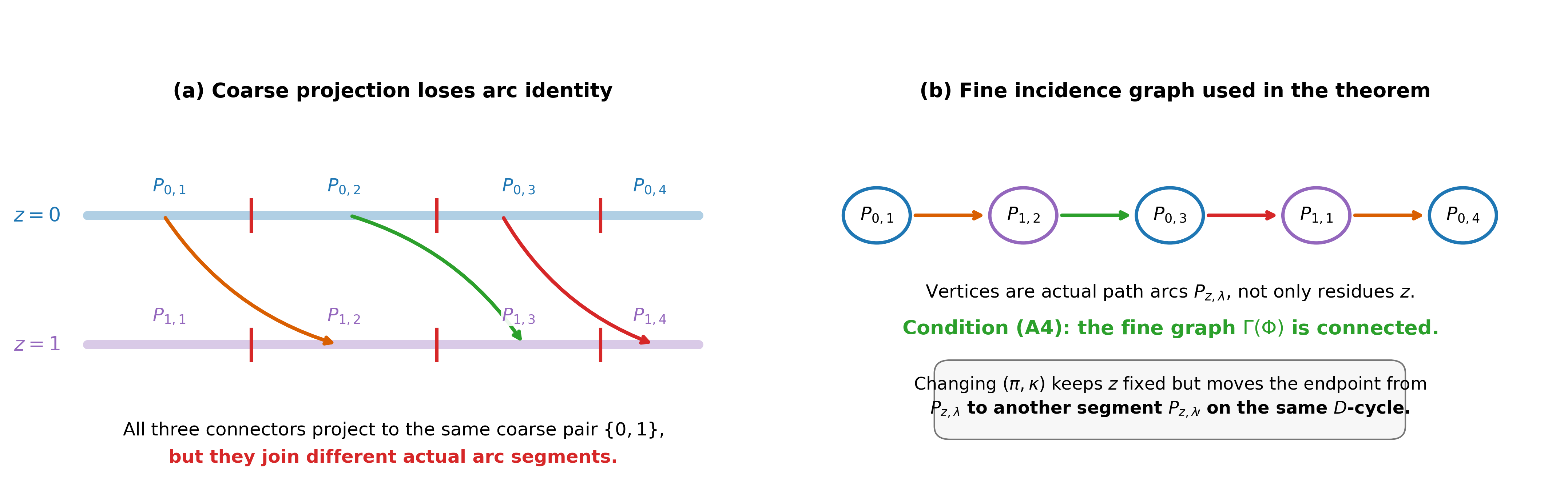}
 \caption{Why complement incidence must be fine-grained. Lift phases preserve the residue $z$ but can move a connector endpoint among different arcs $P_{z,\lambda}$ on the same diagonal cycle.}
 \label{fig:fine-incidence}
\end{figure}

\section{Fine Coordinates and Cut-Order Arithmetic}

Let $a=du$, $b=dv$, $\gcd(u,v)=1$, $Q=u^2+uv+v^2$, and $r=dQ=N/d$.
Because $\gcd(u,Q)=\gcd(u,v^2)=1$, there is a unique
$c\in\{0,1,\ldots,Q-1\}$ satisfying
\begin{equation}
 cu\equiv-v\pmod Q.
\label{eq:cdef}
\end{equation}
For a vertex $(x,y)$ in the $\{1,\tau\}$ basis define
\begin{align}
 z(x,y)&=x-y\pmod d,\label{eq:zcoord}\\
 t(x,y)&=cx+(1-c)y\pmod r.\label{eq:zt}
\end{align}

\begin{Lemma}[Fine diagonal coordinate]\label{lem:fine-coordinate}
On each diagonal component, $t$ is a cyclic coordinate of length $r$. More precisely, $(z,t)(x+1,y+1)=(z,t+1)$, and the map from any $D$-cycle to $\ZZ_r$ is bijective. A lift by $d(\pi,\kappa)$ preserves $z$ and changes $t$ by $d\bigl(c\pi+(1-c)\kappa\bigr)\pmod r$.
\end{Lemma}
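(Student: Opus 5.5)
The plan is to treat $t$ as a linear functional on $\ZZ^2$ and check three things in order: it descends to the quotient $\ZZ[\rho]/(\alpha)$ with values in $\ZZ_r$; it increases by exactly one along a $D$-step; and it changes by the stated amount under a lift. Bijectivity on each $D$-cycle then follows from Lemma~\ref{lem:direction-orders}.

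\emph{Well-definedness.} By the proof of Lemma~\ref{lem:direction-orders}, the quotient lattice in $\{1,\tau\}$ coordinates is generated by $\alpha=d(u+v,v)$ and $\tau\alpha=d(-v,u)$. Both generators have coordinates divisible by $d$, so \eqref{eq:zcoord} is well defined modulo $d$. For $t$, linearity reduces the question to the two generators.
\begin{itemize}
\item On $\alpha$ one gets $t=d\bigl(c(u+v)+(1-c)v\bigr)=d(cu+v)$. This is divisible by $dQ=r$, because $cu+v\equiv0\pmod Q$ by \eqref{eq:cdef}.
\item On $\tau\alpha$ one gets $t=d\bigl(u-c(u+v)\bigr)$, so it suffices to show $u-c(u+v)\equiv0\pmod Q$. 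Since $\gcd(u,Q)=1$, I will check this after multiplying by $u$:
\[
u^2-cu(u+v)\equiv u^2+v(u+v)=Q\equiv0\pmod Q.
\]
\end{itemize}
Hence $t$ is a well-defined map from the quotient to $\ZZ_r$. This is the only step with any content; the key is to use invertibility of $u$ modulo $Q$ to convert the second congruence into the defining congruence for $c$.

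\emph{Step and lift formulas.} Adding $D=(1,1)$ changes $x-y$ by $0$, so $z$ is unchanged. It changes $t$ by $c+(1-c)=1$, giving $(z,t)(x+1,y+1)=(z,t+1)$. Adding $d(\pi,\kappa)$ changes $x-y$ by $d(\pi-\kappa)\equiv0\pmod d$, so $z$ is preserved. By linearity it changes $t$ by $d\bigl(c\pi+(1-c)\kappa\bigr)$ modulo $r$.

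\emph{Bijectivity.} By Lemma~\ref{lem:direction-orders}, each $D$-cycle consists of the $r$ distinct vertices $p,p+D,\ldots,p+(r-1)D$. Their $t$-values are $t(p),t(p)+1,\ldots,t(p)+r-1$, which are pairwise distinct modulo $r$. So $t$ restricted to any $D$-cycle is a bijection onto $\ZZ_r$, i.e.\ a cyclic coordinate of length $r$.
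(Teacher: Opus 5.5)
Your proof is correct and follows the same route as the paper: the $D$-step raises $t$ by $c+(1-c)=1$, the order-$r$ statement of Lemma~\ref{lem:direction-orders} gives bijectivity on each $D$-cycle, and the lift formula comes from direct substitution into \eqref{eq:zt}. You also check that $t$ vanishes modulo $r$ on the lattice generators $\alpha=d(u+v,v)$ and $\tau\alpha=d(-v,u)$, using $cu\equiv-v$ and the invertibility of $u$ modulo $Q$; the paper leaves this implicit, and it is exactly what makes $t$ a genuine function on quotient vertices, so it is a worthwhile addition.
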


\begin{proof}
A $D$ step preserves $x-y$ and increases $cx+(1-c)y$ by $c+(1-c)=1$. Since $D$ has order $r$ by Lemma~\ref{lem:direction-orders}, the $r$ successive values of $t$ are distinct and exhaust $\ZZ_r$. The lift formula follows by substitution in \eqref{eq:zt}.
\end{proof}

\section{Exceptional Constructions for $d=2$ and $d=3$}\label{sec:exceptional}

Because there are $d$ diagonal cycles and each contains $r$ vertices, the map
\begin{equation}
 (x,y)\longmapsto \vtx{z(x,y)}{t(x,y)}
\label{eq:fine-bijection}
\end{equation}
is a bijection from the quotient to $\ZZ_d\times\ZZ_r$. A convenient lattice representative of $\vtx{z}{t}$ is
\begin{equation}
 (x,y)=\bigl(t+(1-c)z,\ t-cz\bigr).
\label{eq:fine-representative}
\end{equation}
The symbols in this paragraph are local to the exceptional branch and are not used in the universal $d\ge4$ construction. For the small-$d$ proofs, let $g\in\{0,\ldots,Q-1\}$ be the unique residue satisfying
\begin{equation}
 gv\equiv u\pmod Q,
\label{eq:gdef}
\end{equation}
and define the coordinate
\begin{equation}
 \vartheta(x,y)=gx+y\pmod r.
\label{eq:vfine}
\end{equation}
On every natural $V$-cycle, $\vartheta$ increases by one at each $V$ step.

\begin{Lemma}[Small-$d$ arithmetic bounds]\label{lem:small-arithmetic}
For $0<u\le v$ and $\gcd(u,v)=1$,
\begin{equation}
 c>u,\qquad v\le g<Q-v,
\label{eq:cg-bounds}
\end{equation}
and
\begin{equation}
 u(g+1)\equiv Q-v\pmod Q.
\label{eq:ug-identity}
\end{equation}
\end{Lemma}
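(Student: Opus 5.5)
The plan is to work entirely in $\ZZ_Q$, where $Q=u^2+uv+v^2\ge3$. Two facts are used repeatedly. First, $u$ and $v$ are both invertible modulo $Q$, because $\gcd(u,Q)=\gcd(u,v^2)=1$ and symmetrically $\gcd(v,Q)=\gcd(v,u^2)=1$. Second, $Q\equiv0$ gives the working relation $u^2+uv\equiv -v^2\pmod Q$.

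\emph{The congruence \eqref{eq:ug-identity}.} I would prove this first, since it is purely algebraic. Multiply $u(g+1)$ by $v$ and use $gv\equiv u$ from \eqref{eq:gdef}:
\[
 v\,u(g+1)=u(gv)+uv\equiv u^2+uv\equiv -v^2 \pmod Q .
\]
Cancelling the invertible factor $v$ gives $u(g+1)\equiv -v\equiv Q-v$. Comparing with \eqref{eq:cdef}, the invertibility of $u$ also yields $g+1\equiv c\pmod Q$. This relation is not needed for the stated bounds, but I would record it as a consistency check.

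\emph{The bound $c>u$.} Suppose instead that $0\le c\le u$. Then $0\le cu\le u^2$. On the other hand, $Q-v=u^2+uv+v^2-v\ge u^2+uv>u^2$, using $v\ge1$ and $u>0$. So $cu$ lies in $[0,Q)$ and is strictly smaller than $Q-v$. The congruence $cu\equiv -v$ would force $cu$ to equal the least residue $Q-v$, which is impossible. Hence $c>u$.

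\emph{The bound $g\ge v$.} Suppose $0\le g\le v-1$. Then $0\le gv\le v^2-v<Q$. Since $0<u<Q$, the congruence $gv\equiv u$ forces the exact equality $gv=u$, so $v\mid u$. Together with $\gcd(u,v)=1$ and $u\le v$, this gives $u=v=1$. But then $g=u/v=1$, contradicting $g\le v-1=0$. Hence $g\ge v$.

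\emph{The bound $g<Q-v$.} Suppose $Q-v\le g\le Q-1$, and write $g=Q-k$ with $1\le k\le v$. Then $gv\equiv -kv$, so the defining congruence becomes $kv+u\equiv0\pmod Q$. However,
\[
 0<kv+u\le v^2+u<v^2+uv+u^2=Q,
\]
where the last inequality holds because $u^2+uv\ge 2u>u$. This is a contradiction, so $g<Q-v$.

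I expect no real obstacle. The only delicate points are the inequalities $u^2<Q-v$ and $v^2+u<Q$, which turn each congruence into an exact integer equality or contradiction, and the degenerate case $u=v=1$, $Q=3$ in the lower bound for $g$. For that case one can check directly that $c=2>1$ and $g=1\in[1,2)$.
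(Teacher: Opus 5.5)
Your proof is correct and follows essentially the same route as the paper. Each bound is obtained by trapping the relevant integer ($cu$ against $Q-v$, $gv$ against $u$, and $kv+u$ with $k=Q-g$) in $[0,Q)$, which forces an impossible exact equality. The congruence is proved by multiplying $u(g+1)$ by $v$, using $gv\equiv u$ and $Q\equiv 0$, and cancelling the invertible factor $v$.
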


\begin{proof}
If $c\le u$, then $0<cu+v\le u^2+v<Q$, contradicting $Q\mid(cu+v)$. Hence $c>u$.

Suppose $g<v$. Since $Q\mid(gv-u)$ and $|gv-u|<Q$, one has $gv=u$. This forces $v\mid u$ and therefore $u=v=1$. In that case $Q=3$, and the congruence $g\equiv1\pmod3$ with $0\le g<3$ forces $g=1$, contradicting $g<v=1$. Thus $g\ge v$. If $g\ge Q-v$, put $h=Q-g$, so $0<h\le v$. The congruence $gv\equiv u$ gives $Q\mid(hv+u)$, whereas
\begin{equation}
 0<hv+u\le v^2+u<Q,
\end{equation}
again a contradiction. Therefore $g<Q-v$.

Finally,
\begin{equation}
 v\bigl(u(g+1)+v\bigr)\equiv u^2+uv+v^2\equiv0\pmod Q.
\end{equation}
Since $v$ is invertible modulo $Q$, \eqref{eq:ug-identity} follows.
\end{proof}

For a $V\leftarrow D$ switch $B_j$ at seed $s_j$, let $U_j^0$ and $U_j^1$ denote the removed $V$-edge bases $s_j$ and $s_j+D$. Let $B_j^0$ and $B_j^1$ denote the inserted $D$-edge bases $s_j$ and $s_j+V$. The two inserted $D$ connectors in the modified $V$ factor are denoted $\delta_j^0$ and $\delta_j^1$; they join, respectively, the predecessor arcs of $U_j^0,U_j^1$ and the successor arcs of $U_j^0,U_j^1$. The released $V$ connectors in the complement are denoted $b_j^0,b_j^1$. If $X,Y$ are consecutive cuts in a natural $V$-cycle, write $R_p[X>Y]$ for the intervening $V$ path, where $p=x\pmod d$ is the $V$-component label.

\begin{Theorem}[Exceptional $d=2$ construction]\label{thm:d2}
Let $d=2$ and let $0<u\le v$ be coprime. Construct $H_1$ from the horizontal factor by the single $H\leftarrow V$ anchor switch at $(0,0)$. Construct $H_2$ from the vertical factor using three $V\leftarrow D$ switches at
\begin{equation}
 s_0=\vtx{0}{2u},\qquad s_1=\vtx{0}{r-1},\qquad s_2=\vtx{1}{c}.
\label{eq:d2-seeds}
\end{equation}
Equivalently, lattice representatives are $(2u,2u)$, $(r-1,r-1)$, and $(1,0)$. Then $H_1$, $H_2$, and the unused complement $H_3$ are pairwise edge-disjoint Hamiltonian cycles for every reduced ratio $(u,v)$.
\end{Theorem}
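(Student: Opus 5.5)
We follow the template of Theorem~\ref{thm:admissible}, but check the four conditions directly for this three-switch configuration rather than through the $d\ge4$ skeleton.

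\textbf{Step 1: translate the seeds into lattice form.} Using \eqref{eq:fine-representative} with $d=2$:
\begin{itemize}[leftmargin=*]
 \item $\vtx{0}{t}$ corresponds to $(t,t)$;
 \item $\vtx{1}{c}$ corresponds to $(c+1-c,\ c-c)=(1,0)$.
\end{itemize}
This confirms the stated representatives. The first-coordinate residues modulo $2$ are $0$, $r-1\equiv1$ (since $r=2Q$ is even), and $1$.

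\textbf{Step 2: show $H_1$ and $H_2$ are Hamiltonian.} By Lemma~\ref{lem:component-action}, the $V\leftarrow D$ switches join the vertical labels $\{0,1\}$, $\{1,0\}$ and $\{1,0\}$. The vertical factor has only two components, and each switch acts as a transposition on a $V$-cycle. So we cannot simply count merges; we must follow how the cycles are reconnected.

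\begin{itemize}[leftmargin=*]
 \item Parametrize each $V$-cycle by $\vartheta$ from \eqref{eq:vfine}.
 \item List the six removed bases $U_j^0,U_j^1$ with their $\vartheta$-values. Use Lemma~\ref{lem:small-arithmetic}, in particular $c>u$, $v\le g<Q-v$, and \eqref{eq:ug-identity}, to place these values in a definite cyclic order on $R_0$ and $R_1$, uniformly in $(u,v)$.
 \item The six arcs $R_p[X>Y]$ and the connectors $\delta_j^0,\delta_j^1$ then define a permutation on arcs. Showing it is one cycle gives Hamiltonicity of $H_2$.
 \item $H_1$ is Hamiltonian by Lemma~\ref{lem:H1-collapse} with $d=2$, since one switch joins the two labels.
\end{itemize}

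\textbf{Step 3: check distinctness and edge-disjointness.} This is (A2)--(A3).
\begin{itemize}[leftmargin=*]
 \item The $H_1$ switch inserts the vertical bases $(0,0)$ and $(1,0)$.
 \item The seed $(1,0)$ removes the vertical bases $(1,0)$ and $(2,1)$.
 \item We must show that $(0,0)$ is also removed from $H_2$. The seed $(r-1,r-1)$ removes $(r-1,r-1)$ and $(r,r)=(0,0)$, because $D$ has order $r$. This is exactly the anchor-lock role.
 \item The inserted diagonals of $H_2$ are at $s_j$ and $s_j+V$, which are not in $H_1$, since $H_1$ contains no diagonals.
 \item Distinctness of the six removed and six inserted bases follows from comparing their $(z,t)$ coordinates. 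Here we use $0<2u<r-1$, which holds because $2u<2Q-1$ for $Q\ge3$.
\end{itemize}

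\textbf{Step 4: show $H_3$ is connected.} This is the main obstacle. $H_3$ consists of:
\begin{itemize}[leftmargin=*]
 \item diagonals minus the six inserted $D$-edges, which cut the cycles $P_0$ and $P_1$ into arcs;
 \item two horizontals released at $(0,0)$ and $(1,1)$;
 \item six released verticals $b_j^0,b_j^1$.
 \end{itemize}
The plan is as follows.
\begin{enumerate}[leftmargin=*]
 \item Compute the $t$-coordinates of all cut points using Lemma~\ref{lem:fine-coordinate}. A $V$ step changes $(z,t)$ by $(-1,\,1-c)$, and an $H$ step changes it by $(+1,\,c)$.
 \item Order the cut points on $P_0$ and on $P_1$ using $c>u$ and the range bounds from Lemma~\ref{lem:small-arithmetic}.
 \item Write $\Gamma(\Phi)$ explicitly, with about six arc-vertices and eight connectors, and verify that it is connected. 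Equivalently, trace the single closed walk through all arcs.
\end{enumerate}

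The difficulty is that $t$-values such as $2u+1-c$ and $c+c$ must be reduced modulo $r=2Q$. Their cyclic order must be shown to be independent of $(u,v)$, so that no case split on the ratio is needed.

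If the ordering does not come out uniform, the fallback is to split on whether $2c<Q$. The exact tracing in Step~3 then settles each case.
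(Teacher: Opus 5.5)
Your route is the paper's route. You order the six removed vertical bases on $R_0,R_1$ via $\vartheta$ and Lemma~\ref{lem:small-arithmetic}, trace the six-arc successor cycle of $H_2$, then place the six diagonal cuts in $(z,t)$ coordinates and trace the fine incidence cycle of $H_3$. Steps 1--3 are sound, including your observation that $s_1+D=(r,r)=(0,0)$ is what realizes the anchor lock. However, Step 4 as written builds the wrong incidence graph, so the connectivity check you propose would not prove anything.

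\textbf{The gap: your complement inventory contradicts your own Step 3.}
\begin{itemize}
\item The removed verticals $b_1^1=[(0,0),(0,1)]$ (base $s_1+D$) and $b_2^0=[(1,0),(1,1)]$ (base $s_2$) are exactly the two verticals inserted into $H_1$. They are therefore not in the complement.
\item So $H_3$ consists of the $N-6$ uncut diagonals, the released horizontals $a_0^0=[(0,0),(1,0)]$ and $a_0^1=[(0,1),(1,1)]$, and only four verticals $b_0^0,b_0^1,b_1^0,b_2^1$. The count $(N-6)+2+4=N$ is forced for a spanning $2$-factor.
\item With your ``eight connectors'', $\Gamma(\Phi)$ is a strict supergraph of the true incidence graph. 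The arc $P_0[B_1^0>B_2^1]$ (the two vertices $t=0,1$ on cycle $0$) would have degree four. Connectivity of a supergraph certifies nothing about $H_3$.
\item The correct $\Gamma(\Phi)$ is $2$-regular on six arcs with six connectors, and you must show it is a single $6$-cycle.
\end{itemize}

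\textbf{Once repaired, no fallback case split is needed.}
\begin{itemize}
\item The cuts are $B_2^1,B_0^0,B_1^0$ at $t=1,\,2u,\,r-1$ on cycle $0$, and $B_2^0,B_1^1,B_0^1$ at $t=c,\,r-c,\,(2u+1-c)\bmod r$ on cycle $1$.
\item The cyclic order on cycle $1$ uses only $u<c<Q$, so the split on $2c<Q$ never arises.
\item Reading off endpoints gives the single cycle $P_0[B_1^0>B_2^1]$, $P_1[B_0^1>B_2^0]$, $P_0[B_0^0>B_1^0]$, $P_1[B_2^0>B_1^1]$, $P_0[B_2^1>B_0^0]$, $P_1[B_1^1>B_0^1]$. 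The successive connectors are $a_0^0,b_0^1,b_1^0,b_2^1,b_0^0,a_0^1$. This is the paper's certificate.
\item In Step 2, \eqref{eq:ug-identity} gives $\vartheta=0,\,2g+1,\,r-2v$ for $U_1^1,U_2^1,U_0^0$ on $R_0$.
\item On $R_1$ it gives $\vartheta=g,\,r-g-1,\,(g+1-2v)\bmod r$ for $U_2^0,U_1^0,U_0^1$.
\item The bounds $v\le g<Q-v$ fix both orders uniformly in $(u,v)$, and the resulting six-arc trace for $H_2$ closes into one cycle as you anticipated.
\end{itemize}
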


\begin{proof}
The anchor joins the two horizontal components. For $H_2$, the removed vertical edges have cyclic orders
\begin{equation}
 \mathcal R_0=(U_1^1,U_2^1,U_0^0),\qquad
 \mathcal R_1=(U_0^1,U_2^0,U_1^0),
\label{eq:d2-V-orders-compact}
\end{equation}
and the six remaining paths occur in the successor order
\begin{equation}
\begin{aligned}
&R_0[U_1^1>U_2^1],\ R_1[U_1^0>U_0^1],\ R_0[U_2^1>U_0^0],\\
&R_1[U_2^0>U_1^0],\ R_0[U_0^0>U_1^1],\ R_1[U_0^1>U_2^0],
\end{aligned}
\label{eq:d2-H2-ranks}
\end{equation}
with outgoing connectors $\delta_1^1,\delta_0^0,\delta_2^1,\delta_1^0,\delta_0^1,\delta_2^0$, respectively. Lemma~\ref{lem:small-arithmetic} makes every gap in \eqref{eq:d2-V-orders-compact} positive, and the local predecessor/successor rule makes the displayed order one cycle.

The diagonal cut orders are
\begin{equation}
 \mathcal C_0=(B_2^1,B_0^0,B_1^0),\qquad
 \mathcal C_1=(B_2^0,B_1^1,B_0^1).
\label{eq:d2-D-orders-compact}
\end{equation}
After anchor lock suppresses $b_1^1,b_2^0$, the six diagonal paths have successor order
\begin{equation}
\begin{aligned}
&P_0[B_1^0>B_2^1],\ P_1[B_0^1>B_2^0],\ P_0[B_0^0>B_1^0],\\
&P_1[B_2^0>B_1^1],\ P_0[B_2^1>B_0^0],\ P_1[B_1^1>B_0^1],
\end{aligned}
\label{eq:d2-H3-ranks}
\end{equation}
with outgoing connectors $a_0^0,b_0^1,b_1^0,b_2^1,b_0^0,a_0^1$. Hence $H_3$ is one cycle. The exact gap lists and endpoint substitutions are collected in Supplementary File S1, Section~S3; they use only Lemma~\ref{lem:small-arithmetic}. Distinct cut positions and anchor lock give pairwise edge-disjointness.
\end{proof}

\begin{Theorem}[Optimal exceptional $d=3$ construction]\label{thm:d3}
Let $d=3$ and let $0<u\le v$ be coprime. Construct $H_1$ from \eqref{eq:H-skeleton}, and construct $H_2$ from the vertical factor using the two $V\leftarrow D$ switches
\begin{equation}
 s_0=(0,-1),\qquad s_1=(-1,-1)
\label{eq:d3-seeds}
\end{equation}
in the $\{1,\tau\}$ basis. Then $H_1$, $H_2$, and the unused complement $H_3$ are pairwise edge-disjoint Hamiltonian cycles for every reduced ratio $(u,v)$, and the total of four switches is minimum.
\end{Theorem}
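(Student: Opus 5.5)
The plan is to verify the three conditions of Theorem~\ref{thm:admissible} directly for $d=3$, using the fine coordinates $\vtx{z}{t}$ of Lemma~\ref{lem:fine-coordinate}, and then to read off optimality from Lemma~\ref{lem:lowerbound}.

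\emph{Collapse and minimality.} For $d=3$ the skeleton \eqref{eq:H-skeleton} consists of $H\leftarrow V$ at $(0,0)$ and $H\leftarrow D$ at $(2,2)$. By Lemma~\ref{lem:H1-collapse} these join the labels $\{0,1\}$ and $\{2,0\}$, so $H_1$ is Hamiltonian. By Lemma~\ref{lem:component-action}, the seeds \eqref{eq:d3-seeds} have first coordinates $0$ and $-1\equiv2$. They therefore join the vertical labels $\{0,1\}$ and $\{2,0\}$, which is a spanning tree on $\ZZ_3$, so $H_2$ is Hamiltonian as soon as its bases are distinct. Lemma~\ref{lem:lowerbound} forces at least $d-1=2$ switches in each modified factor. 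Hence four switches is the minimum in the model of Definition~\ref{def:local-cost}, and that part is immediate.

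\emph{Disjointness (A1)--(A3).} I would list all bases explicitly.
\begin{itemize}
\item $H_1$ removes the horizontal bases $(0,0),(0,1),(2,2),(3,3)$. It inserts the vertical bases $(0,0),(1,0)$ and the diagonal bases $(2,2),(3,2)$.
\item $H_2$ removes the vertical bases $(0,-1),(1,0),(-1,-1),(0,0)$. It inserts the diagonal bases $(0,-1),(0,0),(-1,-1),(-1,0)$.
\end{itemize}
Thus the two vertical edges borrowed by $H_1$ are exactly deleted from $H_2$. This is the $d=3$ analogue of anchor lock and gives Lemma~\ref{lem:anchor}-type disjointness.

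For the diagonal bases I would compute $\vtx{z}{t}$ by \eqref{eq:zcoord}--\eqref{eq:zt}:
\begin{itemize}
\item on $z=0$, the bases $(2,2),(0,0),(-1,-1)$ have $t=2,0,-1$;
\item on $z=1$, the bases $(3,2),(0,-1)$ have $t=c+2$ and $t=c-1$;
\item on $z=2$, the base $(-1,0)$ has $t=-c$.
\end{itemize}
By Lemma~\ref{lem:small-arithmetic}, $c>u\ge1$, so $c\ge2$. Also $r=3Q\ge9$. Together these make all values distinct modulo $r$, which gives (A2) and (A3). Distinctness of the vertical bases would be checked in the same way with the coordinate $\vartheta$ of \eqref{eq:vfine}.

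\emph{Connectivity of $H_3$ (A4), the main step.} Here $\calD-\Delta(\Phi)$ consists of six arcs: three on $z=0$, two on $z=1$ and one on $z=2$, where the last is the cycle cut once. The complement contains six connectors: the four released horizontal edges, and the two released vertical edges at $(0,-1)$ and $(-1,-1)$, since $(0,0)$ and $(1,0)$ are already used in $H_1$.
\begin{enumerate}
\item I would order the cuts cyclically on each diagonal cycle. On $z=0$ the cyclic order is $-1<0<2$, giving gaps $1,2,r-3$. On $z=1$ the gaps are $3$ and $r-3$.
\item I would translate each connector endpoint into $\vtx{z}{t}$. A horizontal step changes $z$ by $+1$ and $t$ by $c$; a vertical step changes $z$ by $-1$ and $t$ by $1-c$.
\item I would locate each endpoint in its arc, determining which of the two arc ends it is.
\item I would follow the successor rule, exactly as in \eqref{eq:d2-H3-ranks}, and check that the six arcs and six connectors close up into a single cycle.
\end{enumerate}
I expect the main obstacle to be the third step, placing connector endpoints such as $t=c$ or $t=1-c$ into the right arc uniformly in $(u,v)$. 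I would first try to resolve it using only the bounds $2\le c<Q$ from Lemma~\ref{lem:small-arithmetic} and $r=3Q$, splitting off small $Q$ (for instance $u=v=1$, $Q=3$) for a direct check if the uniform inequalities fail.
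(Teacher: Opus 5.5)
Your approach is the paper's own: explicit switch bases, fine coordinates $\vtx{z}{t}$, cut orders on the three diagonal cycles, a successor cycle on the six complement arcs, and Lemma~\ref{lem:lowerbound} for minimality. Every concrete item you record is correct: the removed and inserted bases, the cut coordinates $\{2,0,-1\}$, $\{c+2,c-1\}$, $\{-c\}$, and the two surviving vertical connectors at $(0,-1)$ and $(-1,-1)$. Your spanning-tree argument for $H_2$ is a legitimate shortcut for the paper's four-path $\delta$-cycle.

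The gap is that the step carrying the real content of the theorem is only announced as a four-step plan and never carried out. That step is the connectivity of $\Gamma(\Phi)$, and hence the Hamiltonicity of $H_3$. As written, nothing shows that $H_3$ is one cycle rather than several.

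Theorem~\ref{thm:admissible} also cannot be quoted verbatim here, for two reasons:
\begin{itemize}
\item $(0,-1)$ is not a lift of any reduced seed in \eqref{eq:V-odd} at $d=3$.
\item Definition~\ref{def:anchor} is not literally met, since the base $(1,0)$ is removed as $s_0+D$ rather than by a seed $(1,0)$.
\end{itemize}
What you are really using is its proof: two edge-disjoint Hamiltonian 2-factors leave a spanning 2-factor whose arc contraction is $\Gamma(\Phi)$.

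The missing step closes at once, and the obstacle you anticipate does not arise. Nothing beyond $r=3Q\ge9$ is used, so neither $c\ge2$ nor a small-$Q$ split is needed.

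Since a cut at $t$ deletes the edge from $t$ to $t+1$, the arcs are:
\begin{itemize}
\item $\alpha_1=\{\vtx{0}{0}\}$, $\alpha_2=\{\vtx{0}{1},\vtx{0}{2}\}$, $\alpha_3=\{\vtx{0}{3},\ldots,\vtx{0}{r-1}\}$ on $z=0$;
\item $\beta_1=\{\vtx{1}{c},\vtx{1}{c+1},\vtx{1}{c+2}\}$, $\beta_2=\{\vtx{1}{c+3},\ldots,\vtx{1}{c-1}\}$ on $z=1$;
\item $\gamma$, the path from $\vtx{2}{1-c}$ to $\vtx{2}{-c}$, on $z=2$.
\end{itemize}

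By the parallelogram argument of Lemma~\ref{lem:connector-identities}, a released side-$0$ edge joins the last vertices before its two cuts, and a side-$1$ edge joins the first vertices after them. The anchor endpoints $(1,0)\mapsto\vtx{1}{c}$ and $(0,1)\mapsto\vtx{2}{1-c}$ sit exactly one step after the cuts at $c-1$ and $-c$. Concretely:
\begin{itemize}
\item The horizontal connectors at bases $(0,0),(2,2),(0,1),(3,3)$ join $\alpha_1\beta_1$, $\alpha_2\beta_1$, $\gamma\alpha_2$, $\alpha_3\beta_2$. Their endpoints are $\vtx{0}{0},\vtx{1}{c}$; $\vtx{0}{2},\vtx{1}{c+2}$; $\vtx{2}{1-c},\vtx{0}{1}$; and $\vtx{0}{3},\vtx{1}{c+3}$.
\item The vertical connectors at $(0,-1),(-1,-1)$ join $\beta_2\alpha_1$ and $\alpha_3\gamma$. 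Their endpoints are $\vtx{1}{c-1},\vtx{0}{0}$ and $\vtx{0}{-1},\vtx{2}{-c}$.
\end{itemize}

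Hence $\Gamma(\Phi)$ is the single $6$-cycle $\alpha_1\beta_1\alpha_2\gamma\alpha_3\beta_2\alpha_1$. This is precisely the paper's successor cycle through $a_0^0,a_1^0,a_0^1,b_1^0,a_1^1,b_0^0$. With that written out, your argument is complete.
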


\begin{proof}
The two $H_1$ switches induce the path $0$--$1$--$2$ on horizontal components. The four removed vertical edges of $H_2$ have orders
\begin{equation}
 \mathcal R_0=(U_0^0,U_1^1),\qquad \mathcal R_1=(U_0^1),\qquad \mathcal R_2=(U_1^0),
\label{eq:d3-V-orders}
\end{equation}
and the four paths are joined cyclically by $\delta_0^1,\delta_0^0,\delta_1^1,\delta_1^0$. Thus $H_2$ is Hamiltonian. The inserted diagonal cuts have orders
\begin{equation}
 \mathcal C_0=(B_1^0,B_0^1,A_1^0),\qquad
 \mathcal C_1=(B_0^0,A_1^1),\qquad \mathcal C_2=(B_1^1).
\label{eq:d3-D-orders}
\end{equation}
After anchor lock removes $b_0^1,b_1^1$, the six complement paths are joined in one successor cycle by $a_0^0,a_1^0,a_0^1,b_1^0,a_1^1,b_0^0$. The fine coordinates, positive gaps, and expanded path order are given in Supplementary File S1, Section~S3; direct substitution shows that all cuts are distinct. Hence the three factors are pairwise edge-disjoint Hamiltonian cycles.

Each natural $H$ and $V$ factor begins with three components, so Lemma~\ref{lem:lowerbound} requires at least two switches for each. The construction uses two plus two and is therefore minimum.
\end{proof}

\begin{Lemma}[Two-switch obstruction when $d=2$]\label{lem:d2-two-switch-obstruction}
Applying exactly two valid local switches to a natural direction factor with two oriented components cannot produce one Hamiltonian cycle: whenever the result is a 2-factor, it has two components.
\end{Lemma}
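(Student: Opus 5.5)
The plan is to track \emph{traversal orientation} rather than labels alone. Let $F_X$ be the natural factor ($X=H$ or $X=V$), with $d=2$ components, and orient each natural component along the positive step $X$. Every valid switch on $F_X$ has one of the forms in Definition~\ref{def:switches}. It removes two parallel $X$-edges $[s,s+X]$ and $[s+Y,s+Y+X]$, where $Y\ne X$, and inserts $[s,s+Y]$ and $[s+X,s+X+Y]$. So it always pairs the tail of one removed edge with the tail of the other, and the head with the head. By Lemma~\ref{lem:component-action}, together with the labels of Lemma~\ref{lem:direction-orders}, the two removed edges carry distinct labels modulo $2$. For $H\leftarrow V$ and $H\leftarrow D$ these are rows $y$ and $y+1$; for $V\leftarrow D$ they are columns $x$ and $x+1$. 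Hence the two removed edges always lie in \emph{different natural components}. The only facts used are this label property and the head-to-head/tail-to-tail reconnection rule.

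The first step is a small 2-opt calculation on oriented cycles. Suppose the removed edges $a\to a'$ and $b\to b'$ (with $a=s$, $a'=s+X$, $b=s+Y$, $b'=s+Y+X$) are traversed in the \emph{same} sense along a single cycle, or lie in two distinct cycles. Then inserting $\{a,b\}$ and $\{a',b'\}$ produces exactly one cycle through all their vertices. In the single-cycle case the component count is unchanged; in the two-cycle case they merge, and one of the two pieces is now traversed against $X$. Suppose instead that the two edges lie on one cycle and are traversed in \emph{opposite} senses, say $a\to a'$ and $b'\to b$. Removing them leaves the paths $a'\cdots b'$ and $b\cdots a$, and the inserted edges close each path separately, so the cycle splits into two.

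The second step applies this to the two switches. The first switch removes two edges from different natural components, which are two distinct cycles, so it merges them into one cycle $C$. In $C$, every surviving $X$-edge of label $0$ is traversed in one sense and every surviving $X$-edge of label $1$ in the opposite sense, because exactly one original component was reversed. For the result to be a 2-factor, the second switch must remove two edges still present as $X$-edges of $C$. Any such edge is an untouched natural edge, since inserted edges are never $X$-edges. These two edges again have distinct labels, so they are traversed oppositely in $C$, and the splitting case applies: the result has two components. The case where the first switch produces something that is not a 2-factor does not arise, because the starting factor contains all $X$-edges.

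The main obstacle is justifying that the orientation of surviving $X$-edges in $C$ is determined purely by their label, even when the second switch's seed lies close to the first, so that the parallelograms share vertices. I would handle this by noting that the first switch reverses an entire natural component as a path segment, so every surviving edge of that component flips orientation uniformly. I would then check that validity of the second switch forces its removed edges to be surviving natural edges, which rules out any mixed segment. If one also wants the count to be independent of the order of switches, Lemma~\ref{lem:switch-commutation} is not needed, since both switches act on the same factor and the argument above is sequential.
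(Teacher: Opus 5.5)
Your argument is correct, but it is organized differently from the paper's.

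\textbf{The paper's route.} The paper argues statically, after both switches $S$ and $T$ have been applied. For $d=2$ each switch cuts each oriented natural component exactly once, so each component splits into an arc $S>T$ and an arc $T>S$. A switch's tail--tail connector joins the two arcs ending at its cuts, and its head--head connector joins the two arcs beginning after them. Hence the two copies of $S>T$ are joined only to each other (by the head connector of $S$ and the tail connector of $T$), and likewise the two copies of $T>S$. The arc-incidence graph is therefore two disjoint $2$-cycles.

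\textbf{Your route.} You apply the classical 2-opt rule sequentially. The first switch must merge the two components and reverse one of them wholesale. So in the intermediate Hamiltonian cycle, the traversal sense of each surviving $X$-edge is fixed by its label. The second switch again removes two edges with different labels. It therefore removes two oppositely traversed edges of a single cycle, and so splits it.

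\textbf{Comparison.} The paper's version is symmetric in the two switches and needs no intermediate state. It is also exactly the predecessor/successor bookkeeping later formalized in Lemma~\ref{lem:connector-identities}. Yours requires the uniform-reversal observation. As you note, that observation is immediate from the merge, so the ``obstacle'' you flag is not a real one. In exchange, your proof exposes the mechanism as an orientation obstruction and uses nothing beyond the head--head/tail--tail rule.

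One fact you use silently is that inserted edges never coincide with $X$-edges. This holds because the six signed unit residues are distinct (Proposition~\ref{prop:nonaxis-degenerate}).
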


\begin{proof}
The two cuts on either natural component create the arcs $S>T$ and $T>S$. The predecessor connectors pair the two copies of one arc type, while the successor connectors pair the two copies of the other. The incidence graph is therefore two disjoint 2-cycles. If the switch inventories overlap, the result is not a 2-factor and hence is not Hamiltonian.
\end{proof}

\begin{Theorem}[Optimality of the exceptional $d=2$ branch]\label{thm:d2-optimal}
The four switches in Theorem~\ref{thm:d2} are minimum in the ordered local-switch model.
\end{Theorem}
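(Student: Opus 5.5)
We prove that total cost four is a lower bound; Theorem~\ref{thm:d2} supplies a decomposition achieving it. Work inside the ordered local-switch model of Definition~\ref{def:local-cost}. By Lemma~\ref{lem:switch-commutation} we may assume all switches on the $H$-derived copy come first, so the two factors can be analysed separately. Let $k_1$ and $k_2$ be the numbers of switches applied to the natural $H$ and $V$ factors. It suffices to rule out every pair with $k_1+k_2\le3$ that yields three edge-disjoint Hamiltonian cycles.

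First, Lemma~\ref{lem:direction-orders} with $d=2$ shows that each natural factor has exactly two components. Lemma~\ref{lem:lowerbound} then forces $k_1\ge1$ and $k_2\ge1$. Next, Lemma~\ref{lem:d2-two-switch-obstruction} excludes $k_1=2$ and $k_2=2$. The only remaining cases with total at most three are $(1,1)$, $(1,2)$ and $(2,1)$, and the last two each contain a factor with exactly two switches. So only $(k_1,k_2)=(1,1)$ survives.

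For the case $(1,1)$ I would use the complement. With one switch per factor, the first switch removes two $H$-edges and inserts either two $V$-edges or two $D$-edges. The second switch removes two $V$-edges and inserts two $D$-edges.
\begin{itemize}
\item If the first switch is of type $H\leftarrow V$, edge-disjointness (A3) requires its two inserted vertical edges to be exactly the two vertical edges released by the $V\leftarrow D$ switch.
\item If the first switch is of type $H\leftarrow D$, both factors insert diagonal edges and compete for the same $D$-edges.
\end{itemize}
In either subcase the complement $H_3$ is the natural $D$ factor with two or four diagonal edges removed, reconnected by two released horizontal and at most two released vertical connectors. I would describe the cut positions on the two diagonal cycles using the fine coordinates of Lemma~\ref{lem:fine-coordinate}. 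I would then run the same arc-pairing argument as in Lemma~\ref{lem:d2-two-switch-obstruction}. The two cuts create arcs of types $S>T$ and $T>S$, and the connector pairs come from opposite edges of a unit parallelogram. Hence the predecessor and successor connectors each pair like arc types, and the complement has at least two components. In the subcase where the vertical edges must coincide, I would also check that the required coincidence contradicts Lemma~\ref{lem:anchor}-type placement. This also covers failures of the 2-factor property.

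The main obstacle is this last step: making the complement analysis for $(1,1)$ uniform in $(u,v)$. A single switch in each factor removes only two diagonal edges in total, so the incidence graph has at most four arcs. I would first try to show directly that each released connector joins an arc to the image of that same arc shifted by $H$ or $V$. That would make the incidence graph a union of 2-cycles, exactly as in the proof of Lemma~\ref{lem:d2-two-switch-obstruction}. If that fails, I would fall back on parity: both connectors of a parallelogram switch change the coarse label $z$ in the same way (Lemma~\ref{lem:coarse}), so they preserve a two-colouring of the arcs.
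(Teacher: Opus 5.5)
Your case structure is exactly the paper's. Lemma~\ref{lem:lowerbound} forces at least one switch per factor, Lemma~\ref{lem:d2-two-switch-obstruction} eliminates the distributions $1+2$ and $2+1$, and cost two splits into an $H\leftarrow V$ subcase and an $H\leftarrow D$ subcase. Your arc-type pairing settles the $H\leftarrow D$ subcase as the paper does, once you state the fact it rests on. The two diagonal cuts of one switch have bases differing by $H$ or by $V$, so for $d=2$ they lie on different diagonal cycles. Disjointness makes all four cuts distinct, so each diagonal cycle carries one $A$-cut and one $B$-cut. Then $a^0$ and $b^1$ join the two $[B>A]$ arcs, $a^1$ and $b^0$ join the two $[A>B]$ arcs, and the complement is two 2-cycles.

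The gap is the $H\leftarrow V$ subcase. You correctly reduce it to showing that the two vertical edges inserted into $H_1$ cannot be exactly the two removed from $H_2$. However, ``contradicts Lemma~\ref{lem:anchor}-type placement'' is not an argument: Lemma~\ref{lem:anchor} only describes the paper's construction, which uses two $V\leftarrow D$ switches precisely because one cannot do the job. Your claim that the arc pairing works ``in either subcase'' is also false here. Only the $V\leftarrow D$ switch cuts the diagonal factor, so each diagonal cycle carries a single cut and there are no two arc types to pair.

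The missing step is a direct quotient computation. The $H\leftarrow V$ switch at $s$ inserts vertical bases $\{s,s+H\}$, and the $V\leftarrow D$ switch at $s'$ removes $\{s',s'+D\}$. Equality forces $H\equiv D$ (so $V\equiv0$) or $H+D\equiv0$. Since $H+D=1+\rho$, the ideal $(\alpha)$ would then contain an element of norm $1$ or $3$. This is impossible because $\Norm(\alpha)=4Q\ge12$; equivalently, the six signed unit residues are distinct by Proposition~\ref{prop:nonaxis-degenerate}.

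Finally, drop your parity fallback. For $d=2$ every released connector changes $z$, so the $z$-colouring is merely a proper two-colouring of $\Gamma(\Phi)$ and says nothing about connectivity (compare Proposition~\ref{prop:coarse-insufficient}). Also, in the $H\leftarrow D$ subcase four diagonal edges are cut, not two.
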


\begin{proof}
A cost-two decomposition must use one switch in each of the first two factors. Up to direction symmetry, either an $H\leftarrow V$ switch is paired with a $V\leftarrow D$ switch, or an $H\leftarrow D$ switch is paired with a $V\leftarrow D$ switch. In the first case, edge-disjointness would force the two vertical bases inserted by the first switch to equal those removed by the second, implying $H\equiv\pm D$ in the quotient. This would place an element of norm $1$ or $3$ in $(\alpha)$, impossible because $\Norm(\alpha)=4Q\ge12$. In the second case, the released predecessor and successor connectors split the complement into two 2-cycles. Thus cost two is impossible. A cost-three distribution is $1+2$ or $2+1$, and Lemma~\ref{lem:d2-two-switch-obstruction} excludes the factor receiving two switches. Therefore the minimum is four. The lower-budget symmetry classification is summarized in Supplementary File S1, Section~S3.3.
\end{proof}

The corresponding direction-edge counts and all expanded exceptional endpoint certificates are reported in Supplementary File S1, Section~S3.

Index the switches of $H_1$ by $A_0,A_1,\ldots,A_{d-2}$, where $A_0$ is the $H\leftarrow V$ anchor and
\begin{equation}
 A_i:\quad H\leftarrow D\text{ at }(d-1,d-i),\qquad 1\le i\le d-2.
\end{equation}
Index the $V\leftarrow D$ switches of $H_2$ by $B_0,B_1,\ldots,B_{d-2}$. The anchor seeds are
\begin{equation}
 s_{B_0}=(-1,-1),\qquad s_{B_1}=(1,0).
\end{equation}
For $2\le i\le d-2$, define
\begin{equation}
 \varepsilon_i=\begin{cases}
 1,&d\text{ odd and }i=d-2,\\
 0,&\text{otherwise},
 \end{cases}
\label{eq:epsilon}
\end{equation}
and write the canonical lifted seed as
\begin{equation}
 s_{B_i}=(i+dP_i,-\varepsilon_i+dK_i).
\label{eq:canonical-B}
\end{equation}
Thus \eqref{eq:canonical-B} lifts $(i,0)$ except at the unique odd terminal index, where it lifts $(d-2,-1)$ exactly as required by \eqref{eq:V-odd}.
For $i\ge1$, let $A_i^0,A_i^1$ denote the two diagonal edges inserted by $A_i$; let $B_i^0,B_i^1$ denote the corresponding diagonal edges inserted by $B_i$. The anchor $A_0$ inserts no diagonal edge. Lowercase symbols $a_i^s,b_i^s$ denote the released horizontal or vertical connector on side $s$. Under anchor lock, $b_0^1$ and $b_1^0$ are the two vertical edges used by $H_1$ and therefore are not complement connectors.

For cuts $X_1,\ldots,X_q$ on one diagonal cycle, the notation
\begin{equation}
 X_1<_D X_2<_D\cdots<_D X_q<_D X_1
\end{equation}
means that the positive $D$-gaps between successive cuts are all nonzero and sum to $r$.

\begin{Lemma}[Common fine cut coordinates]\label{lem:common-cuts}
For $2\le i\le d-2$, define
\begin{equation}
 \lambda_i\equiv cP_i+(1-c)K_i\pmod Q,
 \qquad e_i\equiv\lambda_i-1\pmod Q,
\label{eq:lambda-e}
\end{equation}
with $0\le\lambda_i,e_i<Q$. For $1\le i\le d-2$, the cuts inserted by $A_i$ have coordinates
\begin{align}
 A_i^0&:\ (z,t)=\bigl(i-1,d-i+c(i-1)\bigr),\label{eq:A0coord}\\
 A_i^1&:\ (z,t)=\bigl(i,d-i+ci\bigr).\label{eq:A1coord}
\end{align}
Whenever $\varepsilon_i=0$, the cuts inserted by $B_i$ have coordinates
\begin{align}
 B_i^0&:\ (z,t)=\bigl(i,ci+d\lambda_i\bigr),\label{eq:B0coord}\\
 B_i^1&:\ (z,t)=\bigl(i-1,ci+d\lambda_i+1-c\bigr).\label{eq:B1coord}
\end{align}
The anchor cuts are
\begin{align}
 B_0^0&=(0,r-1),& B_0^1&=(d-1,-c),\nonumber\\
 B_1^0&=(1,c),& B_1^1&=(0,1).\label{eq:anchor-cuts}
\end{align}
If $d\ge5$ is odd and $k=d-2$, the shifted terminal seed instead gives
\begin{align}
 B_k^0&:\ (z,t)=\bigl(d-1,c(d-1)-1+d\lambda_k\bigr),\label{eq:odd-B0}\\
 B_k^1&:\ (z,t)=\bigl(d-2,c(d-2)+d\lambda_k\bigr).\label{eq:odd-B1}
\end{align}
\end{Lemma}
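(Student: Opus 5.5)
The whole statement is a direct substitution into the coordinate maps \eqref{eq:zcoord} and \eqref{eq:zt}. I would first fix what "coordinates of an inserted cut" means. A switch inserts a diagonal edge $[w,w+D]$, and I record that cut by the fine coordinates $(z,t)(w)$ of its base $w$. From Definition~\ref{def:switches}, an $H\leftarrow D$ switch at $s$ inserts diagonal edges with bases $s$ and $s+H$. A $V\leftarrow D$ switch at $s$ inserts bases $s$ and $s+V$, consistent with the convention $B_j^0=s_j$, $B_j^1=s_j+V$ stated before Theorem~\ref{thm:d2}.

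For $A_i$ the superscripts need a choice. Since $A_i^0$ has $z=i-1$ and $A_i^1$ has $z=i$, I would take $A_i^0=s_{A_i}=(d-1,d-i)$ and $A_i^1=s_{A_i}+H=(d,d-i)$. The checks then run as follows.
\begin{itemize}
\item For $A_i^0$: $z=(d-1)-(d-i)=i-1$ and $t=c(d-1)+(1-c)(d-i)=d-i+c(i-1)$.
\item For $A_i^1$: $z\equiv i$ and $t=cd+(1-c)(d-i)=d-i+ci$.
\end{itemize}
Both match \eqref{eq:A0coord}--\eqref{eq:A1coord}. No reduction modulo $r$ is needed beyond reading these residues in $\ZZ_r$.

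For $B_i$ with $\varepsilon_i=0$, the seed is $(i+dP_i,dK_i)$. Its $z$-value is $i$, and by Lemma~\ref{lem:fine-coordinate}
\[
t=ci+d\bigl(cP_i+(1-c)K_i\bigr).
\]
The one genuinely non-mechanical point is replacing $cP_i+(1-c)K_i$ by its residue $\lambda_i$ modulo $Q$. This is legitimate because multiplying by $d$ turns a change by a multiple of $Q$ into a change by a multiple of $dQ=r$, so $t$ is unchanged in $\ZZ_r$. The second cut $B_i^1=s+V$ decreases $z$ by one and adds $1-c$ to $t$, giving \eqref{eq:B1coord}. The odd terminal case $k=d-2$ works the same way from the seed $(d-2+dP_k,-1+dK_k)$:
\begin{itemize}
\item $B_k^0$ has $z=d-1$ and $t=c(d-2)-(1-c)+d\lambda_k=c(d-1)-1+d\lambda_k$;
\item $B_k^1$, obtained by adding $V$, has $z=d-2$ and $t=c(d-1)-1+(1-c)+d\lambda_k=c(d-2)+d\lambda_k$.
\end{itemize}
These match \eqref{eq:odd-B0}--\eqref{eq:odd-B1}.

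For the anchors I would use the canonical representatives fixed by Definition~\ref{def:anchor}.
\begin{itemize}
\item $B_0^0=(-1,-1)$ gives $z=0$ and $t=-c-(1-c)=-1\equiv r-1$.
\item $B_0^1=(-1,0)$ gives $z=-1\equiv d-1$ and $t=-c$.
\item $B_1^0=(1,0)$ gives $(1,c)$.
\item $B_1^1=(1,1)$ gives $(0,c+1-c)=(0,1)$.
\end{itemize}
This reproduces \eqref{eq:anchor-cuts}.

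The main obstacle is bookkeeping rather than mathematics. It has two parts: pinning down the $0/1$ superscript convention for $A_i$, where the $z$-values in the statement force the choice above, and justifying that only $\lambda_i$ modulo $Q$ matters. For the latter I would cite Lemma~\ref{lem:fine-coordinate} together with $r=dQ$ from Lemma~\ref{lem:direction-orders}, noting that $z$ is a well-defined residue modulo $d$ and $t$ a well-defined residue modulo $r$ on the quotient, by the bijection \eqref{eq:fine-bijection}. The quantity $e_i$ defined in \eqref{eq:lambda-e} plays no role in this lemma, and I would simply note that it is recorded for later use.
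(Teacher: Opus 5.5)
Your proposal is correct and follows the paper's proof. Both arguments substitute the inserted $D$-edge bases ($s$ and $s+H$ for $A_i$, $s$ and $s+V$ for $B_i$) into \eqref{eq:zcoord}--\eqref{eq:zt}; the lift contributes $d\lambda_i$ to $t$, which is legitimate because $r=dQ$, and the odd terminal offset $(0,-1)$ raises $z$ by one and shifts $t$ by $c-1$, exactly as in your explicit checks of all the stated coordinates.
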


\begin{proof}
Every formula follows by substituting the corresponding inserted $D$-edge base from Definition~\ref{def:switches} into \eqref{eq:zcoord}--\eqref{eq:zt}. For a regular $B_i$ seed, the lift contributes $d\lambda_i$ to $t$. At the odd terminal seed, the additional offset $(0,-1)$ raises $z$ by one and changes $t$ by $c-1$, giving \eqref{eq:odd-B0}--\eqref{eq:odd-B1}. The anchors follow by direct substitution.
\end{proof}

\begin{Lemma}[Even phase-band reduction]\label{lem:band-reduction}
Assume $d\ge4$ is even. All cuts occur in the cyclic orders
\begin{align}
 C_0&=(B_1^1,A_1^0,B_0^0),\label{eq:C0}\\
 C_1&=(B_1^0,A_2^0,A_1^1,B_2^1),\label{eq:C1}\\
 C_z&=\begin{cases}
 (A_z^1,B_z^0,B_{z+1}^1,A_{z+1}^0),&z\text{ even},\\
 (A_z^1,B_{z+1}^1,B_z^0,A_{z+1}^0),&z\text{ odd},
 \end{cases}\quad 2\le z\le d-3,\label{eq:Cinterior}\\
 C_{d-2}&=(A_{d-2}^1,B_{d-2}^0),\qquad
 C_{d-1}=(B_0^1),\label{eq:Cterminal}
\end{align}
if and only if
\begin{equation}
 \lambda_2\ne0,
 \qquad e_z\le e_{z+1}\ (z\text{ even}),
 \qquad e_{z+1}<e_z\ (z\text{ odd})
\label{eq:zigzag}
\end{equation}
for every $2\le z\le d-3$.
\end{Lemma}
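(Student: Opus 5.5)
The plan is a direct bookkeeping argument built on Lemma~\ref{lem:common-cuts}. Since $d$ is even, every $\varepsilon_i=0$, so every $B_i$ with $i\ge2$ has the regular coordinates \eqref{eq:B0coord}--\eqref{eq:B1coord}. The first step is a tally of the cuts on each diagonal cycle $z$, read off from the $z$-components in Lemma~\ref{lem:common-cuts}:
\begin{itemize}[leftmargin=*]
\item $A_i^0$ lies on cycle $i-1$ and $A_i^1$ on cycle $i$, for $1\le i\le d-2$;
\item $B_i^0$ lies on cycle $i$ and $B_i^1$ on cycle $i-1$, for $2\le i\le d-2$;
\item the anchor cuts are as in \eqref{eq:anchor-cuts}.
\end{itemize}
This tally shows that the cut multisets are exactly those listed in \eqref{eq:C0}--\eqref{eq:Cterminal}. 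So the only question is the cyclic order of the cuts along each $D$-cycle. By Lemma~\ref{lem:fine-coordinate}, that order is the cyclic order of their $t$-values in $\ZZ_r$, where $r=dQ$, and it is insensitive to a common translation of $t$.

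The interior cycles $2\le z\le d-3$ are the main step. I translate by $-cz$ and write $\lambda_i=e_i+1$, with the representative $1\le\lambda_i\le Q$ modulo $Q$. The relative positions are then:
\begin{itemize}[leftmargin=*]
\item $A_{z+1}^0$ at $d-z-1$ and $A_z^1$ at $d-z$; these lie in $[2,d-2]$ and are adjacent;
\item $B_z^0$ at $d(e_z+1)$;
\item $B_{z+1}^1$ at $d(e_{z+1}+1)+1$.
\end{itemize}
Both $B$-positions lie in the integer window $[d,\,r+1]$. Reduced modulo $r$, they fall in the open arc from $d-z$ around to $r+d-z-1$, including the wraparound values $0$ and $1$, which arise when $e=Q-1$. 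The boundary cases need a short check: $d(e+1)\ge d>d-z$, and $r+1<r+d-z-1$ because $d-z-1\ge2$.

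Consequently, starting the cyclic order at $A_z^1$, both $B$ cuts precede $A_{z+1}^0$. The order is therefore determined solely by comparing $d(e_z+1)$ with $d(e_{z+1}+1)+1$ in this unwrapped window. We have $d(e_z+1)<d(e_{z+1}+1)+1$ if and only if $e_z\le e_{z+1}$; moreover the two positions never coincide, since $d\nmid 1$. This gives the even-$z$ order when $e_z\le e_{z+1}$ and the odd-$z$ order exactly when $e_{z+1}<e_z$, which is the zigzag condition in \eqref{eq:zigzag}.

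For $C_1$, I translate by $-c$. The relative positions become $B_1^0\mapsto0$, $A_2^0\mapsto d-2$, $A_1^1\mapsto d-1$, and $B_2^1\mapsto d\lambda_2+1$. With the representative $\lambda_2\in[0,Q-1]$, the required order \eqref{eq:C1} holds if and only if $d\lambda_2+1>d-1$, i.e. $\lambda_2\ne0$. When $\lambda_2=0$, the cut $B_2^1$ lands at $1$, strictly between $B_1^0$ and $A_2^0$, which violates \eqref{eq:C1}. The remaining cycles impose no condition:
\begin{itemize}[leftmargin=*]
\item $C_0$ has cuts $1$, $d-1$, $r-1$, which are in the stated order unconditionally, since $d\ge4$;
\item $C_{d-2}$ holds two cuts, so any cyclic order is as stated once they are distinct, and $d(\lambda+1)$ with $\lambda\in[0,Q-1]$ cannot equal $d-(d-2)=2$ modulo $r$;
\item $C_{d-1}$ holds one cut.
\end{itemize}

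Combining the cases, each cyclic order in the lemma is individually equivalent to exactly one of the displayed inequalities, with the $C_0$, $C_{d-2}$ and $C_{d-1}$ orders holding automatically. This yields the ``if and only if''. The main obstacle I expect is the careful handling of representatives and wraparound. In particular, I must reconcile the range $0\le e<Q$ with $\lambda=e+1$, check that $e=Q-1$ sends $B$ cuts to $t\equiv0,1$ without crossing the $A$ cuts, and confirm the strictness versus non-strictness of each inequality.
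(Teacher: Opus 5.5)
Your proposal is correct and follows essentially the same route as the paper: you locate every cut on its diagonal cycle via Lemma~\ref{lem:common-cuts}, reduce each interior cycle $2\le z\le d-3$ to comparing $d e_z$ with $d e_{z+1}+1$ (which yields the weak/strict zigzag), and observe that $C_1$ forces exactly $\lambda_2\ne0$ while $C_0$, $C_{d-2}$, $C_{d-1}$ impose no condition. Your translation by $-cz$ with representative $\lambda_i=e_i+1\in[1,Q]$ and the unwrapped window $[d,r+1]$ is a repackaging of the paper's distances $E_z=de_z+z$, $F_z=de_{z+1}+z+1$ and $r-1$ measured from $A_z^1$, and your explicit per-cycle tally of cuts spells out a step the paper leaves implicit.
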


\begin{proof}
For $C_0$, the positive gaps are $d-2$, $r-d$, and $2$. For $C_1$, starting at $B_1^0$, the first two gaps are $d-2$ and $1$; if $\lambda_2\in\{1,\ldots,Q-1\}$, the remaining gaps are $d(\lambda_2-1)+2$ and $d(Q-\lambda_2)-1$, both positive. Conversely, $\lambda_2=0$ destroys the stated order.

For $2\le z\le d-3$, measure positive distance from $A_z^1$. By Lemma~\ref{lem:common-cuts}, the cuts $B_z^0$, $B_{z+1}^1$, and $A_{z+1}^0$ occur at distances
\begin{equation}
 E_z=de_z+z,\qquad F_z=de_{z+1}+z+1,\qquad r-1.
\end{equation}
Both $E_z$ and $F_z$ lie in $\{1,\ldots,r-2\}$. Hence $E_z<F_z$ exactly when $e_z\le e_{z+1}$, whereas $F_z<E_z$ exactly when $e_{z+1}<e_z$. On cycle $d-2$, the two cuts are distinct because their coordinate difference is congruent to $-2\pmod d$, and cycle $d-1$ contains only $B_0^1$. Thus \eqref{eq:zigzag} is necessary and sufficient.
\end{proof}

\begin{Lemma}[Odd phase-band reduction]\label{lem:odd-band-reduction}
Assume $d\ge5$ is odd. The cuts occur in the orders $C_0$ and $C_1$ of \eqref{eq:C0}--\eqref{eq:C1}, in the orders \eqref{eq:Cinterior} for $2\le z\le d-4$, and in the terminal orders
\begin{align}
 C_{d-3}^{\rm odd}&=(A_{d-3}^1,B_{d-3}^0,A_{d-2}^0),\label{eq:odd-C1}\\
 C_{d-2}^{\rm odd}&=(A_{d-2}^1,B_{d-2}^1),\qquad
 C_{d-1}^{\rm odd}=(B_{d-2}^0,B_0^1).\label{eq:odd-C2}
\end{align}
These orders hold if and only if
\begin{equation}
 \lambda_2\ne0,
 \qquad e_z\le e_{z+1}\ (z\text{ even}),
 \qquad e_{z+1}<e_z\ (z\text{ odd})
\label{eq:odd-zigzag}
\end{equation}
for every $2\le z\le d-4$.
\end{Lemma}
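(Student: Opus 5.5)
The plan is to mirror the proof of Lemma~\ref{lem:band-reduction}. Cycles $0$ through $d-4$ are handled exactly as in the even case, and new work is needed only on the three terminal cycles, where the shifted seed $B_{d-2}$ of \eqref{eq:odd-B0}--\eqref{eq:odd-B1} changes which cuts appear. First I would sort every cut by its coarse label using Lemma~\ref{lem:common-cuts}. The $A_i^0$ lie on cycle $i-1$ and the $A_i^1$ on cycle $i$. The regular $B_i^0$ lie on cycle $i$ and the $B_i^1$ on cycle $i-1$. The anchors sit at $B_0^0,B_1^1$ on cycle $0$, $B_1^0$ on cycle $1$, and $B_0^1$ on cycle $d-1$. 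The terminal cuts $B_{d-2}^0$ and $B_{d-2}^1$ lie on cycles $d-1$ and $d-2$. This membership count reproduces the cut sets in \eqref{eq:C0}, \eqref{eq:C1}, \eqref{eq:Cinterior}, \eqref{eq:odd-C1} and \eqref{eq:odd-C2}. In particular, cycle $d-3$ loses its $B_{d-2}^1$ cut because that cut has moved to cycle $d-2$.

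For $C_0$ and $C_1$ the gap computations are word-for-word those of the even lemma. They use only $A_1,A_2,B_0,B_1,B_2$, and $B_2$ is regular because $d\ge5$ gives $d-2\ne2$. So they hold if and only if $\lambda_2\ne0$. For $2\le z\le d-4$ both $B_z$ and $B_{z+1}$ are regular, since $z+1\le d-3$. Measuring from $A_z^1$ then gives the same distances $E_z=de_z+z$, $F_z=de_{z+1}+z+1$ and $r-1$. The same comparison therefore yields exactly the zigzag conditions of \eqref{eq:odd-zigzag} on this range.

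The remaining work, and the main obstacle, is to show that the terminal orders hold unconditionally, so that no extra constraint enters the equivalence. On cycle $d-3$, measured from $A_{d-3}^1$, the cut $B_{d-3}^0$ sits at distance $de_{d-3}+d-3\in\{d-3,\dots,r-3\}$ and $A_{d-2}^0$ sits at $r-1$. This gives the order \eqref{eq:odd-C1} with positive gaps. On cycle $d-2$, the two cuts $A_{d-2}^1$ at $t=2+c(d-2)$ and $B_{d-2}^1$ at $t=c(d-2)+d\lambda_{d-2}$ differ by $d\lambda_{d-2}-2$. This is $\equiv-2\pmod d$, hence nonzero modulo $r$ because $d\ge5$, and any two distinct cuts form a valid cyclic order. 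On cycle $d-1$, the cut $B_0^1$ has $t=-c$ and $B_{d-2}^0$ has $t=c(d-1)-1+d\lambda_{d-2}$. Their difference is $cd-1+d\lambda_{d-2}\equiv-1\pmod d$, so they are again distinct.

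The delicate point is the bookkeeping of the representative \eqref{eq:fine-representative} on cycle $d-1$. The anchor $B_0^1$ is written with $z=d-1$, while the formula for $B_{d-2}^0$ comes from the shifted seed, so both $t$-values must be taken relative to the same base point of the cycle. I would verify this by recomputing both cuts directly from \eqref{eq:zt} before comparing them. With that settled, the conditions coincide with \eqref{eq:odd-zigzag}, and both directions of the equivalence follow.
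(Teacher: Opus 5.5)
Your proposal is correct and follows essentially the paper's proof: $C_0$, $C_1$ and the range $2\le z\le d-4$ are inherited from the even lemma, and the three terminal cycles are settled by direct coordinate differences from Lemma~\ref{lem:common-cuts}; writing the $B_{d-3}^0$ distance as $de_{d-3}+d-3$ even makes \eqref{eq:odd-C1} hold unconditionally, whereas the paper first derives $\lambda_{d-3}\ne0$ from the hypotheses. The base-point concern on cycle $d-1$ is unnecessary, because $t=cx+(1-c)y \bmod r$ is already well defined on the quotient by the choice of $c$, so the values in \eqref{eq:odd-B0} and \eqref{eq:anchor-cuts} can be compared directly.
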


\begin{proof}
The calculations for $C_0,C_1$ and cycles $2,\ldots,d-4$ are identical to the even case. We derive all three terminal orders explicitly. Put $k=d-2$.

On cycle $d-3$, the three coordinates, written from $A_{d-3}^1$, are
\begin{align}
 t(A_{d-3}^1)&=c(d-3)+3,\nonumber\\
 t(B_{d-3}^0)&=c(d-3)+d\lambda_{d-3},\nonumber\\
 t(A_{d-2}^0)&=c(d-3)+2.
\end{align}
The hypotheses imply $\lambda_{d-3}\ne0$: for $d=5$ this is $\lambda_2\ne0$, while for $d\ge7$ the last odd comparison in \eqref{eq:odd-zigzag} rules out $e_{d-3}=Q-1$, equivalently $\lambda_{d-3}=0$. Hence the three successive positive gaps in \eqref{eq:odd-C1} are
\begin{equation}
 d\lambda_{d-3}-3,\qquad d(Q-\lambda_{d-3})+2,\qquad 1.
\label{eq:odd-terminal-gaps-1}
\end{equation}
They are positive and sum to $dQ=r$.

On cycle $d-2$, set $\eta=\lambda_k$. The successive positive gaps for
$A_k^1<_D B_k^1<_D A_k^1$ are
\begin{equation}
 (g_{d-2}^{0},g_{d-2}^{1})=
 \begin{cases}
 (r-2,2),&\eta=0,\\
 (d\eta-2,d(Q-\eta)+2),&1\le\eta<Q.
 \end{cases}
\label{eq:odd-terminal-gaps-2}
\end{equation}
Both entries are positive and their sum is $r$.

Finally, put $\mu\equiv c+\lambda_k\pmod Q$, $0\le\mu<Q$. From \eqref{eq:odd-B0} and \eqref{eq:anchor-cuts}, the successive positive gaps for
$B_k^0<_D B_0^1<_D B_k^0$ are
\begin{equation}
 (g_{d-1}^{0},g_{d-1}^{1})=
 \begin{cases}
 (1,r-1),&\mu=0,\\
 (d(Q-\mu)+1,d\mu-1),&1\le\mu<Q.
 \end{cases}
\label{eq:odd-terminal-gaps-3}
\end{equation}
Again both entries are positive and sum to $r$. Thus the three terminal orders are proved without an enumeration or an implicit choice of representatives. The only remaining conditions are exactly \eqref{eq:odd-zigzag}.
\end{proof}

\section{Universal Alternating Lift}\label{sec:universal-lift}

Assume from now on that $d\ge4$. For every non-anchor switch index $2\le i\le d-2$, define
\begin{equation}
 \theta_i=
 \begin{cases}
  1,&i\text{ even},\\
  2,&i\text{ odd},
 \end{cases}
 \qquad P_i=K_i=\theta_i.
\label{eq:universal-phase}
\end{equation}
Together with \eqref{eq:epsilon}, the resulting seed is
\begin{equation}
 s_{B_i}=\bigl(i+d\theta_i,-\varepsilon_i+d\theta_i\bigr).
\label{eq:universal-seed}
\end{equation}
Thus the lift translates each reduced seed by either $dD$ or $2dD$ in the diagonal direction. The choice depends only on the parity of the switch index and not on the reduced ratio $(u,v)$.

\begin{Lemma}[Ratio-independent phase bands]\label{lem:universal-bands}
For every coprime $0<u\le v$ and every $d\ge4$, the lift \eqref{eq:universal-phase} satisfies
\begin{equation}
 \lambda_i=\theta_i,
 \qquad
 e_i=
 \begin{cases}
 0,&i\text{ even},\\
 1,&i\text{ odd}.
 \end{cases}
\label{eq:universal-bands}
\end{equation}
In particular, $\lambda_2=1\ne0$.
\end{Lemma}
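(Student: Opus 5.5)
The plan is to substitute the universal phases \eqref{eq:universal-phase} directly into the definitions \eqref{eq:lambda-e}. Since $P_i=K_i=\theta_i$, the linear form collapses:
\[
 cP_i+(1-c)K_i=c\theta_i+(1-c)\theta_i=\theta_i .
\]
The coefficient $c$, which carries all dependence on the reduced ratio through \eqref{eq:cdef}, therefore cancels identically. Consequently $\lambda_i\equiv\theta_i\pmod Q$ for every $2\le i\le d-2$, regardless of $(u,v)$. This is exactly the "equal-coordinate" mechanism: the lift is a multiple of $D$, and by Lemma~\ref{lem:fine-coordinate} the fine coordinate $t$ advances by exactly one per $D$ step. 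So a lift by $d\theta_iD$ shifts $t$ by $d\theta_i$, i.e.\ it contributes $\theta_i$ in units of $d$.

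The one point needing care is the normalization $0\le\lambda_i,e_i<Q$ in \eqref{eq:lambda-e}. To conclude $\lambda_i=\theta_i$ as integers, rather than merely modulo $Q$, I will need $\theta_i\in\{1,2\}$ to lie in $\{0,\dots,Q-1\}$, that is, $Q\ge3$. This holds because $0<u\le v$ with $\gcd(u,v)=1$ gives $Q=u^2+uv+v^2\ge3$, with equality only at $u=v=1$. Even in that boundary case $Q=3>2$, so the representatives $1$ and $2$ stay distinct and reduced. This is precisely where the non-axis hypothesis enters, in contrast with Proposition~\ref{prop:axis-boundary}, where $Q=1$ collapses the phases.

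Finally, $e_i\equiv\lambda_i-1$ gives $e_i=0$ for even $i$ (where $\theta_i=1$) and $e_i=1$ for odd $i$ (where $\theta_i=2$). Both values lie in $[0,Q)$ because $Q\ge3$. Since $2$ is even, $\lambda_2=\theta_2=1\ne0$.

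I expect no real obstacle. The only step demanding attention is the range check $Q\ge3$, which makes the modular identities into equalities of canonical representatives. I will state this explicitly so that the later zigzag conditions \eqref{eq:zigzag} and \eqref{eq:odd-zigzag} can be checked from the values $e_i\in\{0,1\}$.
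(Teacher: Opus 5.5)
Your proposal is correct and follows the paper's proof essentially step for step: substituting $P_i=K_i=\theta_i$ cancels $c$ to give $\lambda_i\equiv\theta_i\pmod Q$, and the bound $Q\ge3$ (tight at $u=v=1$) makes $1$ and $2$ the exact representatives, so $e_i\in\{0,1\}$ and $\lambda_2=1$. Your observation that the lift is a multiple of $D$, so that $t$ shifts by exactly $d\theta_i$ by Lemma~\ref{lem:fine-coordinate}, is a useful geometric reading of the same cancellation.
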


\begin{proof}
Because $u,v$ are positive, $Q=u^2+uv+v^2\ge3$. Substituting $P_i=K_i=\theta_i$ into \eqref{eq:lambda-e} gives
\begin{equation}
 \lambda_i\equiv c\theta_i+(1-c)\theta_i
 =\theta_i\pmod Q.
\end{equation}
The integers $1$ and $2$ lie in $\{0,1,\ldots,Q-1\}$, so this congruence already gives the exact representatives. They are distinct and nonzero even at the extremal value $Q=3$, which occurs for the coprime positive pair $u=v=1$; see Remark~\ref{rem:symmetric}. Hence $e_i=\lambda_i-1$ is zero at even indices and one at odd indices, and $\lambda_2=1$. No avoidance of any additional residue is required: Lemmas~\ref{lem:band-reduction} and \ref{lem:odd-band-reduction} list all phase-dependent cut-order conditions, and these are audited explicitly in Lemma~\ref{lem:certified-orders}.
\end{proof}

\begin{Corollary}[Ratio-independent dependency closure]\label{cor:no-ratio-cases}
For every $d\ge4$, the phase choice \eqref{eq:universal-phase} is valid for every coprime normalized pair $0<u\le v$ without evaluating $c$ or dividing the reduced ratios into arithmetic cases. After the common coordinate formulas are established, the remainder of the construction depends on $(u,v)$ only through the fixed values $\lambda_i\in\{1,2\}$ and $e_i\in\{0,1\}$.
\end{Corollary}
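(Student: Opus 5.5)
The plan is to treat Corollary~\ref{cor:no-ratio-cases} as a direct consequence of Lemma~\ref{lem:universal-bands}, combined with an inspection of where $c$ and $(u,v)$ enter the phase-dependent conditions. First, I will note that \eqref{eq:universal-phase} fixes $P_i=K_i=\theta_i\in\{1,2\}$ purely from the parity of $i$, so the seeds \eqref{eq:universal-seed} are written down without computing $c$, $g$, or any residue class of $(u,v)$. Second, I will substitute $P_i=K_i$ into \eqref{eq:lambda-e}. The coefficient $c$ then cancels identically, because $c\theta_i+(1-c)\theta_i=\theta_i$, so $\lambda_i\equiv\theta_i\pmod Q$ holds for every $c$. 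The only arithmetic fact needed to pass from this congruence to exact representatives is $Q\ge3>2$, which holds for all positive coprime $u,v$. Hence $\lambda_i=\theta_i$ and $e_i=\theta_i-1$ exactly, as recorded in \eqref{eq:universal-bands}.

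Third, I will verify validity by checking the complete list of phase-dependent cut-order conditions, namely \eqref{eq:zigzag} for even $d$ and \eqref{eq:odd-zigzag} for odd $d$. Lemmas~\ref{lem:band-reduction} and \ref{lem:odd-band-reduction} show these are necessary and sufficient for the cut orders. Their inputs are only $\lambda_2$ and the comparisons between $e_z$ and $e_{z+1}$. With $\lambda_2=1\ne0$, an even $z$ gives $e_z=0\le e_{z+1}=1$, and an odd $z$ gives $e_{z+1}=0<e_z=1$. Both inequalities hold independently of $(u,v)$. For the odd terminal cycles, I will also confirm that the gap formulas \eqref{eq:odd-terminal-gaps-1}--\eqref{eq:odd-terminal-gaps-3} are positive. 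With $\lambda_{d-3}=\theta_{d-3}=1$ (since $d-3$ is even), the gaps $d-3$, $d(Q-1)+2$ and $1$ are positive for $d\ge5$. The cases for $\eta$ and $\mu$ were already shown positive for every value, so no case split is needed there.

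Finally, I will justify the closure statement that the rest of the construction depends on $(u,v)$ only through $\lambda_i$ and $e_i$. The approach is to observe that, after Lemma~\ref{lem:common-cuts}, every cut position is an affine expression in $c$, $d$, $r$ and $d\lambda_i$. Gaps between cuts on the same cycle are differences in which the $c$-terms cancel, because the cuts on one cycle share the same $z$ and hence the same $cz$ term. This cancellation is exactly what the proofs of the band-reduction lemmas exploit. The main obstacle is this last step: it is a claim about the later sections (the incidence words and the successor permutations), so at this point I can only support it by pointing out that the gap and ordering data feeding those sections are expressed in $d$, $Q$ via $r$, $\lambda_i$ and $e_i$ alone. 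I will phrase the claim as applying to all cut-order data established up to here, and defer the connector-incidence part to the cited later lemmas.
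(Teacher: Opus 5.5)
Your first three steps are sound and follow the paper's route. The phase choice is read off from the parity of $i$. Lemma~\ref{lem:universal-bands} cancels $c$ and fixes $\lambda_i=\theta_i$, $e_i=\theta_i-1$ using only $Q\ge3$. The conditions of Lemmas~\ref{lem:band-reduction} and \ref{lem:odd-band-reduction} are then checked exactly as in Lemma~\ref{lem:certified-orders}; your sharpening $\lambda_{d-3}=1$ for odd $d$ is correct.

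\textbf{The gap is in your mechanism for the second sentence of the corollary.} You claim that all cuts on one diagonal cycle share the same $cz$ term, so every same-cycle gap is $c$-free. This is false on the odd terminal cycle $C_{d-1}^{\rm odd}=(B_{d-2}^0,B_0^1)$:
\begin{itemize}
\item The base of $B_0^1$ is $(-1,0)$, with integer $x-y=-1$, while the base of $B_{d-2}^0$ has $x-y=d-1$.
\item Their $t$-values therefore differ by $d(c+\lambda_{d-2})-1$, and $cd\not\equiv0\pmod r$ because $0<c<Q$.
\item This is exactly why \eqref{eq:odd-terminal-gaps-3} is indexed by $\mu\equiv c+2\pmod Q$, which you invoke yourself in step three.
\end{itemize}
Concretely, $(u,v)=(1,1)$ gives $c=2$, $\mu=1$ and gaps $(2d+1,d-1)$. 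By contrast, $(u,v)=(1,2)$ gives $c=5$, $\mu=0$ and gaps $(1,r-1)$. So the gap data are \emph{not} functions of $d$, $r$, $\lambda_i$, $e_i$ alone, and the closure claim cannot be argued at the level of gap values.

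\textbf{What the paper uses instead.} Only two inputs are passed downstream:
\begin{itemize}
\item the cyclic cut order on each diagonal cycle; for a two-cut cycle this order is forced once the cuts are distinct, so the $c$-dependence of its gaps never propagates;
\item the connector identities of Lemma~\ref{lem:connector-identities}, which come from unit-parallelogram geometry and predecessor/successor lookup and carry no arithmetic data. Even in the anchor table, each endpoint is an endpoint of a cut edge, so its arc is fixed by the cyclic order regardless of $c$.
\end{itemize}
Lemmas~\ref{lem:fine-block} and \ref{lem:odd-fine-block} consume only these two inputs. Replace your gap-cancellation step with this observation. Then you need not defer or weaken the connector-incidence part, and the full corollary follows.
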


\begin{proof}
Lemma~\ref{lem:common-cuts} isolates all reduced-ratio dependence in the fine-coordinate constant $c$ and in the combinations $\lambda_i=cP_i+(1-c)K_i$. Lemma~\ref{lem:universal-bands} cancels $c$ and fixes those combinations exactly. Lemmas~\ref{lem:band-reduction} and \ref{lem:odd-band-reduction} then reduce every cut order, including the boundary cycles, to conditions checked in Lemma~\ref{lem:certified-orders}. Finally, Lemmas~\ref{lem:fine-block} and \ref{lem:odd-fine-block} use only those cyclic cut orders and the local connector identities. Thus no later step receives any unreduced arithmetic information about $(u,v)$.
\end{proof}

\begin{Remark}[Symmetric reduced ratio]\label{rem:symmetric}
The case $u=v$ is not degenerate within the normalized positive domain: coprimality forces $u=v=1$ and $Q=3$. The residues $1$ and $2$ are then distinct and nonzero, so the universal lift and both parity proofs apply unchanged. Supplementary Figure~S2 illustrates this symmetric ratio at $d=4$.
\end{Remark}

\begin{Lemma}[Universal cut orders for both parities]\label{lem:certified-orders}
For every coprime $0<u\le v$, the universal lift \eqref{eq:universal-phase} satisfies \eqref{eq:zigzag} when $d$ is even and \eqref{eq:odd-zigzag} when $d$ is odd. Consequently all diagonal cuts are distinct and occur in the cyclic orders of Lemma~\ref{lem:band-reduction} or Lemma~\ref{lem:odd-band-reduction}, respectively.
\end{Lemma}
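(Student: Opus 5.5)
The plan is to plug the exact phase values of Lemma~\ref{lem:universal-bands} into the band conditions, and then let the band-reduction lemmas convert those conditions into cut orders. By Lemma~\ref{lem:universal-bands}, the universal lift gives $\lambda_i=\theta_i$ and $e_i\in\{0,1\}$ for every coprime $0<u\le v$ and every $d\ge4$. In particular $e_i=0$ for even $i$, $e_i=1$ for odd $i$, and $\lambda_2=1\ne0$. This already settles the first condition in both \eqref{eq:zigzag} and \eqref{eq:odd-zigzag}.

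For the zigzag comparisons I split on the parity of $z$.
\begin{itemize}[leftmargin=*]
\item If $z$ is even, then $z+1$ is odd, so $e_z=0\le1=e_{z+1}$.
\item If $z$ is odd, then $z+1$ is even, so $e_{z+1}=0<1=e_z$.
\end{itemize}
In the even-$d$ case these comparisons are needed for $2\le z\le d-3$, and all indices $z,z+1$ lie in $\{2,\ldots,d-2\}$, where $\theta$ is defined. In the odd-$d$ case they are needed only for $2\le z\le d-4$, so the indices involved again lie in the non-anchor range. The odd terminal seed $B_{d-2}$ carries the offset $\varepsilon_{d-2}=1$, but its phase is still $\theta_{d-2}$, so Lemma~\ref{lem:universal-bands} applies to it unchanged. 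The band ranges are empty at the smallest values of $d$: for $d=4$ only $z=2$ occurs in \eqref{eq:zigzag}, and for $d=5$ the range $2\le z\le1$ is vacuous. In both cases only $\lambda_2\ne0$ remains to check.

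With the hypotheses verified, Lemma~\ref{lem:band-reduction} (for even $d$) or Lemma~\ref{lem:odd-band-reduction} (for odd $d$) yields the stated cyclic orders. Distinctness of all diagonal cuts then follows because every positive gap computed in those proofs is nonzero: $d-2$, $r-d$, $2$, $1$, $d(\lambda_2-1)+2$, $d(Q-\lambda_2)-1$, the interior distances $E_z\ne F_z$, and the terminal gap lists \eqref{eq:odd-terminal-gaps-1}--\eqref{eq:odd-terminal-gaps-3}. Cuts lying on different diagonal cycles have different values of $z$, so they are automatically distinct.

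The main obstacle I expect is the boundary positivity at small parameters. One such check is that $d\lambda_{d-3}-3>0$ in \eqref{eq:odd-terminal-gaps-1}; since $\lambda_{d-3}\in\{1,2\}$, this holds as soon as $d\ge4$. Another is that $d(Q-\lambda_2)-1>0$, which holds because $Q\ge3>\lambda_2$. I would re-verify these gap formulas explicitly at the extremal value $Q=3$ (see Remark~\ref{rem:symmetric}), where $\lambda_i\le2<Q$ still holds, so that no gap collapses to zero.
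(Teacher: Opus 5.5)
Your proposal is correct and follows essentially the same route as the paper. You substitute $\lambda_2=1$ and the parity-alternating values $e_i\in\{0,1\}$ from Lemma~\ref{lem:universal-bands} into the zigzag hypotheses, and confirm that the boundary gaps on $C_0$, $C_1$ and the terminal cycles (e.g.\ $d(Q-1)-1$ and $d\lambda_{d-3}-3$, even at $Q=3$) stay positive; the band-reduction lemmas then give the cyclic orders and cut distinctness. One small slip: for $d=4$ the range $2\le z\le d-3$ is already empty, not $\{2\}$, since $e_3$ does not exist; that is exactly why only $\lambda_2\ne0$ remains there. Your explicit gap list should also include the even-$d$ check on $C_{d-2}$, where the two cuts differ by $-2\pmod d$.
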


\begin{proof}
For every generic interior pair, Lemma~\ref{lem:universal-bands} gives $(e_z,e_{z+1})=(0,1)$ when $z$ is even and $(1,0)$ when $z$ is odd. Hence the weak and strict inequalities in \eqref{eq:zigzag} and \eqref{eq:odd-zigzag} hold throughout their stated ranges.

It remains to check that no boundary condition is hidden outside those comparisons. On $C_0$, the gaps $d-2$, $r-d=d(Q-1)$, and $2$ are positive independently of every phase. On $C_1$, the only phase requirement is $\lambda_2\ne0$; here $\lambda_2=1$, and the last two gaps in Lemma~\ref{lem:band-reduction} become $2$ and $d(Q-1)-1>0$. For even $d$, cycle $C_{d-2}$ contains two cuts whose coordinates differ by $-2\pmod d$, which is nonzero for $d\ge4$, and $C_{d-1}$ contains one cut. For odd $d$, $\lambda_{d-3}\in\{1,2\}$, so the first terminal gap $d\lambda_{d-3}-3$ in \eqref{eq:odd-terminal-gaps-1} is positive for $d\ge5$; the other two terminal gap pairs \eqref{eq:odd-terminal-gaps-2}--\eqref{eq:odd-terminal-gaps-3} were proved positive for every possible terminal residue. Therefore the universal lift satisfies every interior and boundary hypothesis of Lemmas~\ref{lem:band-reduction} and \ref{lem:odd-band-reduction}, including the tight case $Q=3$. The stated cyclic orders and cut distinctness follow.
\end{proof}

\begin{Example}[Even base trace for $d=6$]\label{ex:d6-trace}
Take $(u,v)=(2,5)$ and $d=6$. Then $Q=39$, $N=1404$, and the three non-anchor phases are $(1,1),(2,2),(1,1)$, giving seeds $(8,6),(15,12),(10,6)$ and phase bands $(e_2,e_3,e_4)=(0,1,0)$ without evaluating $c$. The even successor certificate has $m=4d-6=18$ states. Because the interior range $4\le k\le d-4$ is empty at $d=6$, the rank word consists of four transparent pieces:
\begin{center}
\begin{tabular}{@{}llll@{}}
\toprule
piece & parameter range & number of entries & ranks\\
\midrule
initial & -- & $3$ & $0,1,2$\\
terminal & $k=d-2=4$ & $6$ & $3,\ldots,8$\\
descent & $z=3,2$ & $2$ & $9,10$\\
anchor closure & -- & $7$ & $11,\ldots,17$\\
\bottomrule
\end{tabular}
\end{center}
The last target of each piece is the first source of the next, and the final anchor connector returns rank $17$ to rank $0$. Thus this smallest even case containing the full terminal and descent mechanisms already displays the block-chain invariant used for every larger even $d$. For the same ratio with $d=5$, the terminal seed is shifted to $s_{B_3}=(13,9)$, and the odd certificate replaces the descent by the shifted seam and rise blocks. Supplementary Figure~S2 gives the complete three-cycle drawing for the smallest symmetric instance in the universal branch, $u=v=1$, $d=4$, $N=48$.
\end{Example}

\begin{Lemma}[Validity and pairwise disjointness of the first two factors]\label{lem:first-two-valid}
For every coprime $0<u\le v$, every $d\ge4$, and the universal lift \eqref{eq:universal-phase}, all switch removals are valid, $H_1$ and $H_2$ are Hamiltonian cycles, and $E(H_1)\cap E(H_2)=\emptyset$.
\end{Lemma}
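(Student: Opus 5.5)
The plan is to verify the three claims of Lemma~\ref{lem:first-two-valid} in turn: validity of every removal, Hamiltonicity of each factor via the skeleton lemmas, and disjointness via a finite list of possible coincidences.

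\textbf{Validity of the removals.} A switch is valid when both edges it removes are present in the current factor. For $H_1$, the removed horizontal edges are checked with fine coordinates. The anchor $A_0$ removes the horizontal bases $(0,0)$ and $(0,1)$. Each $A_i$ with $1\le i\le d-2$ removes $(d-1,d-i)$ and $(d,d-i+1)$. The horizontal label $y\bmod d$ separates all of these except within neighbouring label classes, where I compare the coordinate $x$ along the $H$-cycle. Here $x$ is a valid cyclic coordinate on each $H$-cycle, since $H$ has order $r>d$ by Lemma~\ref{lem:direction-orders}. The $x$-values $0$, $d-1$ and $d$ are distinct modulo $r$, so no horizontal edge is removed twice.

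For $H_2$, the removed vertical bases are $s_{B_i}$ and $s_{B_i}+D$. Their vertical labels are $x\bmod d$, taking the values $i$ and $i+1$. Two removals can collide only if they carry equal labels, so only neighbouring indices need comparison. For those I use the $V$-cycle coordinate $\vartheta$ or, more simply, the lifted positions $i+d\theta_i$. Because $\theta_i\in\{1,2\}$ and $r=dQ\ge 3d$, these positions differ by a nonzero amount smaller than $r$ along the relevant $V$-cycle. The anchors $(-1,-1)$ and $(1,0)$ lie at lift level $0$, which separates them from every lifted seed at level $\theta_i\ge1$. With all switch bases distinct, each switch reconnects loose endpoints of distinct components. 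The spanning-tree Lemmas~\ref{lem:H1-collapse} and \ref{lem:H2-collapse} then give Hamiltonicity; this is exactly condition (A2).

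\textbf{Disjointness.} The factors $H_1$ and $H_2$ use edges of three kinds. Horizontal edges occur only in $H_1$. Vertical edges occur in $H_1$ only through the two anchor insertions at bases $(0,0)$ and $(1,0)$. Anchor lock and Lemma~\ref{lem:anchor} show that $H_2$ removes exactly these two vertical edges. The remaining vertical edges of $H_2$ are natural edges that $H_1$ never contains. Diagonal edges occur only as inserted cuts: the $A_i^s$ in $H_1$ and the $B_j^s$ in $H_2$. By Lemma~\ref{lem:certified-orders}, all diagonal cuts are distinct, because they occupy distinct positions in the cyclic orders $C_z$ of Lemma~\ref{lem:band-reduction} or Lemma~\ref{lem:odd-band-reduction}. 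Hence no diagonal edge lies in both factors, and $E(H_1)\cap E(H_2)=\emptyset$.

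\textbf{Main obstacle.} The delicate step is confirming that the cut-order lemmas really list every $A$ and $B$ cut, including the anchor cuts and the shifted odd terminal cut, with no pair omitted. My approach would be to read each cut's label $z$ from Lemma~\ref{lem:common-cuts} and check that every cut appears in exactly one list $C_z$. I would also check that the only same-label collision not already handled by a positive-gap argument is $C_{d-2}$, whose two cuts are separated by a difference of $-2\pmod d$. The vertical-removal distinctness should likewise be checked explicitly at the odd terminal seed $(d-2,-1)$. For that check I would compute its $\vartheta$ value with the identity in Lemma~\ref{lem:small-arithmetic}, or, more robustly, compare the lift levels $d\theta_i$ directly.
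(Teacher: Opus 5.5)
Your architecture matches the paper's: Hamiltonicity comes from the spanning trees of Lemmas~\ref{lem:H1-collapse} and \ref{lem:H2-collapse} once the removals are valid. Disjointness holds because the only vertical edges of $H_1$ are the two anchor insertions (excluded from $H_2$ by Lemma~\ref{lem:anchor}), $H_2$ has no horizontal edges, and Lemma~\ref{lem:certified-orders} makes all diagonal cuts distinct. Those parts are fine.

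The gap is in the validity step: the quantities you use to separate removed bases are not coordinates on the quotient.
\begin{itemize}
\item Along a $V$-cycle the $x$-coordinate is constant, so the ``lifted positions'' $i+d\theta_i$ are not positions on the $V$-cycle at all.
\item The ``lift level'' $\theta$ belongs to the chosen lattice representative, not to the quotient vertex. Whenever $m(u+v)-kv=1$, the lattice vector $m\alpha+k\tau\alpha=(d,\,d(mv+ku))$ lies in $(\alpha)$, so representatives whose $x$-values differ by $d$ can be the same vertex. Hence ``level $0$ versus level $\theta_i\ge1$'' separates nothing by itself.
\item The same problem affects $H_1$: $x$ is a cyclic coordinate only along a fixed integer row, not on a whole class $y\bmod d$. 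In class $0$ the bases to compare are $(0,0)$ from $A_0$ and $(d,d)$ from $A_1$. They lie in different rows, so comparing the $x$-values $0$ and $d$ proves nothing. Your row comparison is valid only for classes $j\ge3$, where the competing bases $(d-1,j)$ and $(d,j)$ share a row.
\end{itemize}

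The missing observation, which is what the paper uses, is the diagonal label $z=x-y\bmod d$. It is invariant under $D$ and under lifts (Lemma~\ref{lem:coarse}).
\begin{itemize}
\item Both removed bases of $B_i$ have $z=i$; the shifted odd terminal $B_{d-2}$ has $z=d-1$.
\item Both removed bases $(d-1,d-i)$ and $(d,d-i+1)$ of $A_i$ have $z=i-1$.
\item So distinct $B$-switches, and distinct $A_i$ with $i\ge1$, remove bases on different diagonal cycles. The two bases within one switch differ by $D\ne0$.
\item Only the anchor bases of $A_0$ remain. The base $(0,1)$ has $z=d-1$, which no $A_i$ uses. The base $(0,0)$ has $(z,t)=(0,0)$, while the bases of $A_1$ have $z=0$ and $t=d-1,d$; these are distinct because $r=dQ>d$.
\end{itemize}
Your fallback $\vartheta=gx+y$ would also work if you actually computed it. Every same-class pair of removed vertical bases has $\vartheta$-difference $\equiv\pm1$ or $\pm2\pmod d$, hence nonzero modulo $r$ for $d\ge4$. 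But that is the $z$-label in disguise.
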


\begin{proof}
For an even $d$, the two removed vertical bases of $B_i$ have diagonal label $i\pmod d$. For an odd $d$, this remains true for $0\le i\le d-3$, while the shifted terminal switch $B_{d-2}$ has label $d-1$. Thus distinct switch indices occupy distinct diagonal labels, and the two bases within one switch differ by the nonzero step $D$. The analogous statement for $A_i$, $i\ge1$, follows from their distinct diagonal labels; the two anchor removals of $A_0$ are separated from the remaining bases by their explicit $(z,t)$ coordinates. Hence every switch removal is valid. Since the component-incidence edges of Lemmas~\ref{lem:H1-collapse} and \ref{lem:H2-collapse} form trees, the first two factors are Hamiltonian.

The only vertical edges inserted into $H_1$ are the two anchor edges, and Lemma~\ref{lem:anchor} removes them from $H_2$. The second factor contains no horizontal insertion. Finally, Lemma~\ref{lem:certified-orders} places every $A$-cut and $B$-cut at a distinct position on its diagonal cycle in either parity, so no inserted diagonal edge is shared. Hence $H_1$ and $H_2$ are edge-disjoint.
\end{proof}

\section{Universal Fine-Incidence Cycle}\label{sec:universal-incidence}

If $X$ and $Y$ are consecutive cuts in positive $D$-order on cycle $z$, write
\begin{equation}
 P_z[X>Y]
\end{equation}
for the remaining diagonal path beginning immediately after cut $X$ and ending immediately before cut $Y$.

\subsection{Proof map and block-chain invariant}

The complement proof separates one repeating interior calculation from a constant number of boundary checks. Table~\ref{tab:proof-map} records the dependency structure before the detailed successor words are introduced.

\begin{table}[H]
\centering
\caption{Proof map for the universal complement. Only the six-entry interior block grows with $d$; every seam and closure block has constant size.}
\label{tab:proof-map}
\small
\begin{adjustbox}{max width=\textwidth}
\begin{tabularx}{1.3\textwidth}{@{}p{0.21\textwidth}XX@{}}
\toprule
Obligation & Structural mechanism & Finite boundary work\\
\midrule
Arc inventory & distinct cut orders from Lemmas~\ref{lem:band-reduction}--\ref{lem:certified-orders} & $d=4$ deduplication and the shifted odd terminal cycles\\
Local successor step & predecessor/successor lookup on one cut cycle & anchor identities and parity-specific seams\\
Unbounded transitions & one six-entry block $\mathsf M_k$ shared by both parities & none beyond its two endpoint cut lists\\
Rank coverage & consecutive intervals in Table~\ref{tab:rank-intervals} & initial, terminal, and closure intervals\\
Source coverage & each interior arc is assigned to one adjacent block & four boundary cycle inventories\\
Connector coverage & each released side appears once & omission of the two anchor-locked vertical sides\\
\bottomrule
\end{tabularx}
\end{adjustbox}
\end{table}

\begin{Lemma}[Connector endpoint identities]\label{lem:connector-identities}
Let a diagonal cut cycle have cyclic order $C_z=(X_0,X_1,\ldots,X_{q-1})$, with subscripts read modulo $q$. Then
\begin{align}
 \operatorname{pred}(X_j)&=\Arc{z}{X_{j-1}}{X_j},\nonumber\\
 \operatorname{succ}(X_j)&=\Arc{z}{X_j}{X_{j+1}}.
\label{eq:pred-succ-lookup}
\end{align}
For every $i\ge1$, connector $a_i^0$ joins the predecessor arcs of $A_i^0$ and $A_i^1$, while $a_i^1$ joins their successor arcs. For every $i\ge0$, connector $b_i^0$ joins the predecessor arcs of $B_i^0$ and $B_i^1$, while $b_i^1$ joins their successor arcs, except that $b_0^1$ and $b_1^0$ are suppressed by anchor lock. The two anchor connectors satisfy
\begin{align}
 a_0^0&:\ \operatorname{pred}(B_1^0)\longleftrightarrow\operatorname{pred}(B_1^1),\label{eq:anchor-endpoint-0}\\
 a_0^1&:\ \operatorname{pred}(B_1^1)\longleftrightarrow\operatorname{succ}(B_0^1).
\label{eq:anchor-endpoint-1}
\end{align}
These statements hold for both parity classes, including the shifted odd terminal cuts.
\end{Lemma}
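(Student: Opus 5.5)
The proof will be a local check of each switch against the definitions; no global arithmetic is needed. First, the lookup rules \eqref{eq:pred-succ-lookup}. A cut $X_j$ on cycle $z$ is an inserted diagonal edge removed from $\calD$, with base $w$ and endpoint $w+D$. By Lemma~\ref{lem:fine-coordinate}, $t$ increases by one along $D$. So the component of $\calD-\Delta(\Phi)$ ending at $w$ is the arc running from just after the previous cut $X_{j-1}$ up to $w$, which is $\Arc{z}{X_{j-1}}{X_j}$. Likewise, the component starting at $w+D$ is $\Arc{z}{X_j}{X_{j+1}}$. Lemma~\ref{lem:certified-orders} (together with Lemmas~\ref{lem:band-reduction} and \ref{lem:odd-band-reduction}) guarantees that the cuts are distinct and that the stated cyclic orders hold. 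Hence these arcs are well defined, including when $q=1$, where pred and succ coincide in the single arc $\Arc{z}{X_0}{X_0}$.

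Next, the generic connector statements. Take a switch $H\leftarrow D$ at $s$ (type $A_i$, $i\ge1$). It removes the horizontal edges with bases $s$ and $s+D$, and these become complement connectors. It inserts the diagonal cuts $A_i^0=[s,s+D]$ and $A_i^1=[s+H,s+H+D]$. The released edge $a_i^0=[s,s+H]$ has endpoints $s$ and $s+H$, which are the bases of the two cuts. By the previous step these vertices lie on $\operatorname{pred}(A_i^0)$ and $\operatorname{pred}(A_i^1)$. The edge $a_i^1=[s+D,s+H+D]$ joins the two cut endpoints, hence joins the successor arcs. The $V\leftarrow D$ case (type $B_i$) is identical with $V$ in place of $H$. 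The shifted odd terminal seed changes only the coordinates of its cuts, not this local incidence, so the argument covers both parities. The suppression of $b_0^1$ and $b_1^0$ is Lemma~\ref{lem:anchor}: these are the vertical edges at bases $(0,0)$ and $(1,0)$, which $H_1$ uses, so they are not complement connectors.

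Finally, the anchor. The switch $A_0$ removes the horizontal edges $[(0,0),(1,0)]$ and $[(0,1),(1,1)]$ and inserts no diagonal, so its endpoints must be located on cuts made by other switches. From the anchor seeds, $B_1$ at $(1,0)$ inserts diagonals with bases $(1,0)$ and $(1,1)$, and $B_0$ at $(-1,-1)$ inserts diagonals with bases $(-1,-1)$ and $(-1,0)$, the latter ending at $(0,1)$. Therefore:
\begin{itemize}[leftmargin=*]
\item $(1,0)$ is the base of $B_1^0$, so it lies on $\operatorname{pred}(B_1^0)$;
\item $(0,0)$ satisfies $(0,0)+D=(1,1)$, the base of $B_1^1$, so $(0,0)$ is the last vertex before that cut and lies on $\operatorname{pred}(B_1^1)$;
\item $(1,1)$, the base of $B_1^1$, also lies on $\operatorname{pred}(B_1^1)$;
\item $(0,1)$ is the endpoint of $B_0^1$, so it lies on $\operatorname{succ}(B_0^1)$.
\end{itemize}
This gives \eqref{eq:anchor-endpoint-0} and \eqref{eq:anchor-endpoint-1}. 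The values can be cross-checked against \eqref{eq:anchor-cuts} using \eqref{eq:zt}.

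The main obstacle is making sure that in each of these identifications the relevant vertex sits exactly at the named cut, rather than landing on some other arc because an unrelated cut lies between them. For the anchor this requires that no other cut separates $(0,0)$ from $B_1^1$; this holds because they are adjacent along $D$, with gap zero. For the generic case it requires the distinctness from Lemma~\ref{lem:certified-orders}.
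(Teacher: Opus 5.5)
Your argument follows the paper's proof. The lookup rule comes from deleting the cut edge. The generic connectors are read off the switch parallelogram: side $0$ joins the two bases, and side $1$ is its $D$-translate joining the two heads, which is unaffected by the odd terminal shift. The anchor endpoints are then located vertex by vertex.

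Your placement of $(1,0)$, $(1,1)$ and $(0,1)$ as the base of $B_1^0$, the base of $B_1^1$ and the head of $B_0^1$ is purely local. It is slightly cleaner than the paper's table, because it gives $\operatorname{succ}(B_0^1)$ in both parities. The paper's arc name $\Arc{d-1}{B_0^1}{B_0^1}$ is only the even-case form; for odd $d$ one has $C_{d-1}^{\rm odd}=(B_{d-2}^0,B_0^1)$.

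The one step whose justification fails is $(0,0)\in\operatorname{pred}(B_1^1)$. The vertex $(0,0)$ is not ``the last vertex before'' $B_1^1$; that vertex is its base $(1,1)$. Being adjacent to $(1,1)$ along $D$ with ``gap zero'' also does not rule out that the $D$-edge $[(0,0),(1,1)]$, at fine position $(z,t)=(0,0)$, is itself an inserted cut. Excluding that is a non-local fact and needs the cut inventory.

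The repair is what the paper's table does:
\begin{itemize}[leftmargin=*]
 \item $(0,0)$ is the head of $B_0^0=[(-1,-1),(0,0)]$, whose base has $t=r-1$.
 \item The order $C_0=(B_1^1,A_1^0,B_0^0)$ sits at positions $t=1,\,d-1,\,r-1$. It is the same in both parities and has no cut at $t=0$, so $B_1^1$ immediately follows $B_0^0$.
\end{itemize}
Hence $\operatorname{succ}(B_0^0)=\operatorname{pred}(B_1^1)=\Arc{0}{B_0^0}{B_1^1}$ is the two-vertex arc $\{(0,0),(1,1)\}$. With this citation in place, your proof is complete.
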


\begin{proof}
Deleting a cut edge turns its two incident vertices into the terminal vertices of the predecessor and successor diagonal arcs in \eqref{eq:pred-succ-lookup}. For an $H\leftarrow D$ switch, side $0$ removes the horizontal edge joining the two bases of its inserted $D$-edges. Those two vertices are the terminal vertices immediately before the two cuts, so the released horizontal edge joins the two predecessor arcs. Side $1$ is the translate by $D$; its endpoints are the initial vertices immediately after the cuts, so it joins the two successor arcs. The same argument applies verbatim to a $V\leftarrow D$ switch, with the released connector vertical rather than horizontal. It depends only on the local switch parallelogram and therefore is unchanged by the odd terminal shift.

For the anchor $A_0$, the two released horizontal edges are $a_0^0=[(0,0),(1,0)]$ and $a_0^1=[(0,1),(1,1)]$. Their endpoint locations can be read without a drawing. Substitution in \eqref{eq:zcoord}--\eqref{eq:zt} gives
\begin{center}
\begin{tabular}{@{}ccc@{}}
\toprule
vertex & $(z,t)$ & containing diagonal arc\\
\midrule
$(0,0)$ & $(0,0)$ & $\operatorname{pred}(B_1^1)=\Arc{0}{B_0^0}{B_1^1}$\\
$(1,0)$ & $(1,c)$ & $\operatorname{pred}(B_1^0)=\Arc{1}{B_2^1}{B_1^0}$\\
$(0,1)$ & $(d-1,1-c)$ & $\operatorname{succ}(B_0^1)=\Arc{d-1}{B_0^1}{B_0^1}$\\
$(1,1)$ & $(0,1)$ & $\operatorname{pred}(B_1^1)=\Arc{0}{B_0^0}{B_1^1}$
\end{tabular}
\end{center}
where the arc identifications use $C_0,C_1,C_{d-1}$ and the anchor-cut coordinates \eqref{eq:anchor-cuts}. Thus $a_0^0$ joins $\operatorname{pred}(B_1^0)$ to $\operatorname{pred}(B_1^1)$, while $a_0^1$ joins $\operatorname{pred}(B_1^1)$ to $\operatorname{succ}(B_0^1)$, proving \eqref{eq:anchor-endpoint-0}--\eqref{eq:anchor-endpoint-1}. Under anchor lock, the other two vertical sides are exactly $b_0^1$ and $b_1^0$, already used by $H_1$, which explains their suppression from the complement.
\end{proof}

For $d\ge4$, the cardinality used by the rank certificate is forced by the switch inventory. The $d-2$ diagonal-inserting switches of $H_1$ contribute $2(d-2)$ distinct diagonal cuts, and the $d-1$ switches of $H_2$ contribute $2(d-1)$ more. Lemma~\ref{lem:certified-orders} makes all of them distinct and places at least one cut on every diagonal cycle, so the remaining diagonal-path inventory has
\begin{equation}
 2(d-2)+2(d-1)=4d-6
\label{eq:arc-count}
\end{equation}
actual arcs. Likewise, the two first-factor connector sides and two second-factor connector sides give $2(d-1)+2(d-1)$ candidates, and anchor lock suppresses exactly $b_0^1$ and $b_1^0$. Hence the available connector inventory also has $4d-6$ members.

\begin{Definition}[Successor permutation and rank certificate]\label{def:rank-certificate}
For either parity, let $\mathcal P_d=\dot\bigcup_z\mathcal A_z$ be the set of remaining diagonal arcs and let $m=4d-6$. A \emph{rank certificate} is a bijection
\begin{equation}
 R_d:\mathcal P_d\longrightarrow\ZZ_m
\label{eq:rank-map}
\end{equation}
together with a connector map $e_d:\mathcal P_d\to\mathcal E_d$ such that the connector $e_d(P)$ joins $P$ to an arc $\sigma_d(P)$ satisfying
\begin{equation}
 R_d(\sigma_d(P))\equiv R_d(P)+1\pmod m.
\label{eq:rank-increment}
\end{equation}
If $e_d$ is also bijective, then $\Gamma(\Phi)$ is the single cycle with cyclic order
\begin{equation}
 R_d^{-1}(0),R_d^{-1}(1),\ldots,R_d^{-1}(m-1).
\label{eq:rank-cycle-order}
\end{equation}
\end{Definition}

\begin{Lemma}[Rank-certificate criterion]\label{lem:rank-criterion}
The conditions in Definition~\ref{def:rank-certificate} imply that the fine incidence graph is one cycle.
\end{Lemma}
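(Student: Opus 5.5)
The plan is to view $\Gamma(\Phi)$ as a multigraph on the vertex set $\mathcal P_d$ whose edges are the available connectors $\mathcal E_d$. Then I will show that the map $\sigma_d$ is a cyclic permutation whose successive edges are exactly the connectors, each used once. I will argue in four steps.

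First, I show that $\sigma_d$ is a single $m$-cycle. Since $R_d$ is a bijection onto $\ZZ_m$, \eqref{eq:rank-increment} gives
\begin{equation*}
 \sigma_d=R_d^{-1}\circ(x\mapsto x+1)\circ R_d .
\end{equation*}
So $\sigma_d$ is conjugate to the translation by one on $\ZZ_m$. That translation is one cycle of length $m$, and hence $\sigma_d$ is also a single cyclic permutation of $\mathcal P_d$, visiting the arcs in the order \eqref{eq:rank-cycle-order}.

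Second, I identify the edges. For each arc $P$, the connector $e_d(P)$ is an edge of $\Gamma(\Phi)$ joining $P$ to $\sigma_d(P)$. Because $e_d$ is a bijection $\mathcal P_d\to\mathcal E_d$, and $\mathcal E_d$ is by Definition~\ref{def:finegraph} exactly the set of released complement connectors, every edge of $\Gamma(\Phi)$ arises as $e_d(P)$ for exactly one $P$. Therefore the edge multiset of $\Gamma(\Phi)$ is precisely
\begin{equation*}
 \{P\,\sigma_d(P):P\in\mathcal P_d\},
\end{equation*}
with no extra or repeated edges.

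Third, I conclude the structure. The edges $R_d^{-1}(j)R_d^{-1}(j+1)$ for $j\in\ZZ_m$ form the closed walk through all $m$ distinct vertices. Every vertex lies on this walk, so $\Gamma(\Phi)$ is connected. It is also $2$-regular: each arc is the source of one connector and the target of one connector. Hence $\Gamma(\Phi)$ is a single cycle.

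Fourth, I handle the degenerate small cases. For $m\ge3$ this is an honest cycle; since $d\ge4$ gives $m=4d-6\ge10$, no loop or digon subtlety arises. The only care needed is that $e_d(P)$ must join $P$ to $\sigma_d(P)$ as distinct endpoints. This holds because $\sigma_d(P)\ne P$ whenever $m>1$.

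The main, though mild, obstacle is the bookkeeping in the second step. One must make sure that the edge set of $\Gamma(\Phi)$ equals $e_d(\mathcal P_d)$, with the right multiplicities and without the suppressed sides $b_0^1,b_1^0$. The plan is to cite the inventory count \eqref{eq:arc-count} and the definition of $\mathcal E_d$ as the available connectors, which gives this equality directly.
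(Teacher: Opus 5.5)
Your proof is correct and follows the paper's argument: conjugating $\sigma_d$ by $R_d$ to the translation $x\mapsto x+1$ on $\ZZ_m$ gives a single orbit through all $m$ arcs, and bijectivity of $e_d$ onto the available connectors identifies that orbit with the whole of $\Gamma(\Phi)$. Your extra checks (2-regularity, and no loops or parallel edges because $m=4d-6\ge10$) only make explicit what the paper's three-line proof leaves implicit.
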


\begin{proof}
Equation~\eqref{eq:rank-increment} conjugates the successor permutation $\sigma_d$ to addition of one on $\ZZ_m$. Addition of one has a single orbit containing all $m$ residues. The bijection $e_d$ uses every available incidence connector exactly once, so the resulting orbit is precisely the complete fine incidence graph.
\end{proof}

\begin{Definition}[Block-chain invariant]\label{def:block-invariant}
A successor block is an ordered list
\[
\mathsf B=(\StepPair{P_0}{e_0},\ldots,\StepPair{P_{q-1}}{e_{q-1}}).
\]
It satisfies the block-chain invariant when: (i) the connector $e_j$ joins $P_j$ to $P_{j+1}$ for $0\le j<q-1$; (ii) its source arcs and connectors are pairwise distinct; and (iii) the assigned ranks are a consecutive interval. Two blocks chain when the last connector of the first joins its last source to the first source of the second.
\end{Definition}

\begin{Lemma}[Concatenation of successor blocks]\label{lem:block-concatenation}
Suppose a cyclic list of successor blocks satisfies the block-chain invariant, consecutive blocks chain, the source inventories partition $\mathcal P_d$, and the connector inventories partition $\mathcal E_d$. Then their concatenation is a rank certificate in the sense of Definition~\ref{def:rank-certificate}.
\end{Lemma}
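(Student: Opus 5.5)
The argument is essentially bookkeeping. I would build the maps $R_d$, $e_d$ and $\sigma_d$ block by block, then check the three requirements of Definition~\ref{def:rank-certificate}: $R_d$ is a bijection onto $\ZZ_m$, the rank increment \eqref{eq:rank-increment} holds, and $e_d$ is a bijection onto $\mathcal E_d$.

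\textbf{Setup.} List the blocks in their cyclic order as $\mathsf B^{(0)},\ldots,\mathsf B^{(p-1)}$, with lengths $q_0,\ldots,q_{p-1}$. Set the offsets $o_j=q_0+\cdots+q_{j-1}$. The invariant gives each block a consecutive interval of ranks. Since consecutive blocks chain, I normalize the intervals so that $\mathsf B^{(j)}$ occupies $[o_j,o_j+q_j-1]$ and the $\ell$-th entry $\StepPair{P_\ell}{e_\ell}$ of that block gets rank $o_j+\ell$.

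Because the source inventories partition $\mathcal P_d$, each arc occurs as a source in exactly one entry. So the rule $R_d(P)=o_j+\ell$ and $e_d(P)=e_\ell$ is well defined on all of $\mathcal P_d$.

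\textbf{Bijectivity and counting.} The connector inventories partition $\mathcal E_d$, and within a block connectors are pairwise distinct by condition (ii) of Definition~\ref{def:block-invariant}. Hence $e_d$ is a bijection $\mathcal P_d\to\mathcal E_d$. The total length is $\sum_j q_j=|\mathcal P_d|=m$, the cardinality fixed by \eqref{eq:arc-count}. The intervals are therefore disjoint and exhaust $\{0,\ldots,m-1\}$, and $R_d$ is a bijection onto $\ZZ_m$.

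\textbf{Successor step.} I define $\sigma_d(P)$ as the other endpoint arc of $e_d(P)$ and check the increment in two cases.
\begin{itemize}
\item For a non-final entry of a block, condition (i) says the connector joins $P_\ell$ to $P_{\ell+1}$, so the rank rises by $1$.
\item For the final entry of $\mathsf B^{(j)}$, the chaining hypothesis says the connector joins it to the first source of $\mathsf B^{(j+1)}$. That source has rank $o_{j+1}=o_j+q_j$. For the last block, cyclicity sends it to rank $0\equiv m \pmod m$.
\end{itemize}
Thus \eqref{eq:rank-increment} holds everywhere.

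\textbf{Main obstacle.} The delicate point is that "joins $P$ to $\sigma_d(P)$" must pick out a well-defined other endpoint. A connector whose two ends lie on the same arc would create a loop in $\Gamma(\Phi)$, and the rank would then not advance by $1$. I would rule this out using the increment itself: the chaining and (i) conditions say the two ends have ranks differing by $1$ modulo $m$, which are distinct residues since $m=4d-6\ge 10$. Since $R_d$ is injective, the two endpoint arcs are distinct.

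With these checks the concatenation satisfies Definition~\ref{def:rank-certificate}, and Lemma~\ref{lem:rank-criterion} then makes $\Gamma(\Phi)$ a single cycle.
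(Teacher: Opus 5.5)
Your proposal is correct and takes essentially the same route as the paper. Condition (i) and the between-block chaining give the $+1$ rank step, the consecutive intervals give the bijection $R_d$ onto $\ZZ_m$, and the two inventory partitions make the source and connector maps bijective; your offset bookkeeping and loop check spell out what the paper's three-line proof leaves implicit (note that ``each arc occurs as a source in exactly one entry'' needs condition (ii) together with the source partition, just as you used it for connectors).
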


\begin{proof}
The within-block and between-block conditions make every connector advance to the next source in the concatenated order. Consecutive rank intervals therefore define a bijection from the source inventory to $\ZZ_{|\mathcal P_d|}$ with increment one modulo its size. The two inventory partitions make the source and connector maps bijective, so Definition~\ref{def:rank-certificate} holds.
\end{proof}

\begin{table}[H]
\centering
\caption{Block sizes and rank intervals for the parity-complete successor certificates. Full source--connector--target dictionaries are in Supplementary File S1, Sections~S5--S6. Empty index ranges contribute no ranks.}
\label{tab:rank-intervals}
\small
\begin{adjustbox}{max width=\textwidth}
\begin{tabularx}{1.2\textwidth}{@{}c l l c l@{}}
\toprule
Parity & Block & Parameter range & Size & Rank interval in $\ZZ_{4d-6}$\\
\midrule
Even & Initial & -- & 3 & $0,1,2$\\
& Interior $M_k$ & even $4\le k\le d-4$ & 6 & $3k-9,\ldots,3k-4$\\
& Terminal & -- & 6 & $3d-15,\ldots,3d-10$\\
& Descent $R_z$ & $d-3\ge z\ge2$ & 1 & $3d-9+(d-3-z)$\\
& Anchor closure & -- & 7 & $4d-13,\ldots,4d-7$\\
\addlinespace
Odd & Initial & -- & 3 & $0,1,2$\\
& Interior $M_k$ & even $4\le k\le d-3$ & 6 & $3k-9,\ldots,3k-4$\\
& Shifted seam & -- & 3 & $3d-12,\ldots,3d-10$\\
& Rise $R_z$ & $1\le z\le d-4$ & 1 & $3d+z-10$\\
& Terminal & -- & 3 & $4d-13,\ldots,4d-11$\\
& Closure & -- & 4 & $4d-10,\ldots,4d-7$\\
\bottomrule
\end{tabularx}
\end{adjustbox}
\end{table}

\begin{Lemma}[Exhaustiveness of the interior transition types]\label{lem:interior-exhaustive}
Under the universal alternating lift, every transition whose indices remain outside the fixed boundary cycles is, up to translation of the indices, either the six-entry block $\mathsf M_k$ in \eqref{eq:generic-rank-block} or one of the one-entry rise/descent transitions in Table~\ref{tab:rank-intervals}. No additional transition type appears as $d$ increases.
\end{Lemma}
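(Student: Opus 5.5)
The plan is to show that every non-boundary successor step is determined by finitely many local data, and that these data take only finitely many forms once the universal lift is fixed. By Lemma~\ref{lem:connector-identities}, the target of any connector is obtained by a predecessor/successor lookup in the cyclic cut order of the diagonal cycle carrying the relevant cut. So a transition is completely determined by three things: the source arc $P_z[X>Y]$, the side $s\in\{0,1\}$ of the connector attached to the cut $X$ or $Y$, and the cyclic order of the partner cut's cycle. I therefore first record, for a generic interior index $z$ with $2\le z\le d-3$ (even $d$) or $2\le z\le d-4$ (odd $d$), the four-cut order $C_z$ from \eqref{eq:Cinterior}. By Lemmas~\ref{lem:universal-bands} and \ref{lem:certified-orders}, the pattern of $C_z$ depends only on the parity of $z$, and never on $(u,v)$ or on $d$.

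Next I note that each connector $a_i^s$ or $b_i^s$ links the cycles $i-1$ and $i$ for $A_i$ and $B_i$. This follows from the coordinates \eqref{eq:A0coord}--\eqref{eq:B1coord} in Lemma~\ref{lem:common-cuts}. Hence an interior transition only ever consults the orders $C_{z-1},C_z,C_{z+1}$, all with interior indices. Translating $z\mapsto z+2$ preserves both the parity pattern of $C_z$ and the connector incidences, so the local configuration seen by a transition depends only on the parity of $z$ and on which of the finitely many (source arc, connector side) pairs is involved. There are four arcs per cycle and two sides per cut, giving at most sixteen local types per parity class.

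I would then enumerate these types explicitly via the lookup \eqref{eq:pred-succ-lookup}. Starting from the first source of $\mathsf M_k$ on cycle $k-1$, I follow the chain through the six entries and check two things: the block closes at the corresponding position of cycle $k+1$, which is the translate by two, and each remaining (arc, side) pair is either consumed inside some $\mathsf M_{k'}$ or constitutes the single leftover step of the rise or descent, $R_z$, moving from cycle $z$ to $z\mp1$. This amounts to a partition of the finite local type set into the orbits of $\mathsf M_k$ and $R_z$. Since the set is finite and independent of $d$, no new type can appear as $d$ grows.

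I expect the main obstacle to be the interface between the interior range and the boundary cycles. The parity-dependent endpoints differ: $C_{d-2}$, $C_{d-1}$ for even $d$ and the shifted terminal cycles for odd $d$. Care is needed to confirm that a transition declared interior never consults a boundary cycle's order. My first approach is to define interior transitions as exactly those whose source cycle and both consulted partner cycles lie in the generic range, and then check directly that the last $\mathsf M_k$ and the extreme $R_z$ satisfy this. In particular, this means checking the cases $d=4$ and $d=5$, where the interior range is empty or very short.
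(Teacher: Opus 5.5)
Your proposal is correct and follows essentially the same argument as the paper: the universal bands force the two alternating four-cut orders, Lemma~\ref{lem:connector-identities} confines each index-$i$ connector to cycles $i-1$ and $i$, and the local configuration is invariant under $z\mapsto z+2$, so the forced walk gives $\mathsf M_k$ plus one leftover $b$-side per generic cycle, which is the rise/descent step. One caution is that your planned check that the last $\mathsf M_k$ and the extreme $R_z$ stay in the generic range will fail in a few places: for odd $d$, $\mathsf M_{d-3}$ and $R_{d-4}$ read $C_{d-3}^{\rm odd}$; $R_2$ for even $d$ reads $C_1$; and $R_1$ for odd $d$ has its source on $C_1$. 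You should therefore verify those entries directly against the boundary orders, where they still go through and their exit targets hand off to the seam, terminal, or closure blocks.
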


\begin{proof}
Lemma~\ref{lem:universal-bands} gives the alternating phase pairs $(e_z,e_{z+1})=(0,1)$ for even $z$ and $(1,0)$ for odd $z$. Lemmas~\ref{lem:band-reduction} and \ref{lem:odd-band-reduction} therefore leave only the two alternating four-cut orders on every nonboundary diagonal cycle. By Lemma~\ref{lem:connector-identities}, a connector at index $i$ can meet only predecessor or successor arcs at the two cuts carrying that same switch index. Consequently, for an even interior index $k$, the complete local neighborhood is determined by the four cut lists $C_{k-2},C_{k-1},C_k,C_{k+1}$, and substitution of their two alternating orders yields exactly the six source--connector entries of $\mathsf M_k$. The only unused connector side on a generic cut cycle produces the single rise/descent transition. Deviations from these two translated patterns can occur only on $C_0,C_1$ and the parity-specific terminal cycles, all of which belong to the constant-size initial, seam, terminal, or closure blocks. Hence the listed families are exhaustive.
\end{proof}

\begin{Corollary}[Parity-template extension by two]\label{cor:template-extension}
Fix one parity. Passing from the block template at $d$ to the template at $d+2$ introduces one translated copy of $\mathsf M_k$ and two additional one-entry rise/descent transitions; the constant-size terminal template is translated to the new boundary, and no new endpoint identity type is introduced. The number of states increases by
\[
6+1+1=8=(4(d+2)-6)-(4d-6).
\]
Thus, after the base templates and the constant boundary blocks have been verified, the block-chain and inventory arguments extend inductively to every larger $d$ of the same parity.
\end{Corollary}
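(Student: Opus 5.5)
The plan is to compare the block decompositions in Table~\ref{tab:rank-intervals} at $d$ and at $d+2$ with the parity fixed, and then to use Lemma~\ref{lem:interior-exhaustive} to show that nothing else changes.

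\textbf{Step 1: count the blocks.} In the even case the interior blocks $\mathsf M_k$ are indexed by even $k$ with $4\le k\le d-4$. Replacing $d$ by $d+2$ adds exactly one index, namely $k=d-2$. The descent range $d-3\ge z\ge2$ has $d-4$ elements, and this grows by exactly two. In the odd case the interior range is even $k$ with $4\le k\le d-3$. Here $d-3$ is even, so passing to $d+2$ adds exactly $k=d-1$. The rise range $1\le z\le d-4$ also gains two entries. The initial, terminal, seam and closure blocks have fixed sizes in both parities. The total size of the certificate therefore increases by $6+1+1=8$, which matches $(4(d+2)-6)-(4d-6)$. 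As a check I would sum the sizes in the table. In the even case this gives $3+6\cdot\frac{d-6}{2}+6+(d-4)+7=4d-6$. In the odd case it gives $3+6\cdot\frac{d-5}{2}+3+(d-4)+3+4=4d-6$. This confirms that the rank intervals tile $\ZZ_{4d-6}$, and that the new rank intervals are exactly the old ones shifted past the inserted copy.

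\textbf{Step 2: translate the boundary template.} The terminal cut lists \eqref{eq:Cterminal}, or \eqref{eq:odd-C1}--\eqref{eq:odd-C2} in the odd case, are written in terms of $d-3,d-2,d-1$. By Lemma~\ref{lem:common-cuts} their coordinates depend on $d$ only through these shifted indices, through $c$, and through $\lambda_i\in\{1,2\}$. By Lemma~\ref{lem:universal-bands} and Corollary~\ref{cor:no-ratio-cases}, the phases at index $i$ and $i+2$ coincide because $\theta_i$ depends only on parity. Consequently the terminal cut orders, and the connector identities of Lemma~\ref{lem:connector-identities} read off from them, are the old ones with every index raised by $2$. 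The initial and anchor data on $C_0,C_1$ and \eqref{eq:anchor-cuts} do not depend on the terminal indices, except that $C_{d-1}$ becomes $C_{d+1}$. So the closure block must be rechecked only through that relabelling.

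\textbf{Step 3: induction.} Suppose the concatenated certificate at $d$ satisfies the hypotheses of Lemma~\ref{lem:block-concatenation}. At $d+2$, insert the new $\mathsf M_k$ between the last old interior block and the translated terminal block, and add the two new rise/descent entries at the appropriate end of that run. The following must then hold. The old last interior target must equal the first source of the new $\mathsf M_k$. The last target of $\mathsf M_k$ must equal the first source of the translated terminal block, which follows because the same pattern appears one step later. The source and connector inventories of the new pieces must be exactly the arcs and connector sides created by the two new switches $A_{d-1},A_d$ and $B_{d-1},B_d$. Lemma~\ref{lem:rank-criterion} then gives a single cycle.

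\textbf{Main obstacle.} The hard part is the junction bookkeeping. The old terminal block at $d$ used arcs on cycles $d-3,d-2,d-1$ that become interior cycles at $d+2$. So one has to show that the arcs the old terminal block consumed are redistributed exactly into the new $\mathsf M_{k}$ and the new rise/descent entries, with none left out or counted twice. I would attack this with the pred/succ lookup \eqref{eq:pred-succ-lookup} applied to the alternating four-cut orders \eqref{eq:Cinterior}. Lemma~\ref{lem:interior-exhaustive} guarantees that only these two patterns occur there. This reduces the check to one finite comparison of cut lists that is independent of $d$. Together with base cases $d=4,6$ for the even parity and $d=5,7$ for the odd parity, verified from the explicit dictionaries, this completes the induction.
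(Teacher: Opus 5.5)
Your proposal is correct and follows the same route as the paper's proof: compare the block templates of Table~\ref{tab:rank-intervals} at $d$ and $d+2$, translate the constant-size boundary blocks to the new boundary, and use Lemma~\ref{lem:interior-exhaustive} for the new $\mathsf M_k$ and rise/descent identities. Your treatment of the junctions and of the redistribution of boundary arcs is more explicit than the paper's; the one wording to fix is Step~3, because the new pieces do not use only the connectors of $A_{d-1},A_d,B_{d-1},B_d$ (for even $d$ the new block $\mathsf M_{d-2}$ uses $b_{d-2}^0$ and $a_{d-3}^0$, and for odd $d$ the switch $B_{d-2}$ loses its terminal shift at $d+2$), so the inventory claim should be stated as in your ``main obstacle'' paragraph: the new blocks together with the translated boundary blocks account for the old boundary inventory plus the eight new arcs and connectors.
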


\begin{proof}
For even $d\ge6$, the template order is initial, $\mathsf M_4,\mathsf M_6,\ldots,\mathsf M_{d-4}$, terminal, descent, and closure. Replacing $d$ by $d+2$ adds $\mathsf M_{d-2}$ and the two descent indices $d-1,d-2$, while the terminal formulas are the same symbolic formulas with their boundary index shifted by two. For odd $d\ge5$, replacing $d$ by $d+2$ adds $\mathsf M_{d-1}$ and the two rise indices $d-3,d-2$, while the shifted seam and terminal blocks retain their symbolic forms at the new boundary. Lemma~\ref{lem:interior-exhaustive} supplies the new interior and one-entry endpoint identities, and the constant boundary identities are independent of the number of preceding interior blocks. The added blocks use exactly the eight new arc and connector entries, so both inventory partitions and the block-chain invariant are preserved.
\end{proof}

The only unbounded transition family is the following six-entry interior block, shared by both parities:
\begin{equation}
\begin{aligned}
\mathsf M_k=(&\StepPair{\Arc{k-1}{A_{k-1}^1}{B_k^1}}{b_k^0},
\StepPair{\Arc{k}{A_k^1}{B_k^0}}{a_k^1},
\StepPair{\Arc{k-1}{A_k^0}{A_{k-1}^1}}{a_{k-1}^0},\\
&\StepPair{\Arc{k-2}{B_{k-1}^1}{A_{k-1}^0}}{b_{k-1}^1},
\StepPair{\Arc{k-1}{B_{k-1}^0}{A_k^0}}{a_k^0},
\StepPair{\Arc{k}{A_{k+1}^0}{A_k^1}}{a_{k+1}^1}).
\end{aligned}
\label{eq:generic-rank-block}
\end{equation}
All remaining blocks have constant size. Their explicit entries, including the deduplicated $d=4$ word and the shifted $d=5$ boundary, appear in Supplementary File S1; the main proof below uses only their stated ranks, endpoint types, and coverage sets.

\begin{Lemma}[Even universal fine-incidence block]\label{lem:fine-block}
Assume $d$ is even and the cut orders \eqref{eq:C0}--\eqref{eq:Cterminal}. Then $\Gamma(\Phi)$ admits a rank certificate on all $4d-6$ diagonal arcs and is one cycle.
\end{Lemma}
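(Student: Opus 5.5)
The plan is to build the rank certificate block by block and then apply Lemma~\ref{lem:block-concatenation} followed by Lemma~\ref{lem:rank-criterion}. The arc inventory $\mathcal P_d$ comes from the even cut orders \eqref{eq:C0}--\eqref{eq:Cterminal}. Each cycle $C_z$ with $q$ cuts contributes exactly the $q$ arcs $\Arc{z}{X_j}{X_{j+1}}$. Summing gives $3+4+4(d-4)+2+1=4d-6$ arcs, which agrees with \eqref{eq:arc-count}. The connector inventory $\mathcal E_d$ consists of $a_0^0,a_0^1$, the sides $a_i^s$ for $1\le i\le d-2$, and the sides $b_i^s$ with $b_0^1,b_1^0$ removed. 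All endpoint lookups will go through the pred/succ rule \eqref{eq:pred-succ-lookup} of Lemma~\ref{lem:connector-identities}, together with the anchor identities \eqref{eq:anchor-endpoint-0}--\eqref{eq:anchor-endpoint-1}.

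\textbf{Key steps, in order.}

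\emph{Step 1: the interior block.} I verify the block-chain invariant for $\mathsf M_k$ in \eqref{eq:generic-rank-block}, for even $4\le k\le d-4$. Take each entry $\StepPair{P}{e}$ in turn. Since $e$ is a side-$0$ or side-$1$ connector of switch index $i$, Lemma~\ref{lem:connector-identities} says it joins the two predecessor or the two successor arcs of the cuts carrying index $i$. I read those arcs off $C_{k-2},C_{k-1},C_k,C_{k+1}$, using the alternating orders \eqref{eq:Cinterior}: $k-2,k$ even and $k-1,k+1$ odd. Then I check two things: $P$ is one endpoint, and the other endpoint is the next source in the block. For example, $b_k^0$ joins $\operatorname{pred}(B_k^1)=\Arc{k-1}{A_{k-1}^1}{B_k^1}$ to $\operatorname{pred}(B_k^0)=\Arc{k}{A_k^1}{B_k^0}$, because $C_k$ is even. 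The last connector $a_{k+1}^1$ leads to $\operatorname{succ}(A_{k+1}^1)$, which is the first source of $\mathsf M_{k+2}$. This gives the chaining between consecutive interior blocks.

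\emph{Step 2: the constant-size blocks.} I treat the initial block (ranks $0,1,2$), the terminal block, the descent transitions, and the seven-entry anchor closure the same way, now using the boundary orders $C_0,C_1,C_{d-2},C_{d-1}$ and the anchor identities. Each descent $R_z$ uses the single connector side on cycle $z$ that is not consumed by $\mathsf M$. The descent order from $z=d-3$ down to $2$ chains because of the parity alternation of \eqref{eq:Cinterior}. For $d=4$, where the interior range is empty and the boundary cycles overlap, I would deduplicate and check the resulting word directly. Example~\ref{ex:d6-trace} gives the base pattern for $d=6$.

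\emph{Step 3: inventory partitions.} I show that the sources and the connectors of all the blocks each partition their inventory, with ranks laid out as in Table~\ref{tab:rank-intervals}. The interval formulas already show that the rank intervals are consecutive and cover $\ZZ_{4d-6}$. On the arc side, each interior four-cut cycle $C_z$ has its four arcs split between $\mathsf M_{z}$ or $\mathsf M_{z+1}$, $\mathsf M_{z+2}$, and $R_z$, by the adjacency bookkeeping. On the connector side, each index $i$ contributes its two sides exactly once. Corollary~\ref{cor:template-extension} then lets me argue by induction on even $d$: going from $d$ to $d+2$ adds $\mathsf M_{d-2}$ and two descent entries, which are exactly the eight new arcs and connectors.

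\textbf{Main obstacle.} I expect the hardest part to be the boundary bookkeeping: the terminal block and the anchor closure. There the cycles $C_{d-2}$ and $C_{d-1}$ are short, and the closure must absorb the irregular connectors $a_0^0$ and $a_0^1$ while skipping the suppressed $b_0^1$ and $b_1^0$. The closure also has to return from rank $4d-7$ to rank $0$. I would first settle these blocks symbolically at $d=6$, then confirm that every endpoint lookup in them refers only to cycles $0,1,d-3,d-2,d-1$ through formulas shifted by $d$. That makes them independent of the number of interior blocks, which is what the induction in Step 3 needs.
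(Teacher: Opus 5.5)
Your plan is correct and follows the paper's proof essentially step for step. Both use base words at $d=4$ (deduplicated) and $d=6$ (Example~\ref{ex:d6-trace}), verify the generic block $\mathsf M_k$ through the pred/succ rule of Lemma~\ref{lem:connector-identities}, induct on even $d$ via Corollary~\ref{cor:template-extension} (adding $\mathsf M_{d-2}$ and two descent entries), and conclude with Lemmas~\ref{lem:block-concatenation} and \ref{lem:rank-criterion}.

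A few bookkeeping details need correcting:
\begin{itemize}
\item The initial and closure blocks also read the generic cycle $C_2$.
\item The terminal block also reads $C_{d-4}$.
\item An odd interior $C_z$ sends three arcs to $\mathsf M_{z+1}$ and none to $\mathsf M_{z+2}$.
\end{itemize}
None of this affects your argument: for every even $d\ge6$ these cycles carry the fixed alternating interior orders, so the boundary blocks are still independent of the number of interior blocks, as your induction requires.
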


\begin{proof}
The exceptional even base $d=4$ is the deduplicated ten-entry word in Supplementary Section~S5. For $d=6$, Example~\ref{ex:d6-trace} gives the base block order: initial, terminal, two descent entries, and anchor closure. The endpoint identities follow from Lemma~\ref{lem:connector-identities}, the ranks cover $\ZZ_{18}$, and the source and connector inventories are complete.

Assume now that the block-chain and the two inventory partitions hold for an even value $d\ge6$. Corollary~\ref{cor:template-extension} describes the template at $d+2$: it adds the translated interior block $\mathsf M_{d-2}$ and two new descent transitions, moves the constant terminal template to the new boundary, and introduces no new endpoint identity type. The new six-entry block and the two singleton transitions account for exactly the eight additional arcs and connectors. All earlier initial and interior identities remain unchanged, and the translated terminal and closure identities are the same constant substitutions in the new terminal cut lists. Hence the block-chain invariant and both inventory partitions are preserved. This proves the claim for every even $d\ge6$ by induction.

For completeness, the generic six-step identity used in the inductive extension follows directly from Lemma~\ref{lem:connector-identities}: the six connectors in \eqref{eq:generic-rank-block} join, in order, predecessor pairs at $B_k$, successor pairs at $A_k$, predecessor pairs at $A_{k-1}$, successor pairs at $B_{k-1}$, predecessor pairs at $A_k$, and successor pairs at $A_{k+1}$. The induction maintains bijective source and connector inventories onto $\mathcal P_d$ and
\begin{equation}
 \mathcal E_d=\{a_i^s:0\le i\le d-2,\ s\in\{0,1\}\}\ \dot\cup\
 \bigl(\{b_i^s:0\le i\le d-2,\ s\in\{0,1\}\}\setminus\{b_0^1,b_1^0\}\bigr),
\label{eq:connector-set-compact}
\end{equation}
respectively. Lemma~\ref{lem:block-concatenation} yields the rank certificate, and Lemma~\ref{lem:rank-criterion} gives one cycle.
\end{proof}

\begin{Lemma}[Odd universal fine-incidence block]\label{lem:odd-fine-block}
Assume $d\ge5$ is odd and the cut orders of Lemma~\ref{lem:odd-band-reduction}. Then $\Gamma(\Phi)$ admits a rank certificate on all $4d-6$ diagonal arcs and is one cycle.
\end{Lemma}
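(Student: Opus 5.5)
The plan is to follow the even-parity proof of Lemma~\ref{lem:fine-block} closely, using the odd rows of Table~\ref{tab:rank-intervals}: initial, interior blocks $\mathsf M_k$ for even $4\le k\le d-3$, shifted seam, rise $R_z$ for $1\le z\le d-4$, terminal, and closure. I will verify the block-chain invariant of Definition~\ref{def:block-invariant} for each block, check that consecutive blocks chain, and show that sources and connectors partition $\mathcal P_d$ and $\mathcal E_d$. Then Lemma~\ref{lem:block-concatenation} gives a rank certificate on $\ZZ_{4d-6}$, and Lemma~\ref{lem:rank-criterion} shows that $\Gamma(\Phi)$ is one cycle.

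The rank intervals in the table are consecutive and cover $\ZZ_{4d-6}$. As a check of the counts:
\[
3+6\cdot\tfrac{d-5}{2}+3+(d-4)+3+4=4d-6 .
\]
Every endpoint identity comes from Lemma~\ref{lem:connector-identities}. I read predecessor and successor arcs from the cut orders of Lemma~\ref{lem:odd-band-reduction}: $C_0$ and $C_1$ from \eqref{eq:C0}--\eqref{eq:C1}, the alternating four-cut orders on cycles $2,\ldots,d-4$, and the three terminal orders \eqref{eq:odd-C1}--\eqref{eq:odd-C2}. The anchor connectors $a_0^0,a_0^1$ are handled by \eqref{eq:anchor-endpoint-0}--\eqref{eq:anchor-endpoint-1}, and $b_0^1,b_1^0$ are excluded.

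\textbf{Base case $d=5$.} Here the interior range is empty, so the word has $14$ entries: initial, seam, rise $R_1$, terminal, and closure. I will check it directly from the cut lists. On $C_0,C_1$ the arcs are $\Arc{0}{B_1^1}{A_1^0}$, $\Arc{0}{A_1^0}{B_0^0}$, $\Arc{0}{B_0^0}{B_1^1}$, and the four arcs of $C_1$. The terminal cycles have the three arcs of $C_2^{\rm odd}$ and two arcs each on $C_3^{\rm odd}$ and $C_4^{\rm odd}$. The shifted terminal switch $B_3$ lies on cycles $3$ and $4$ rather than $3$ and $2$, which changes which predecessor and successor arcs $b_3^0,b_3^1$ join. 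The seam block must absorb exactly this change.

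\textbf{Induction.} Passing from $d$ to $d+2$, Corollary~\ref{cor:template-extension} adds one translated block $\mathsf M_{d-1}$ and the rise indices $d-3,d-2$. The seam, terminal and closure blocks keep their symbolic form, with indices shifted by two. The generic six entries of $\mathsf M_k$ are verified exactly as in the even proof, since the orders \eqref{eq:Cinterior} hold on the relevant cycles. The one delicate point is that $\mathsf M_{d-3}$ at the top of the odd range touches $C_{d-3}$, which now has the odd terminal order \eqref{eq:odd-C1} rather than a four-cut order. I must confirm that the arcs $\mathsf M_{d-3}$ uses there, such as $\Arc{d-3}{A_{d-2}^0}{A_{d-3}^1}$, coincide with the arcs predicted by \eqref{eq:odd-C1}. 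The remaining arcs on that cycle go to the seam block.

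\textbf{Main obstacle.} The hardest step is the seam and rise interface. The rise transitions $R_z$ run upward through $z=1,\ldots,d-4$ using the single unused connector side on each generic cycle. In the odd case that side is on the opposite end from the even descent, because the terminal shift reverses the last pairing. I would first tabulate, for generic $z$, which one of $a_z^s,b_z^s$ is left over after the $\mathsf M_k$ blocks. I would then show that it joins $\Arc{z}{\cdot}{\cdot}$ to the rise source at $z+1$, and that the last rise target equals the first terminal source. Closure is then checked by hand. Its last connector, an anchor connector via \eqref{eq:anchor-endpoint-1}, must return to the rank-$0$ initial source $\operatorname{pred}(B_1^1)$.
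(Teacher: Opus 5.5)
Your plan follows the paper's proof: the fourteen-entry base word at $d=5$, the template step $d\to d+2$ from Corollary~\ref{cor:template-extension} (one new $\mathsf M_{d-1}$ and the rises $d-3,d-2$), and then Lemmas~\ref{lem:block-concatenation} and~\ref{lem:rank-criterion}. Your check that the top block $\mathsf M_{d-3}$ uses only $\Arc{d-3}{A_{d-3}^1}{B_{d-3}^0}$ and $\Arc{d-3}{A_{d-2}^0}{A_{d-3}^1}$ on the three-cut cycle \eqref{eq:odd-C1} is correct. However, the leftover arc $\Arc{d-3}{B_{d-3}^0}{A_{d-2}^0}$ is the first \emph{terminal} source, entered by the last rise. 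It is not a seam arc.

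The step that would actually fail is your closing identity. For $d\ge7$ the table puts $\mathsf M_4$ at ranks $3,\dots,8$, in the orientation of \eqref{eq:generic-rank-block}. Each arc carries exactly two connectors, so tracing backwards from $\Arc{3}{A_3^1}{B_4^1}$ forces the initial block $\StepPair{\Arc{0}{B_1^1}{A_1^0}}{b_1^1}$, $\StepPair{\Arc{1}{B_1^0}{A_2^0}}{a_2^0}$, $\StepPair{\Arc{2}{A_3^0}{A_2^1}}{a_3^1}$. Hence rank $0$ is $\operatorname{succ}(B_1^1)$, not $\operatorname{pred}(B_1^1)$. The connector closing the word is $a_1^0$ from $\Arc{1}{A_2^0}{A_1^1}$, not the anchor connector $a_0^1$.

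The anchor connectors sit inside the shifted seam: $\StepPair{\Arc{d-2}{A_{d-2}^1}{B_{d-2}^1}}{b_{d-2}^0}$, $\StepPair{\Arc{d-1}{B_0^1}{B_{d-2}^0}}{a_0^1}$, $\StepPair{\Arc{0}{B_0^0}{B_1^1}}{a_0^0}$. So $\operatorname{pred}(B_1^1)$ has rank $3d-10$, and $a_0^0$ feeds the first rise source $\Arc{1}{B_2^1}{B_1^0}$. Rise $R_z$ uses $b_{z+1}^1$ for odd $z$ and $b_{z+1}^0$ for even $z$. These are the same leftover $B$-sides as in the even descent, traversed upward, not sides at an opposite end.

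The terminal block is $\StepPair{\Arc{d-3}{B_{d-3}^0}{A_{d-2}^0}}{a_{d-2}^0}$, $\StepPair{\Arc{d-2}{B_{d-2}^1}{A_{d-2}^1}}{b_{d-2}^1}$, $\StepPair{\Arc{d-1}{B_{d-2}^0}{B_0^1}}{b_0^0}$. The closure is $\StepPair{\Arc{0}{A_1^0}{B_0^0}}{a_1^1}$, $\StepPair{\Arc{1}{A_1^1}{B_2^1}}{b_2^0}$, $\StepPair{\Arc{2}{A_2^1}{B_2^0}}{a_2^1}$, $\StepPair{\Arc{1}{A_2^0}{A_1^1}}{a_1^0}$. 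Read at $d=5$, the same symbolic word is the base case. With these placements your verification goes through; with the anchor connector at the end, the closure check does not.
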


\begin{proof}
For $d=5$, the interior range is empty. The fourteen-entry base certificate in Supplementary Section~S6 consists of the initial block, shifted seam, one rise entry, terminal block, and closure; the displayed endpoint identities give one closed rank cycle and use every source arc and connector exactly once.

Assume that the block-chain and both inventory partitions hold for an odd value $d\ge5$. By Corollary~\ref{cor:template-extension}, the template at $d+2$ adds the translated block $\mathsf M_{d-1}$ and two new rise transitions, and translates the same constant-size shifted seam and terminal templates to the new boundary. Lemma~\ref{lem:interior-exhaustive} shows that these are the only new transition types. The added six-entry block and the two singleton transitions account for the eight new arcs and connectors, while all earlier initial and interior identities remain unchanged. The boundary substitutions retain their symbolic forms after the index shift. Hence the block-chain invariant and both inventory partitions are preserved, proving the claim for every odd $d$ by induction.

The generic interior endpoint identity is the same as in the even proof. Each rise transition uses the complementary side of the relevant $B$ connector and enters the next source arc on the adjacent cut cycle; the final rise enters the shifted terminal cycle. The constant seam, terminal, and closure substitutions use only $C_0,C_1,C_2,C_{d-3},C_{d-2},C_{d-1}$ and the anchor identities. Their expanded forms are recorded in Supplementary Section~S6. Thus Lemma~\ref{lem:block-concatenation} gives the rank certificate, and Lemma~\ref{lem:rank-criterion} proves that $\Gamma(\Phi)$ is one cycle.
\end{proof}

\section{Universal Decomposition}

\begin{Theorem}[Complete six-regular non-axis local EJ decomposition]\label{thm:main}
Let $\alpha\in\ZZ[\rho]\setminus\{0\}$ be non-axis in the sense of Definition~\ref{def:axis}, and suppose that $\EJ_\alpha$ is six-regular simple. Equivalently, after the normalization of Lemma~\ref{lem:orbit-normalization}, choose $a+b\rho$ with $0<a\le b$ and $(a,b)\ne(1,1)$. Write $a=du$, $b=dv$ with $\gcd(u,v)=1$. Then $\EJ_\alpha$ decomposes into three pairwise edge-disjoint Hamiltonian cycles for every admitted $d\ge1$. More precisely:
\begin{enumerate}[label=\roman*),leftmargin=*]
 \item If $d=1$, the three natural unit-direction factors are Hamiltonian cycles.
 \item If $d=2$ or $d=3$, use the exceptional formulas of Theorems~\ref{thm:d2} and \ref{thm:d3}.
 \item If $d\ge4$, construct $H_1$ by \eqref{eq:H-skeleton}; construct $H_2$ from the parity-appropriate reduced seeds \eqref{eq:V-even} or \eqref{eq:V-odd}, anchor lock, and the universal phases \eqref{eq:universal-phase}; and let $H_3$ be the unused complement.
\end{enumerate}
The conclusion has no parity or reduced-ratio restriction inside this exact domain. The fixed branches $d=2$ and $d=3$ require no even/odd case split; the parity-dependent terminal geometry arises only in the variable branch $d\ge4$.
\end{Theorem}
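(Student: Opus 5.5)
The proof is an assembly of results already established, organized by the value of $d$ after normalization. First I will reduce to a normalized representative. By Lemma~\ref{lem:orbit-normalization}, $\alpha$ is carried by one of the twelve dihedral maps to some $A+B\rho$ with $0\le A\le B$, and the induced graph isomorphism preserves Hamiltonian decompositions. The non-axis hypothesis forces $A>0$ via \eqref{eq:axis-characterization}. Proposition~\ref{prop:nonaxis-degenerate} then shows that six-regular simplicity is equivalent to $(A,B)\ne(1,1)$. This establishes the stated equivalence of hypotheses and lets me assume $0<a\le b$, $(a,b)\ne(1,1)$, with $d=\gcd(a,b)$ and coprime $0<u\le v$. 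Since the maps are unimodular, $d$ does not depend on the chosen representative.

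Next I will dispatch the fixed branches. For $d=1$, Corollary~\ref{cor:d1} gives the three natural direction factors as edge-disjoint Hamiltonian cycles covering all $3N$ edges. For $d=2$ and $d=3$, Theorems~\ref{thm:d2} and \ref{thm:d3} apply directly, since their hypotheses are only that $u,v$ be coprime with $0<u\le v$. Note that $d\ge2$ automatically gives $(a,b)\ne(1,1)$. This settles items i) and ii), with no parity split.

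For $d\ge4$ I will verify the admissibility conditions (A1)--(A4) of Definition~\ref{def:admissible} and then invoke Theorem~\ref{thm:admissible}.
\begin{itemize}
\item (A1): the universal seeds keep $s_{B_0}=(-1,-1)$ and $s_{B_1}=(1,0)$, so anchor lock holds by construction, and Lemma~\ref{lem:anchor} applies.
\item (A2) and (A3): these follow from Lemma~\ref{lem:first-two-valid}, which rests on Lemma~\ref{lem:certified-orders} (distinct cuts in either parity) and Lemma~\ref{lem:universal-bands} (phase values $\lambda_i\in\{1,2\}$ for every reduced ratio, including $Q=3$).
\item (A4): connectedness of $\Gamma(\Phi)$ is the content of Lemma~\ref{lem:fine-block} for even $d$ and Lemma~\ref{lem:odd-fine-block} for odd $d$. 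Each produces a rank certificate, and Lemma~\ref{lem:rank-criterion} turns it into a single cycle through all $4d-6$ arcs, which is in particular connected.
\end{itemize}
I must also check that the cut-order hypotheses of those two lemmas are exactly what Lemma~\ref{lem:certified-orders} delivers, namely the orders in Lemma~\ref{lem:band-reduction} or Lemma~\ref{lem:odd-band-reduction}. Theorem~\ref{thm:admissible} then yields three pairwise edge-disjoint Hamiltonian cycles.

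The main obstacle is making sure no boundary case escapes the chain of lemmas. Three points need care:
\begin{itemize}
\item The smallest cases $d=4$ (the deduplicated word) and $d=5$ (empty interior range, shifted terminal seed) must be covered by the base cases of the inductions in Lemmas~\ref{lem:fine-block} and \ref{lem:odd-fine-block}.
\item The symmetric ratio $u=v=1$ must be handled for all $d\ge4$; Remark~\ref{rem:symmetric} covers this.
\item Corollary~\ref{cor:template-extension} must hold for both parities starting from the correct base value ($d=6$ for even $d$ after the separate $d=4$ word, and $d=5$ for odd $d$).
\end{itemize}
The final remark, that there is no parity or ratio restriction, then follows because every branch above is proved for all coprime $0<u\le v$, and the parity split occurs only inside item iii).
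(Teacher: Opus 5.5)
Your proposal is correct and follows essentially the same route as the paper. You normalize via Lemma~\ref{lem:orbit-normalization} and Proposition~\ref{prop:nonaxis-degenerate}, dispatch $d=1,2,3$ by Corollary~\ref{cor:d1} and Theorems~\ref{thm:d2} and \ref{thm:d3}, and for $d\ge4$ feed Lemmas~\ref{lem:first-two-valid}, \ref{lem:certified-orders}, \ref{lem:fine-block} and \ref{lem:odd-fine-block} into Theorem~\ref{thm:admissible}; your explicit check of (A1)--(A4) is only a more granular packaging of the same citations.
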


\begin{proof}
Lemma~\ref{lem:orbit-normalization} and Proposition~\ref{prop:nonaxis-degenerate} transfer the problem to an admitted normalized representative and identify the exact six-regular simple domain; the inverse orbit isomorphism transfers the resulting decomposition back to $\EJ_\alpha$. Part i) is Corollary~\ref{cor:d1}, and part ii) is Theorems~\ref{thm:d2} and \ref{thm:d3}. For part iii), Lemma~\ref{lem:first-two-valid} proves that $H_1$ and $H_2$ are disjoint Hamiltonian cycles, and 6-regularity makes their complement $H_3$ a spanning 2-factor. If $d$ is even, Lemmas~\ref{lem:certified-orders} and \ref{lem:fine-block} show that its fine incidence graph is one cycle. If $d$ is odd, Lemmas~\ref{lem:certified-orders} and \ref{lem:odd-fine-block} give the same conclusion with the shifted terminal seam. The admissible-lift theorem, Theorem~\ref{thm:admissible}, therefore makes $H_3$ Hamiltonian in either parity. Corollary~\ref{cor:no-ratio-cases} makes this construction valid for every coprime normalized reduced ratio, with no additional arithmetic cases.
\end{proof}

\begin{Corollary}[Branchwise minimum local-switch count]\label{cor:optimal}
In the ordered local-switch model, the minimum total costs for $d=1$, $d=2$, $d=3$, and $d\ge4$ are, respectively,
\begin{equation}
 0,\qquad 4,\qquad 4,\qquad 2(d-1).
\label{eq:minimum-costs}
\end{equation}
For $d=3$ and $d\ge4$, each of the first two factors separately attains the lower bound $d-1$; for $d=2$, the optimal distribution is $1+3$ up to factor ordering.
\end{Corollary}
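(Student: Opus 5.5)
The corollary follows by pairing, in each branch, a matching upper bound from the constructions already proved with a lower bound derived from Lemma~\ref{lem:lowerbound} (or, for $d=2$, the obstruction results). I will treat the branches $d=1$, $d=3$ or $d\ge4$, and $d=2$ separately.

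\emph{Branch $d=1$.} Corollary~\ref{cor:d1} shows that the three natural direction factors are already Hamiltonian cycles, so zero switches suffice. Since costs are nonnegative, the minimum is $0$.

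\emph{Branches $d=3$ and $d\ge4$.} Every ordered local-switch decomposition starts from the natural $H$ and $V$ factors. By Lemma~\ref{lem:direction-orders} each of these has exactly $d$ components. Only switches applied to a given factor copy change that copy, and Lemma~\ref{lem:switch-commutation} lets us regard them as a separate sequence on that factor. Lemma~\ref{lem:lowerbound} therefore forces at least $d-1$ switches on each of the first two factors, giving total cost at least $2(d-1)$. For $d=3$ this bound is $4$. It is attained by Theorem~\ref{thm:d3}, which uses two switches in each factor. For $d\ge4$, Theorem~\ref{thm:main}~iii) uses the skeleton \eqref{eq:H-skeleton}, which has $d-1$ switches. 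It also uses the $H_2$ seed set \eqref{eq:V-even} or \eqref{eq:V-odd}, which likewise has $d-1$ entries: the anchor seed $(-1,-1)$ together with $d-2$ further seeds. The total is $2(d-1)$, and each factor separately attains $d-1$.

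\emph{Branch $d=2$.} Lemma~\ref{lem:lowerbound} only gives a lower bound of $1+1=2$, so more is needed. Theorem~\ref{thm:d2-optimal} supplies it: cost two and cost three are both impossible. The construction of Theorem~\ref{thm:d2} uses one anchor switch on $H$ and three switches on $V$, so cost four is achieved with distribution $1+3$.

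The remaining task, and the main obstacle, is to show that every cost-four solution for $d=2$ has distribution $1+3$ up to ordering. Each factor needs at least one switch, so the only other candidate is $2+2$. Lemma~\ref{lem:d2-two-switch-obstruction} states that exactly two valid switches applied to a two-component natural factor never yield a Hamiltonian cycle. That excludes $2+2$, and $1+3$ with its mirror $3+1$ are the only splits left. Interchange of the first two factors is allowed by Definition~\ref{def:local-cost}, so the distribution is $1+3$ up to factor ordering.

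One point needs checking: the obstruction lemma must apply to whichever natural factor receives two switches, including the $V$ factor under $V\leftarrow D$ switches. Its proof depends only on cut arcs and predecessor/successor pairing on two components, so it applies equally to the $H$ and $V$ factors.
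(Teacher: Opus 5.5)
Your proof is correct and follows essentially the paper's route: $d=1$ is trivial; for $d=3$ and $d\ge4$ you use Lemma~\ref{lem:lowerbound} on each factor plus the explicit skeletons; for $d=2$ you cite Theorem~\ref{thm:d2-optimal} and the construction of Theorem~\ref{thm:d2}. Your extra step, ruling out the $2+2$ split with Lemma~\ref{lem:d2-two-switch-obstruction}, is a small but worthwhile addition, because the paper's proof supports the ``$1+3$ up to factor ordering'' clause only by exhibiting a construction with that distribution.
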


\begin{proof}
The $d=1$ branch needs no switches. Theorem~\ref{thm:d2-optimal} proves the value four for $d=2$, and Theorem~\ref{thm:d3} proves the value four for $d=3$. For $d\ge4$, both natural factors begin with $d$ components, so Lemma~\ref{lem:lowerbound} gives the total lower bound $2(d-1)$, attained by the two large-$d$ skeletons.
\end{proof}

\begin{Corollary}[End-to-end description size and complexity]\label{cor:complexity}
The complete decomposition is specified by the minimum totals $0$, $4$, $4$, and $2(d-1)$ local switches in the branches $d=1$, $d=2$, $d=3$, and $d\ge4$, respectively, with $H_3$ defined as the unused complement. After constant-size orbit normalization and gcd preprocessing, the compact certificate consists of $O(d)$ seed records and is generated using $O(d)$ word-RAM operations and $O(d)$ machine words. Its bit storage is $O(d\log N)$. Explicitly listing all three cycles takes $O(N)$ word-RAM operations and $O(N)$ output records.
\end{Corollary}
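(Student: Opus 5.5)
The complexity claim will follow by accounting for each stage of Algorithm~\ref{alg:certified} separately: normalization, certificate generation, and expansion.

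\textbf{Normalization.} By Remark~\ref{rem:det-normalization}, orbit normalization evaluates the twelve pairs $R^j(a,b)$ and $R^jC(a,b)$ of \eqref{eq:orbit-matrices}. This costs $O(1)$ word operations, followed by the two exact tests $A=0$ and $(A,B)=(1,1)$. I treat $\gcd(A,B)$ as the stated preprocessing; Euclid needs $O(\log N)$ steps, which is dominated once $N$ is read. I would then compute $d$, $u$, $v$, $Q$ and $r$ with a constant number of multiplications and divisions on $O(\log N)$-bit words.

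\textbf{Certificate size.} The branch totals $0,4,4,2(d-1)$ come directly from Corollary~\ref{cor:d1}, Theorems~\ref{thm:d2} and~\ref{thm:d3}, and the skeleton counts \eqref{eq:H-skeleton}, \eqref{eq:V-even}, \eqref{eq:V-odd}; minimality is Corollary~\ref{cor:optimal}. Each seed record is a switch type together with a lattice representative.

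For $d\ge4$, the records are written in closed form. The $A_i$ seeds are $(d-1,d-i)$. The $B_i$ seeds are given by \eqref{eq:universal-seed}, where $\theta_i\in\{1,2\}$ depends only on parity and $\varepsilon_i$ is given by \eqref{eq:epsilon}. Each record therefore costs $O(1)$ operations, and none requires $c$ (Corollary~\ref{cor:no-ratio-cases}).

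For $d=2$, the lattice representatives $(2u,2u)$, $(r-1,r-1)$ and $(1,0)$ in Theorem~\ref{thm:d2} also avoid computing $c$. For $d=3$, the seeds are constants. The total is $O(d)$ operations and $O(d)$ machine words. Every coordinate can be reduced into a fundamental domain of size $N$, so each record takes $O(\log N)$ bits, giving $O(d\log N)$ bits overall.

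\textbf{Expansion.} To list the cycles, I would first enumerate the $N$ vertices by a bijective indexing of $\ZZ[\rho]/(\alpha)$. One option is the rectangle with $d$ rows of $r$ steps along $H$, which Lemma~\ref{lem:direction-orders} guarantees. Another is the fine bijection \eqref{eq:fine-bijection}, with adjacency computed by $O(1)$ modular arithmetic.

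Next, I would mark the $O(d)$ removed and inserted edges in a hash table or direct-address array of size $O(N)$. The three factor edge sets are then produced as follows: $H_1$ is the $H$-edges minus the removed ones plus the inserted ones; $H_2$ is formed the same way from the $V$-edges; $H_3$ is the complement, obtained by scanning all $3N$ edges once and keeping those not assigned to $H_1$ or $H_2$.

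Each factor is a 2-regular spanning edge set of size $N$, and it is Hamiltonian by Theorem~\ref{thm:main}. Turning each into a cyclic vertex order is a walk that spends $O(1)$ per vertex using the adjacency lists. The total is $O(N)$ time and $O(N)$ output records, which is optimal because the output itself has $3N$ edges.

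\textbf{Main obstacle.} The delicate point is computing $O(1)$ canonical residue representatives, namely a canonical index for an element $x+y\tau$ modulo $(\alpha)$, so that edge marking and lookup are constant-time. I would handle this with the Hermite normal form of the lattice basis $d(u+v,v)$, $d(-v,u)$: it has diagonal entries $d$ and $r$, which gives an index in $\ZZ_d\times\ZZ_r$ computable with $O(1)$ divisions. This is equivalent to the $(z,t)$ coordinates, but the Hermite-form version avoids computing $c$. Computing $c$ instead would require an extended gcd, which still costs only $O(\log N)\subseteq O(N)$.
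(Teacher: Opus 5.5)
Your proposal is correct and follows the paper's own accounting: branch counts from Corollary~\ref{cor:d1}, Theorems~\ref{thm:d2}--\ref{thm:d3} and the two large-$d$ skeletons, $O(1)$ arithmetic per seed on $O(\log N)$-bit coordinates, and the $3N$-edge output bound, while also supplying the canonical-residue expansion details that the paper leaves implicit. One small correction is that the off-diagonal entry of your Hermite normal form needs Bézout coefficients for $(u,v)$, so it does not actually avoid an extended gcd compared with computing $c$, but either is a one-time $O(\log N)$ cost absorbed by the $O(N)$ bound.
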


\begin{proof}
The counts follow from Corollary~\ref{cor:d1}, Theorems~\ref{thm:d2}--\ref{thm:d3}, and the two $d-1$ large-$d$ skeletons. After normalization and gcd preprocessing, each seed requires a constant number of arithmetic operations on $O(\log N)$-bit coordinates. Thus $O(d)$ seed records require $O(d)$ word-RAM operations and $O(d\log N)$ bits. Any explicit listing contains $3N$ undirected edges, so $O(N)$ word-RAM operations and output records are both sufficient and output-optimal up to constants.
\end{proof}

\section{Algorithm, Complexity, and Verification}

\begin{algorithm}[H]
\caption{Six-regular non-axis EJ EDHC construction}\label{alg:certified}
\begin{algorithmic}[1]
\REQUIRE $\alpha=a_0+b_0\rho\ne0$ with $a_0,b_0\in\ZZ$
\STATE form $\{R^j(a_0,b_0),R^jC(a_0,b_0):0\le j<6\}$ using \eqref{eq:orbit-matrices}
\STATE select a pair $(a,b)$ with $0\le a\le b$ (use fixed tie-breaking)
\IF{$a=0$}
 \STATE report ``axis-associated generator: outside the present theorem'' and stop
\ENDIF
\STATE set $N=a^2+ab+b^2$
\IF{$N=3$}
 \STATE report ``non-axis norm-three quotient: not six-regular simple'' and stop
\ENDIF
\STATE set $d=\gcd(a,b)$ and $(u,v)=(a/d,b/d)$
\IF{$d=1$}
 \STATE return the three natural unit-direction Hamiltonian cycles
\ELSIF{$d=2$}
 \STATE build $H_1$ with the anchor and $H_2$ from \eqref{eq:d2-seeds}
\ELSIF{$d=3$}
 \STATE build $H_1$ from \eqref{eq:H-skeleton} and $H_2$ from \eqref{eq:d3-seeds}
\ELSE
 \STATE build $H_1$ from \eqref{eq:H-skeleton}
 \STATE for $2\le i\le d-2$, set $P_i=K_i=1$ for even $i$ and $P_i=K_i=2$ for odd $i$
 \STATE build $H_2$ from $s_{B_0},s_{B_1}$ and \eqref{eq:universal-seed}
\ENDIF
\STATE set $H_3=E(\EJ_{a+b\rho})\setminus(E(H_1)\cup E(H_2))$
\STATE transport $H_1,H_2,H_3$ through the inverse orbit isomorphism to $\EJ_\alpha$
\STATE return $H_1,H_2,H_3$
\end{algorithmic}
\end{algorithm}

Lemma~\ref{lem:orbit-normalization} proves that Lines 1--3 inspect a constant-size orbit and always find a normalized representative. Together with Proposition~\ref{prop:nonaxis-degenerate}, the two explicit tests in Lines 3--9 reject exactly the axis-associated class and the unique normalized non-axis norm-three degeneration. Corollary~\ref{cor:complexity} gives the complete headline bound: the full decomposition uses $O(d)$ seed records---specifically $0$, $4$, $4$, or $2(d-1)$ switches in the four branches---and all $3N$ cycle edges are listed in $O(N)$ time. No seed search, reduced-ratio family classification, breadth-first search, or finite verification is required. An independent $O(N)$ verifier can still check degrees, connectedness, disjointness, cut coordinates, and fine incidence for reproducibility.

The construction and proofs are symbolic. Independent audits reconstruct the quotient graph, verify the three 2-regular connected factors, check zero pairwise overlap and union size $3N$, and validate cut orders and rank certificates. Table~\ref{tab:audit} gives only aggregate results. Supplementary File S1 states the exact deterministic protocol, finite audit domains, proof-critical successor dictionaries, compact coverage summaries, and representative full-graph records. Supplementary File S2 contains the final Python verification programs, deterministic case lists, machine-readable outputs, and exact commands needed to regenerate and cross-check the reported totals. In particular, a dictionary-free audit derives the fine-incidence multigraph directly from the cut orders and local connector endpoint rules, without using the manuscript's rank word, and verifies one $(4d-6)$-cycle for every $4\le d\le201$.

\begin{table}[H]
\centering
\caption{Aggregate deterministic verification. These computations are not used in Theorem~\ref{thm:main}.}
\label{tab:audit}
\small
\begin{adjustbox}{max width=\textwidth}
\begin{tabularx}{1.2\textwidth}{@{}X X r@{}}
\toprule
Audit family & Coverage & Result\\
\midrule
Small-branch arithmetic and reconstruction & 12,232 reduced ratios; 200 full $d=3$ quotients; six lower-budget $d=2$ cases & all pass\\
Universal rank, dictionary-free incidence, and odd endpoints & every $4\le d\le201$; 79,992 derived incidence edges; 10,094 expanded odd arrows & all pass\\
Large-branch full-quotient corpus & 980 cases, 200 reduced ratios, $N\le105{,}525$ & all pass\\
Axis-boundary classification & every $1\le n\le100$ & all pass\\
\bottomrule
\end{tabularx}
\end{adjustbox}
\end{table}

The supplementary PDF contains the detailed exceptional certificates, the complete even and odd successor dictionaries, the deterministic verification protocol, compact coverage summaries, and the relocated illustrative figures. Supplementary File S2 supplies the final standalone verifier, the dictionary-free fine-incidence graph audit, the exact deterministic case lists, all reported machine-readable outputs, and a one-command runner that also checks the totals and representative records printed in the paper. These materials are independent checks and are not required by any proof.

\section{Scope and Discussion}

Theorem~\ref{thm:main} is complete on the six-regular simple non-axis domain. Proposition~\ref{prop:axis-boundary} classifies the axis-associated boundary: $n=1,2$ are not six-regular simple, while $n\ge3$ is a distinct triangular-torus family with $Q=1$, so the alternating two-residue phase mechanism cannot apply there. Proposition~\ref{prop:nonaxis-degenerate} identifies the only non-axis simple-degree exception, $\alpha\sim1+\rho$, whose underlying simple graph is $C_3$. These are exact domain boundaries, not unresolved reduced-ratio cases inside the theorem.

The result concerns undirected simple six-regular EJ quotients with no failed vertices or links. Its minimum statements are with respect to the ordered local-switch model of Definition~\ref{def:local-cost}: $d=2$ has minimum total cost four, $d=3$ has minimum total cost four, and $d\ge4$ has minimum total cost $2(d-1)$. By Lemma~\ref{lem:switch-commutation}, chronological interleaving of the permitted switches does not enlarge this model. The paper does not claim optimality for broader operations that directly modify all three factors or exchange edges across factor copies; it also does not claim uniqueness, minimize geometric edge length under a drawing convention, or address fault-tolerant replacement after failures.

The proof separates component splicing, overlap control, and complement connectivity. The component trees force the switch lower bounds; anchor lock resolves the only forced non-diagonal overlap; and the rank certificate works on the actual diagonal path arcs. The equal-coordinate choice $P_i=K_i$ cancels the reduced ratio and modular inverse in the fine coordinate, leaving only the parity-dependent seam handled by the two rank schedules.

The compact certificate may reduce controller storage and simplify local verification: it is generated without search, stored as $O(d)$ seed records, and expanded only when the $O(N)$ cycle-edge output is required. This is a certificate-size statement, not a measured runtime advantage over the rectangular construction. The deterministic computational checks are independent stress tests; the theorem rests on the normalization, cancellation, cut-order, exhaustiveness, template-extension, and rank-bijection proofs. The dictionary-free incidence audit provides a separate combinatorial reconstruction of the fine-incidence cycle without reading the prescribed rank word.

\section{Conclusion}

Every six-regular simple non-axis EJ network admits the stated ratio-independent decomposition into three edge-disjoint Hamiltonian cycles. Within the canonical local-switch model of Definition~\ref{def:local-cost}, chronological interleaving is immaterial and the minimum total switch costs are $0$, $4$, $4$, and $2(d-1)$ for the admitted $d=1$, $d=2$, $d=3$, and $d\ge4$ branches, respectively. The equal-coordinate alternating lift removes all reduced-ratio family analysis, while parity-complete rank certificates prove that the unused complement is Hamiltonian. The construction uses $O(d)$ compact seed records and output-optimal $O(N)$ edge-listing time.

\section*{Supplementary Materials}
The following supporting information accompanies the submission: Supplementary File S1, \emph{Supplementary Proof Certificates and Verification Summary}, containing the expanded exceptional certificates, complete parity-specific successor dictionaries, deterministic verification protocol, coverage summaries, and representative full-graph records; and Supplementary File S2, \emph{Deterministic Reproducibility Package}, containing the final Python verifier, the dictionary-free fine-incidence graph audit, case lists, machine-readable outputs, checksums, and execution instructions.

\section*{Funding}
This research received no external funding.

\section*{Data Availability Statement}
This is a theoretical study. All proof-critical definitions, finite successor dictionaries, acceptance criteria, and aggregate audit results are included in the article and Supplementary File S1. The final deterministic Python verification programs, including the dictionary-free fine-incidence graph audit, exact case lists, and machine-readable audit outputs are provided in Supplementary File S2. These computational materials are independent checks and are not premises of the proofs.

\section*{Acknowledgments}
The author gratefully acknowledges Kuwait University and the Department of Computer Science, College of Science, Kuwait University, for providing the academic environment and institutional support that made this research possible.

\section*{Conflicts of Interest}
The author declares no conflict of interest.

\section*{Abbreviations}
The following abbreviations are used in this manuscript:
\begin{center}
\begin{tabular}{@{}ll@{}}
EJ & Eisenstein--Jacobi\\
EDHC & edge-disjoint Hamiltonian cycle\\
\end{tabular}
\end{center}


\begin{thebibliography}{999}

\bibitem{flahive2010}
Flahive, M.; Bose, B. The topology of Gaussian and Eisenstein--Jacobi interconnection networks. \emph{IEEE Trans. Parallel Distrib. Syst.} \textbf{2010}, \emph{21}, 1132--1142.

\bibitem{martinez2008ej}
Martinez, C.; Stafford, E.; Beivide, R.; Gabidulin, E.M. Modeling hexagonal constellations with Eisenstein--Jacobi graphs. \emph{Probl. Inf. Transm.} \textbf{2008}, \emph{44}, 1--11.

\bibitem{hussain2015}
Hussain, Z.A.; Bose, B.; Al-Dhelaan, A. Edge-disjoint Hamiltonian cycles in Eisenstein--Jacobi networks. \emph{J. Parallel Distrib. Comput.} \textbf{2015}, \emph{86}, 62--70.

\bibitem{hussain2022ist}
Hussain, Z.; AboElFotoh, H.; AlBdaiwi, B. Independent spanning trees in Eisenstein--Jacobi networks. \emph{J. Supercomput.} \textbf{2022}, \emph{78}, 12114--12135. https://doi.org/10.1007/s11227-022-04351-4.

\bibitem{awadh2023pan}
Awadh, M.; Hussain, Z.; Almansouri, H. Panconnectivity algorithm for Eisenstein--Jacobi networks. \emph{Kuwait J. Sci.} \textbf{2023}, \emph{50}, 485--491. https://doi.org/10.1016/j.kjs.2023.03.008.

\bibitem{hussain2024cist}
Hussain, Z.; AlAzemi, F.; AlBdaiwi, B. Completely independent spanning trees in Eisenstein--Jacobi networks. \emph{J. Supercomput.} \textbf{2024}, \emph{80}, 15105--15121. https://doi.org/10.1007/s11227-024-06042-8.

\bibitem{yang2023cayley}
Yang, D.-W.; Chen, X.-B. Symmetric property and edge-disjoint Hamiltonian cycles in Cayley graphs. \emph{Appl. Math. Comput.} \textbf{2023}, \emph{455}, 128113.

\bibitem{cheng2024balanced}
Cheng, B.-L.; Tan, J.M.; Pai, K.-J. Edge-disjoint Hamiltonian cycles in balanced hypercubes with applications to fault-tolerant data broadcasting. \emph{Appl. Sci.} \textbf{2024}, \emph{14}, 3232.

\bibitem{pai2026bcube}
Pai, K.-J.; Tan, J.M.; Hsu, L.-H. Constructing two edge-disjoint Hamiltonian cycles and two equal node-disjoint cycles in BCube data center networks for all-to-all broadcasting. \emph{Mathematics} \textbf{2026}, \emph{14}, 232.

\bibitem{albader2026gaussian}
Albader, B. A unified constant-time switch rule for constructing edge-disjoint Hamiltonian cycles in Gaussian networks. \emph{Mathematics} \textbf{2026}, \emph{14}, 2211.

\bibitem{albader2016}
Albader, B.; Bose, B. Edge-disjoint Hamiltonian cycles in Gaussian networks. \emph{IEEE Trans. Comput.} \textbf{2016}, \emph{65}, 315--321.

\bibitem{martinez2008gaussian}
Martinez, C.; Beivide, R.; Stafford, E.; Moreto, M.; Gabidulin, E.M. Modeling toroidal networks with the Gaussian integers. \emph{IEEE Trans. Comput.} \textbf{2008}, \emph{57}, 1046--1056.

\bibitem{bae2003}
Bae, M.M.; Bose, B. Edge-disjoint Hamiltonian cycles in $k$-ary $n$-cubes and hypercubes. \emph{IEEE Trans. Comput.} \textbf{2003}, \emph{52}, 1271--1284.

\bibitem{bae2000}
Bae, M.; Bose, B. Gray codes for torus and edge-disjoint Hamiltonian cycles. In \emph{Proceedings of the 14th International Parallel and Distributed Processing Symposium}, Cancun, Mexico, 1--5 May 2000; pp. 365--370.

\bibitem{bae2004}
Bae, M.; Bose, B.; AlBdaiwi, B.F. Edge-disjoint Hamiltonian cycles in two-dimensional torus. \emph{Int. J. Math. Math. Sci.} \textbf{2004}, \emph{2004}, 1299--1308.

\bibitem{latifi1993}
Latifi, S.; Zheng, S.-Q. On link-disjoint Hamiltonian cycles of torus networks. In \emph{Proceedings of IEEE Southeastcon}, Charlotte, NC, USA, 4--7 April 1993.

\bibitem{jha2012}
Jha, P.; Prasad, R. Hamiltonian decomposition of the rectangular twisted torus. \emph{IEEE Trans. Parallel Distrib. Syst.} \textbf{2012}, \emph{23}, 1504--1507.

\bibitem{rowley1991}
Rowley, R.; Bose, B. Edge-disjoint Hamiltonian cycles in De Bruijn networks. In \emph{Proceedings of the Sixth Distributed Memory Computing Conference}, Portland, OR, USA, 28 April--1 May 1991; pp. 707--709.

\bibitem{rowley1993}
Rowley, R.; Bose, B. On the number of arc-disjoint Hamiltonian circuits in the De Bruijn graphs. \emph{Parallel Process. Lett.} \textbf{1993}, \emph{3}, 375--382.

\bibitem{anantha2007}
Anantha, M.; Prasad, R.; AlBdaiwi, B.F. Mixed-radix Gray codes in Lee metric. \emph{IEEE Trans. Comput.} \textbf{2007}, \emph{56}, 1297--1307.

\bibitem{chen1981}
Chen, C.C.; Quimpo, N.F. On strongly Hamiltonian abelian group graphs. In \emph{Combinatorial Mathematics VIII}; Lecture Notes in Mathematics; Springer: Berlin/Heidelberg, Germany, 1981; Volume 884, pp. 23--34.

\bibitem{witte1984}
Witte, D.; Gallian, J.A. A survey: Hamiltonian cycles in Cayley graphs. \emph{Discrete Math.} \textbf{1984}, \emph{51}, 293--304.

\bibitem{bermond1989}
Bermond, J.-C.; Favaron, O.; Maheo, M. Hamiltonian decomposition of Cayley graphs of degree 4. \emph{J. Combin. Theory Ser. B} \textbf{1989}, \emph{46}, 142--153.

\bibitem{alspach2008}
Alspach, B. The wonderful Walecki construction. \emph{Bull. Inst. Combin. Appl.} \textbf{2008}, \emph{52}, 7--20.

\bibitem{duato2003}
Duato, J.; Yalamanchili, S.; Ni, L. \emph{Interconnection Networks: An Engineering Approach}; Morgan Kaufmann: San Francisco, CA, USA, 2003.

\bibitem{grama2003}
Grama, A.; Gupta, A.; Karypis, G.; Kumar, V. \emph{Introduction to Parallel Computing}, 2nd ed.; Addison-Wesley: Boston, MA, USA, 2003.

\bibitem{hardy1980}
Hardy, G.H.; Wright, E.M. \emph{An Introduction to the Theory of Numbers}, 5th ed.; Oxford University Press: Oxford, UK, 1980.

\bibitem{biggs1993}
Biggs, N. \emph{Algebraic Graph Theory}, 2nd ed.; Cambridge University Press: Cambridge, UK, 1993.

\bibitem{godsil2001}
Godsil, C.; Royle, G. \emph{Algebraic Graph Theory}; Springer: New York, NY, USA, 2001.

\bibitem{bondy2008}
Bondy, J.A.; Murty, U.S.R. \emph{Graph Theory}; Springer: London, UK, 2008.

\end{thebibliography}
\end{document}